\documentclass[11pt]{article}

\usepackage[T1]{fontenc}
\usepackage{amsmath,amssymb,amsthm}
\usepackage[margin=1.1in]{geometry}
\usepackage{booktabs}
\usepackage{enumitem}
\usepackage{tikz}

\newcounter{algorithmctr}
\newcounter{algoline}
\newcommand{\aline}[2][0]{\stepcounter{algoline}%
  \noindent\makebox[2.2em][r]{{\footnotesize\thealgoline:}}\hspace{0.8em}%
  \hspace*{#1\parindent}#2\par}
\newcommand{\alinecont}[2][0]{%
  \noindent\makebox[2.2em][r]{}\hspace{0.8em}%
  \hspace*{#1\parindent}\hspace{1em}#2\par}
\newcommand{\acomment}[1]{\hfill $\triangleright$ {\small #1}}
\newenvironment{algo}[2]{%
  \begin{figure}[tp]%
  \refstepcounter{algorithmctr}\label{#2}%
  \setcounter{algoline}{0}%
  \noindent\textbf{Algorithm \thealgorithmctr}\,: #1\par\nobreak\vspace{3pt}%
  \hrule height 0.8pt\vspace{5pt}%
  \setlength{\parindent}{1.4em}}%
  {\vspace{4pt}\hrule height 0.8pt\end{figure}}

\usepackage[colorlinks=true,linkcolor=blue,citecolor=blue]{hyperref}

\usepackage{xcolor}

\newtheorem{lemma}{Lemma}
\newtheorem{theorem}{Theorem}
\newtheorem{corollary}{Corollary}

\newcommand*{\draft}{}

\ifdefined\draft
    \newcommand{\patrick}[1]{\textcolor{blue}{[Patrick: #1]}}
    \newcommand{\andy}[1]{\textcolor{red}{[Andy: #1]}}
    \newcommand{\brendan}[1]{\textcolor{orange}{[Brendan: #1]}}
\else
    \newcommand{\patrick}[1]{}
    \newcommand{\andy}[1]{}
    \newcommand{\brendan}[1]{}
\fi

\newcommand{\genesis}[1]{b_{\mathrm{gen}}^{#1}}
\newcommand{\lgenesis}{\ell_{\mathrm{gen}}}
\newcommand{\hd}{\mathsf{hd}}
\newcommand{\Tr}{\mathrm{Tr}}
\newcommand{\lead}{\mathsf{lead}}
\newcommand{\Tips}{\mathsf{Tips}}
\newcommand{\TipsF}{\mathsf{Tips}^{*}}
\newcommand{\Prop}{\mathsf{Prop}}
\newcommand{\Horiz}{\mathsf{Horiz}}
\newcommand{\Ord}{\mathsf{Ord}}
\newcommand{\Emit}{\mathsf{Emit}}
\newcommand{\SelectAnchor}{\mathsf{SelectAnchor}}
\newcommand{\ProposeChains}{\mathsf{ProposeChains}}
\newcommand{\ProduceNext}{\mathsf{ProduceNext}}
\newcommand{\Positions}{\mathsf{Positions}}
\newcommand{\Extensions}{\mathsf{Extensions}}
\newcommand{\da}{\mathrm{da}}
\newcommand{\vote}{\mathrm{vote}}
\newcommand{\novote}{\mathrm{novote}}
\newcommand{\nullify}{\mathrm{nullify}}
\newcommand{\nullification}{\mathrm{nullification}}
\newcommand{\State}{\mathcal{S}}

\title{Multimmit: Extending Blocks for Faster Finality\\ \large (draft)}
\author{\textsc{Andrew Lewis-Pye}$^{1,2}$ \ and \ \textsc{Patrick O'Grady}$^{2}$\\[4pt]
\normalsize $^{1}$London School of Economics, UK \ and \ $^{2}$Commonware, USA}
\date{}

\begin{document}
\maketitle

\begin{abstract}
To meet the throughput demands of modern blockchain systems,
protocols for State Machine Replication (SMR) increasingly have many
processors disseminate blocks of transactions in parallel, with a
consensus mechanism then establishing a total ordering on the blocks
of all producers. Such designs face a basic choice as to  when a block
may enter the ordering process. The \emph{certified} (or pessimistic)
approach waits until a block's availability has been attested by a
quorum. This is robust, but adds message delays to the latency of
every transaction. The \emph{uncertified} (or optimistic) approach
lets proposals reference blocks immediately. This achieves low
latency in favourable conditions, but degrades rapidly when referenced
data is missing and must be fetched on the critical path. Raptr, the
state of the art, takes a middle course: leaders propose sequences of
batches without requiring that voters hold the corresponding data,
each voter supports the longest initial segment of the sequence whose
data it does hold, and the protocol finalises a prefix supported by a
quorum, so that no processor ever blocks or fetches on the critical path. The remaining
weakness is sensitivity to order. If the data behind a single batch
early in the sequence is withheld, most voters can support only a
short prefix, and the proposal finalises little or nothing, however
much available data it references: individual faulty producers can
therefore still deny the system its optimistic path.

We present Multimmit, a protocol for $n\ge 5f+1$ processors that
combines a consensus layer requiring one round of voting per view
with multi-chain data dissemination. Votes are cast relative to the leader's proposal,
reporting per chain how far the voter can endorse it, and may
themselves endorse fresh blocks beyond the proposal. A transaction
block disseminated at time $t$ is then ordered by $t+3\delta$ in
expectation and $t+2\delta$ at best, where $\delta$ bounds message
delay, these figures being measured from the dissemination of the
transaction block itself rather than from the leader's proposal.
Performance degrades gracefully in the presence of faults: a faulty
producer can delay only its own chain's blocks, costing other chains
at most a one-view wait for placement in the total ordering. The
protocol also provides a strong censorship-resistance guarantee, by
which no leader can both finalise its leader block and exclude a
fresh, well-circulated block of an honest chain. Consensus objects
total tens of kilobytes per view, independent of transaction volume. The
protocol thereby combines the strengths of the two standard
approaches: its latency is below that of certified designs by roughly
two message delays (one bought by the resilience assumption
$n\ge 5f+1$, stronger than the $n\ge 3f+1$ under which Autobahn and
Raptr operate, and one by the new mechanisms of this paper), while it
exhibits none of the fragility of
uncertified approaches, since nothing is ever fetched on the critical path,
and no minority of faulty producers can deny honest chains their fast
path.
\end{abstract}

\section{Introduction}\label{sec:intro}

State Machine Replication (SMR) is a fundamental primitive for
distributed computing that allows a collection of processors to
maintain a consistent, shared log of transactions despite the
failure, or even malicious behaviour, of some participants.
Originally developed for fault-tolerant systems, SMR has become the
algorithmic backbone of modern blockchains and decentralised
applications. The core challenge is to ensure that all correct
processors agree on the same sequence of transactions (Consistency),
while also guaranteeing that new transactions are eventually included
(Liveness), even in the presence of Byzantine faults, where corrupted
processors may behave arbitrarily.

Protocols for SMR typically operate in the partially synchronous
model~\cite{DLS88}, in which messages may be delayed arbitrarily before an
unknown time GST (the Global Stabilisation Time), but are delivered
within a known bound $\Delta$ thereafter. This model captures
realistic network conditions, including temporary partitions, and
underpins many widely deployed protocols.

\paragraph{The throughput challenge.} In classical leader-based
protocols such as PBFT~\cite{castro1999practical} and its many derivatives, the leader
of each view assembles transactions into a block and sends it to all
other processors, who vote on whether to accept it. A disadvantage of this approach is that the leader's
outgoing bandwidth is then a bottleneck for the whole system.  A now standard response is to
have \emph{all} processors disseminate transaction data in parallel,
with the consensus mechanism operating on small references to that
data rather than on the data itself. DAG-based
protocols~\cite{danezis2022narwhal,spiegelman2022bullshark,babel2023mysticeti}
realise this idea with every processor proposing concurrently,
though modern designs still organise the ordering of the resulting
graph around designated anchor blocks. They also couple
dissemination to consensus structurally, since a block enters the
ordering only once later blocks reference it, so that placement
depends on the continued growth of the graph. Multi-chain designs
such as Autobahn~\cite{giridharan2024autobahn} retain the
leader-based structure, decouple the two concerns, and are the
starting point for this paper. There, each
processor builds its own chain of transaction blocks, disseminating
each block as it is produced, and a leader-based protocol run on
small \emph{leader blocks} establishes a total ordering on the
transaction blocks of all chains. Transaction data then travels over
the network only once, from its producer to everyone else. Provided
production is spread across enough producers, the system can then
in principle finalise transactions at roughly the rate at which
individual processors can send and receive data
(Section~\ref{sec:final} returns to the accounting).

\paragraph{Certified and uncertified tips.} Designs of this kind face
a basic choice as to when a transaction block may be referenced by
the ordering process. The \emph{certified} (or pessimistic) approach,
taken by Autobahn, allows a leader block to reference only
transaction blocks whose availability has been attested by a quorum.
This is robust, but it prices the certification round trip into the
latency of every transaction. Three message delays separate a
transaction block from its earliest possible inclusion in a leader
block, since the block must reach the attesters, their attestations
must return to be assembled into a certificate, and the certificate
must reach the leader. The \emph{uncertified} (or optimistic) approach allows
proposals to reference transaction blocks immediately, with voters
missing the corresponding data expected to fetch it before voting.
This achieves low latency in favourable conditions, but degrades
badly when data is missing, since fetching then sits on the critical
path to consensus. The effect is not hypothetical, and has been
measured in the DAG setting: a total message loss rate of $0.05\%$
across the network (one per cent of the egress traffic at five per
cent of processors) has been observed to increase the latency of
Mysticeti~\cite{babel2023mysticeti}, which references uncertified
data, by an order of
magnitude~\cite{arun2024shoalpp,tonkikh2025raptr}.

Raptr~\cite{tonkikh2025raptr}, the current state of the art, takes a middle course.
Leaders propose sequences of batches without requiring that voters
hold the corresponding data, each voter supports the longest initial
segment of the proposed sequence whose data it does hold, and the
protocol finalises a prefix supported by a quorum. No processor ever
blocks or fetches on the critical path, and whatever is finalised is certainly available.
The remaining weakness, which we refer to as \emph{easy spoiling}, is
sensitivity to order. If the data behind a single batch early in the
proposed sequence is withheld, then most voters can support only a
short prefix, and the proposal finalises little or nothing, however
much available data it references. A single faulty producer can
arrange exactly this by withholding one batch, and $k$ faulty
producers, taking turns, can deny the entire system its optimistic
path for $k$ consecutive proposals. Raptr mitigates the attack with a
reputation mechanism, but the mitigation is reactive, in that each
faulty processor spoils before it is blamed, and reputations must be
forgiving enough to readmit processors after periods of asynchrony, so that
any reset re-arms the same attack.

\subsection{Our contributions}

We present Multimmit, a protocol that
combines multi-chain data dissemination with the consensus skeleton
of Minimmit~\cite{chou2025minimmit}, in which a single round of voting suffices for
finality under the assumption that $n\ge 5f+1$, for $n$ the number of
processors and $f$ the number that may be
faulty.\footnote{$5f-1\le n$ is necessary and sufficient for
finality in one round of voting (see \cite{kuznetsov2021revisiting}
and
\url{https://decentralizedthoughts.github.io/2021-03-03-2-round-bft-smr-with-n-equals-4-f-equals-1/}).
As is standard~\cite{shoup2025kudzu,alpen,shrestha2025hydrangea}, we
assume $5f+1\le n$ for simplicity. The assumption is materially
stronger than the $n\ge 3f+1$ under which Autobahn and Raptr
operate, and is the price of finality in a single round of voting;
Section~\ref{sec:intuition-raptr} keeps the accounting explicit.} We work in the standard
partially synchronous model: messages may be delayed arbitrarily
before an unknown time GST (the global stabilisation time), and are
delivered within a known bound $\Delta$ thereafter. Latencies are
stated throughout in terms of $\delta$, the \emph{actual} least
upper bound on message delay after GST, which is unknown to the
protocol and may be far smaller than $\Delta$.  The design
is guided by a simple
principle, that \emph{a faulty processor other than the leader should only
be able to significantly delay the finalisation of transaction blocks
in its own chain}. Two mechanisms are used to achieve this.
\begin{itemize}[itemsep=2pt]
  \item \emph{Proposal-relative voting.} The leader proposes tips
    for every chain, certified or not, and each voter responds with a
    single vote, reporting per chain how far up the chain it can
    endorse the proposal. No processor blocks or fetches, a withheld
    block costs only its own chain's place in the ordering, and
    leader blocks are finalised after a single round of voting.
  \item \emph{Extension voting.} Votes may themselves endorse fresh
    blocks beyond the proposed tips. For a transaction block to be
    finalised in a view, it then suffices that the block reach the
    \emph{voters} before they vote, rather than the leader before it
    proposes. This removes the  wait for the leader's next proposal from every transaction's
    critical path, and reduces latency by a further $\delta$ on
    average, independently of the $\delta$ bought by the single round
    of voting.
\end{itemize}
The design has three principal benefits, which we quantify and
compare with the state of the art in Section~\ref{sec:intuition}.
\begin{enumerate}[itemsep=4pt]
\item \emph{Low latency.} Throughout the paper we measure latency from the
dissemination of the transaction block itself, rather than from the
leader's proposal, this being the leg of the journey that competing
designs sometimes leave unmeasured. The distinction matters for
user-facing latency. An operator's API node, running a producer
chain of its own, places a user's transaction directly into
consensus, rather than forwarding it and waiting a further hop for
its inclusion in some leader's proposal. We also distinguish the \emph{good case}, in
which the network is synchronous and all processors are correct, from
the \emph{common case}, in which the network is synchronous and the
leader correct but up to $f$ other processors may be faulty. In the
good case, a transaction block disseminated at time $t$ is placed in
the total ordering by $t+3\delta$ in expectation and by $t+2\delta$
at best. The corresponding figure for Raptr is $t+5\delta$ in
expectation. In the common case, an honest chain's block disseminated at time
$t$ is still \emph{finalised} by $t+3\delta$ in expectation, and
finalisation means more than the eventual inclusion that
certification alone already provides: the block's inclusion and the
content of its chain's contribution to the ordering are fixed, and
\emph{any} leader block subsequently finalised, by whichever
leader and however that leader behaves, must place the block in the
total ordering. Placement in the total ordering therefore accompanies finalisation at
$t+3\delta$ unless the block is queued behind a lagging faulty
chain, and completes by $t+5\delta$ in expectation in any case.  Raptr's figures degrade in the common case: faulty
producers can deny it its optimistic view exits, stretching views
from $2\delta$ to $3\delta$, so that the same block is first
finalised at $t+5.5\delta$ in expectation, and by easy spoiling that
finalisation may in any case order little or nothing. In short,
\emph{Multimmit establishes a total ordering that faulty producers cannot
spoil before Raptr establishes the ordering that they can}.

\item \emph{Robustness.} A faulty producer can delay only its own chain's
blocks, costing other chains at most the one-view wait, just
described, for placement in the total ordering, and easy spoiling has no analogue. Since voters
report what they hold rather than fetching what they lack, message
loss inflates no critical path.

\item \emph{Censorship resistance.} The protocol provides a guarantee of
block inclusion that does not depend on the leader. If a fresh
transaction block produced by a correct processor reaches all correct
processors before they vote in a view, then either the view finalises
nothing at all, or the block's membership of the eventual ledger is
settled in that view and it enters the total ordering by the next
finalised leader block, at the latest. A leader wishing to censor an
honest producer's block must therefore suppress its entire view,
forgoing whatever rewards finalisation brings.
\end{enumerate}

\paragraph{Specification and analysis.} We give a complete formal
specification of the protocol, with proofs of consistency and
liveness, and an accounting of data availability for every block
entering the ordering. We also determine the protocol's extraction
thresholds exactly, showing that its rules are optimal at $n=5f+1$
and identifying the precise relaxation available for larger $n$. The
specification is written to be
implemented, and Section~\ref{sec:sizes} calculates the concrete
sizes of every object the protocol transmits.

\paragraph{Structure of the paper.} Section~\ref{sec:setup} defines
the model. Section~\ref{sec:intuition} develops the protocol
informally, compares it quantitatively with Raptr, and collects the
guarantees case by case, including for \emph{greedy} Byzantine
leaders, meaning those that seek finalisation and avoid provable
misbehaviour. Section~\ref{sec:spec} gives the formal
specification, Section~\ref{sec:verification} the analysis, and
Section~\ref{sec:optimisations} optimisations, including a
backup-proposal extension under which the view of a leader that
crashes before proposing finalises as in a correct-leader view.
Sections~\ref{sec:experiments}--\ref{sec:final} describe experiments,
related work and final comments.

\section{The setup}\label{sec:setup}

\noindent \textbf{The processors}. We consider a set $\Pi=\{p_1,\dots,p_n\}$ of $n$ processors. For $f$ such
that $5f+1\le n$, at most $f$ processors may become corrupted by the
adversary during the course of the execution, and may then display
Byzantine (arbitrary) behaviour. Processors that never become corrupted
are referred to as \emph{correct}.

\vspace{0.2cm}
\noindent \textbf{Cryptographic assumptions}. Our cryptographic
assumptions are standard for papers on this topic. Processors
communicate by point-to-point authenticated channels. We use a
cryptographic signature scheme, a public key infrastructure (PKI) to
validate signatures, and a collision resistant hash function $H$.
Beyond ordinary signatures, we make use of an aggregate signature
scheme (e.g.\ BLS): signatures on (possibly distinct) messages under
(possibly distinct) keys may be aggregated into a single group
element, verifiable given the constituent (message, key) pairs, at a
cost of one pairing computation per distinct message. We also use two
threshold schemes, with thresholds $n-2f$ and $2f+1$: each processor
thus holds an ordinary signing key and a key-share for each threshold
scheme. We assume a computationally bounded adversary. Following a
common standard in distributed computing and for simplicity of
presentation (to avoid the analysis of negligible error
probabilities), we assume these cryptographic schemes are perfect,
i.e., we restrict attention to executions in which the adversary is
unable to break them.

\vspace{0.2cm}
\noindent \textbf{The partial synchrony model}. We consider the
standard partial synchrony model, whereby the execution is divided
into discrete timeslots $t\in\mathbb{N}_{\ge 0}$ and a message sent at
time $t$ must arrive at some time $t'>t$ with
$t'\le\max\{\mathrm{GST},t\}+\Delta$. We also write $\delta$ to denote the (unknown) least upper bound on message delay after GST (noting that $\delta$ may be significantly less than the known bound $\Delta$). While $\Delta$ is known, the
value of GST is unknown to the protocol. The adversary chooses GST and
also message delivery times, subject to the constraints already
defined. Correct processors begin the protocol execution before GST
and are not assumed to have synchronised clocks. For simplicity, we do
assume that the clocks of correct processors all proceed in real time,
meaning that if $t'>t$ then the local clock of correct $p$ at time
$t'$ is $t'-t$ in advance of its value at time $t$. Using standard
arguments, our protocol and analysis can be extended in a
straightforward way to the case in which there is a known upper bound
on the difference between the clock speeds of correct processors.

\vspace{0.2cm}
\noindent \textbf{Transactions}. Transactions are messages of a
distinguished form, signed by the environment. Each timeslot, each
processor may receive some finite set of transactions directly from
the environment. We make the standard assumption that transactions are
unique (repeat transactions can be produced using an increasing
`ticker' or timestamps).

\vspace{0.2cm}
\noindent \textbf{State machine replication}. If $\sigma$ and $\tau$
are sequences, we write $\sigma\preceq\tau$ to denote that $\sigma$ is
a prefix of $\tau$, and say $\sigma,\tau$ are \emph{compatible} if
$\sigma\preceq\tau$ or $\tau\preceq\sigma$. If two sequences are not
compatible, they are \emph{incompatible}. If $\sigma$ is a sequence of
transactions, we write $\mathrm{tr}\in\sigma$ to denote that the
transaction $\mathrm{tr}$ belongs to the sequence $\sigma$. Each
processor $p_i$ is required to maintain an append-only log, denoted
$\mathrm{log}_i$, which at any timeslot is a sequence of distinct
transactions. We also write $\mathrm{log}_i(t)$ to denote the value of
$\mathrm{log}_i$ at the end of timeslot $t$. The log being append-only
means that, for $t'>t$, $\mathrm{log}_i(t)\preceq\mathrm{log}_i(t')$.
We require the following conditions to hold in every execution:

\vspace{0.1cm}
\noindent \textbf{Consistency}. If $p_i$ and $p_j$ are correct then,
for any timeslots $t$ and $t'$, $\mathrm{log}_i(t)$ and
$\mathrm{log}_j(t')$ are compatible.

\vspace{0.1cm}
\noindent \textbf{Liveness}. If $p_i$ and $p_j$ are correct and if
$p_i$ receives the transaction $\mathrm{tr}$ then, for some $t$,
$\mathrm{tr}\in\mathrm{log}_j(t)$.

\vspace{0.1cm}
\noindent As is common for protocols that separate data dissemination
from ordering, what the protocol explicitly guarantees for the
transaction blocks entering the log is \emph{data availability}
(sufficiently many correct processors hold the data of each such block
for it to be retrievable); how processors retrieve the data of blocks
they are missing is left as an implementation detail. Formally, we
thus solve Extractable SMR in the sense of~\cite{lewispye2026carnot}; since
the distinction is routine, we do not belabour it.

\vspace{0.2cm}
\noindent \textbf{Blocks, parents, and ancestors}. The protocol
produces \emph{blocks} of two kinds, defined precisely in
Section~\ref{sec:spec}: transaction blocks, which carry sequences of
transactions, and leader blocks, which carry none. In each case there
are distinguished \emph{genesis} blocks, and each block $b$ other than
a genesis block specifies a unique \emph{parent} (by hash value), of
which it is a \emph{child}. The \emph{ancestors} of $b$ are $b$ and
all ancestors of its parent, while a genesis block has only itself as
ancestor. We say $b$ \emph{extends} (or is a \emph{descendant} of)
$b'$ if $b'$ is an ancestor of $b$, and that two blocks are
\emph{incompatible} if neither is an ancestor of the other. The
\emph{height} of a block is its number of proper ancestors (so that
genesis blocks have height 0), and we refer to the
greatest block for which a party has received all ancestors on a given chain of transaction
blocks as its \emph{tip} of that chain. When the protocol finalises
transaction blocks in a given order, processors append the
corresponding transactions to their logs, removing any duplicates; the
precise extraction of this ordering is specified in
Section~\ref{sec:spec}.

\section{The intuition}\label{sec:intuition}

We want a protocol reflecting a commonly used blockchain architecture,
in which a designated set of processors is responsible for block production. In
general, these block producers might be distinct from the `validators'
that carry out consensus. For the sake of simplicity, we assume here
that the $n$ validators are the block producers, but one might more
generally consider $K$ block producers, for $K$ different from $n$. In
the style of Autobahn, we suppose that each block producer builds its
own chain, and consensus must then totally order the blocks from all
chains.

Multimmit therefore produces blocks of two kinds. Each processor $p_i$
builds its own chain of \emph{transaction blocks}. Transaction blocks
carry transactions but are never themselves the subject of consensus
votes. A leader-based protocol, essentially Minimmit, is run on
\emph{leader blocks}; leader blocks carry no transactions, but
determine a total ordering on transaction blocks. In this section we
describe the design informally. We first recall Minimmit, then
describe the chain layer, and then build up the consensus layer in two
stages, first with voting relative to the leader's proposals alone,
and then with \emph{extension votes}, which sharpen both latency and
censorship resistance.

\paragraph{The good and common cases.} Throughout the section it will be useful to distinguish two
regimes. In the \emph{common case}, the network is synchronous and
leaders are correct, while up to $f$ other processors may be faulty.
In the \emph{good case}, the network is synchronous and \emph{all}
processors are correct. The common case is the most appropriate benchmark
for a protocol's performance, since some faulty processors are to be
expected in normal operation, and a protocol that performs well only
in the good case is fragile.

\subsection{Recalling Minimmit}

First, we recall the Minimmit protocol~\cite{chou2025minimmit}. 

\paragraph{One round of voting.} Minimmit is a view-based protocol.
Each view $v$ has a designated leader, who proposes a block $b$. The
proposal specifies as parent a block from an earlier view. Other
processors then send signed votes for $b$ to all. Upon receiving $n-f$
votes for $b$ (an \emph{L-notarisation}, `L' for `large'), a processor
finalises $b$. This is `2-round finality', one round to send the block
and one round of voting, for which $5f+1\le n$
suffices (see Section~\ref{sec:intro}).

\paragraph{View progression.} A view whose leader is faulty may never
produce an L-notarisation, so processors need a sound licence to move
to the next view. To this end, a processor that sees no progress in view $v$ will
eventually time out and send a \emph{nullify($v$)} message, indicating
that it wishes to abandon the view. While no protocol can ensure that
every view produces an L-notarisation, Minimmit ensures that every
view produces at least one of the following, either of which suffices:
\begin{itemize}[itemsep=0pt]
  \item An \emph{M-notarisation} (`M' for `mini') for a view $v$ block
    $b$: a set of $2f+1$ votes for $b$. This proves that no
    \emph{other} view $v$ block can receive an L-notarisation, since
    the two vote sets would share $(2f+1)+(n-f)-n\ge f+1$ processors,
    hence a correct one, and correct processors vote at most once per
    view. It is
    therefore safe to enter view $v+1$ and to build on $b$.
  \item A \emph{nullification} for view $v$: a set of $2f+1$
    nullify($v$) messages. The rules governing when processors may
    send nullify($v$) messages are arranged to ensure that a
    nullification proves that \emph{no} view $v$ block receives an
    L-notarisation. It is therefore safe to enter view $v+1$ and to
    skip view $v$ entirely.
\end{itemize}
Arranging that every view really does produce one of the two is the
one slightly delicate part of the protocol, and we ask the reader to
take this guarantee on trust for now (the rules appear in
Section~\ref{sec:spec}). Granting this guarantee, the rest of the
protocol and its correctness can be seen directly, as follows.

\paragraph{Anchoring proposals.} The leader of view $v$ selects the
\emph{greatest} $v'<v$ for which it has received an M-notarisation,
and proposes a child of the corresponding block. By the rules for view
progression, the leader must then have received nullifications for
every view in the open interval $(v',v)$, since it progressed through
each of those views while holding an M-notarisation for none of them. To vote
for the proposal, other processors require an M-notarisation for the
parent and nullifications for every view in $(v',v)$, which together
prove that no block that could have been finalised is being skipped.
If the leader is correct then, during synchrony, this requirement is
automatically satisfied, since the leader has received these objects, and
every processor forwards each M-notarisation and nullification to all
others upon first receipt.

\paragraph{Consistency.} These rules already give consistency. Suppose
a block $b$ for view $v$ receives an L-notarisation and, towards a
contradiction, consider the least view $v'>v$ in which some block
$b'$ incompatible with $b$ receives an M-notarisation. (If
consistency fails, such a view exists, since an L-notarisation
contains an M-notarisation.) Let $b''$, from view $v''<v'$, be the
parent of $b'$. Since at least $f+1$ correct processors voted for
$b'$, the block $b''$ has an M-notarisation, and nullifications exist
for every view in $(v'',v')$. Now:
\begin{itemize}[itemsep=0pt]
  \item If $v''=v$ then $b''=b$, since $b$ is the only view $v$ block
    with an M-notarisation. Then $b'$ extends $b$: a
    contradiction.
  \item If $v''\in(v,v')$ then $b''$ is compatible with $b$, by the
    choice of $v'$. Since views increase along chains, $b''$ extends
    $b$, so again $b'$ extends $b$: a contradiction.
  \item So $v''<v$. Then a nullification exists for view $v$
    itself, contradicting $b$'s L-notarisation.
\end{itemize}

\paragraph{Liveness.} View progression and the forwarding of M-notarisations and nullifications give liveness.
During synchrony, a correct leader selects the greatest $v'$ for which
it holds an M-notarisation and proposes. By the time the proposal
arrives (strictly, within $\delta$ of this time), every correct processor holds the parent's M-notarisation and
the interim nullifications. All correct processors therefore vote, and
there are at least $n-f$ of them, so the proposal receives an
L-notarisation and is finalised.

\paragraph{What carries over.} Multimmit inherits this skeleton
wholesale: views with one round of voting for finality at $n-f$,
every view producing either a notarisation of the smaller kind or a
nullification, and proposals anchored as above. There are two  substantive
changes. First, leader blocks contain no transactions; their
role is to coordinate the chains of transaction blocks (described next).
Second, the objects playing the role of M-notarisations, which we call
V-QCs, certify somewhat more (and are correspondingly larger). We
return to this once the chain layer is in place.

\subsection{The chain layer}

\paragraph{Building $n$ chains.} Following Autobahn~\cite{giridharan2024autobahn}, each processor
builds its own chain of transaction blocks, disseminating each block as
it is produced. Upon receiving a block $b$ on $p_i$'s chain, each
processor sends a \emph{DA-vote} (`data availability vote', a
threshold signature share) for $b$
back to $p_i$, who forms and disseminates the resulting
\emph{DA-certificate} from $n-2f$ shares.\footnote{Sending DA-votes to
the block producer alone, who then disseminates the certificate, costs
one extra message delay in certificate formation. The alternative, in
which every processor sends its DA-vote to all, would save that delay
but at the cost of cubic communication per transaction block height: $n$ processors sending to $n$ recipients,
for each of the $n$ chains. Since (as we shall see) leader blocks may
reference transaction blocks before their certificates form,
certificate formation is off the critical path, and the extra delay is
of little consequence.} A DA-certificate ensures
\emph{data availability} (at least $n-3f\ge 2f+1$ correct
processors hold the block, so it can always be recovered) and
\emph{uniqueness}, since two certified blocks at the same height of
the same chain would give two sets of $n-2f$ DA-voters intersecting
in $n-4f\ge f+1$ processors, and correct processors DA-vote at most
once per (chain, height).

\paragraph{Pipelining.} Here we deviate from Autobahn, which requires
each transaction block to contain a certificate for its parent. This
puts a full round-trip between consecutive blocks, and so caps the
rate at which any single chain can grow. In Multimmit, a producer may instead run up
to $d$ blocks ahead of its last certified block (the \emph{pipelining
depth}, e.g.\ $d=3$), with certificates forming in the background. The
price is that DA-votes may now be cast for a block before any
certificate for its parent is seen, and the DA-voting rules must be
formulated with some care to ensure that certified blocks remain
unique and compatible per chain.

\paragraph{The DA-voting rules.} The rules are simple to state. Upon
receiving a block $b$ of height $h$ on $p_i$'s chain, $p_j$ considers
the greatest height $h'<h$ at which it has seen a DA-certificate for
the chain, and DA-votes for $b$ provided:
\begin{itemize}[itemsep=0pt]
  \item $p_j$ has not previously DA-voted for any block of height $h$
    on the chain;
  \item $h-h'\le d$;
  \item $p_j$ holds the full sequence of blocks linking the certified
    block at height $h'$ to $b$, and has DA-voted for each block in
    that sequence below $b$.
\end{itemize}
We leave the reader to convince themselves (or to await
Section~\ref{sec:verification}) that these conditions ensure the
compatibility of certified blocks.

\subsection{Uncertified tips, and easy spoiling}

\paragraph{The latency problem.} The simplest way to combine the two
layers is the method employed by Autobahn, in which each leader block
references, for each chain, the greatest DA-certificate the leader
has seen, and finalising the leader block then finalises everything below
the referenced tips. This, however, puts three message delays (block,
DA-votes, certificate) between a transaction block and its earliest
possible inclusion in a leader block. We would like the leader to be
able to reference a block \emph{immediately}, with certification
proceeding in the background.

\paragraph{Referencing uncertified tips.} Referencing a transaction block before
it is certified is dangerous, because the reference itself does not prove that the data
behind it is available. Autobahn does suggest an optimisation of
exactly this kind, whereby leaders may propose uncertified tips (via
hash values), and a voter
missing the corresponding data holds off voting and fetches it from
the leader. The cost is twofold. Timeouts must be inflated to cover
the time the leader spends serving such requests, so faulty leaders
cause longer delays. Beyond this, the leader may be required to transmit
transaction data, on the critical path to voting, at volumes of
precisely the order that multi-proposer dissemination exists to avoid.
As an example of the latter, suppose that each of $f=\Omega(n)$ faulty
producers sends its latest block to the leader alone. If the leader
references these tips, then every correct voter is missing all
$\Omega(n)$ blocks and must fetch them from the leader before voting.
The leader is then forced to send $\Omega(n)$ blocks to each of $\Omega(n)$
participants, i.e.\ $\Omega(n^2)$ block transmissions through a single
uplink, before the view can progress.

\paragraph{Easy spoiling.} Raptr~\cite{tonkikh2025raptr} addresses the same problem without
blocking or fetching on the critical path. There, a leader's proposal specifies a sequence
of transaction batches by reference, without requiring that the
referenced data be certified, or even that voters hold it. Each voter
votes for the longest initial segment of the sequence for which it
\emph{does} hold all the referenced data, and the protocol finalises
an initial segment that sufficiently many voters support (roughly
speaking, the prefix voted for within a quorum).
No processor waits or fetches on the critical path, and whatever is finalised is
certainly available.

 The fragility lies in the sensitivity to order.
If the data behind even one batch referenced early in the sequence has
been withheld, most voters can vote only for a very short prefix, and
the block finalises little or nothing, however much available data it
references beyond that point. A single faulty producer can arrange
exactly this by withholding one batch, and $k$ faulty producers,
taking turns, can spoil $k$ consecutive leader blocks. We refer to this
failure as \emph{easy spoiling}. We note that spoiling degrades Raptr
rather than halting it, since batches stranded behind withheld data are
eventually included by way of data availability certificates, as in
the basic approach. However, this fallback is precisely the slow path that
prefix voting exists to avoid, so $k$ faults deny the entire system
its optimistic path for $k$ blocks, rather than merely delaying the
withheld data. Raptr mitigates the attack with a reputation
mechanism, under which producers blamed by $f+1$ processors have
their batches included only via the certified path (batches of merely
slow producers are instead reordered towards the ends of proposals).
The mitigation is reactive, however (each faulty processor spoils
before it is blamed), and reputations must be forgiving enough to readmit
processors that were merely slow, for example after a period of
asynchrony, so any reset re-arms the same attack. Avoiding the failures of both approaches (Autobahn's blocking
fetch from the leader, and Raptr's easy spoiling) motivates our
guiding principle: \emph{a faulty non-leader processor should only be
able to significantly delay the finalisation of transaction blocks in
its own chain.} (We make precise the extent to which other chains can be
affected at all in Section~\ref{sec:intuition-ordering} below.)

\paragraph{Our approach, in outline.} Multimmit responds with two
mechanisms, developed in the next two subsections.
\begin{itemize}[itemsep=2pt]
  \item \emph{Proposal-relative voting.} The leader issues a
    \emph{chain proposal}, proposing tips for every chain, and each
    voter responds with a \emph{single} vote, which is now a vector
    with one coordinate per chain, reporting how far up each chain the
    voter can endorse the chain
    proposal.\footnote{The reader may worry that votes thereby become
    large. Such vectors condense well. A vote that simply endorses
    every proposed tip can be expressed in constant size, and in
    general a vote need record little more than its deviations from
    full endorsement (see Section~\ref{sec:sizes}).} No processor blocks or
    fetches, a withheld block costs only its own chain's slot, and
    leader blocks are finalised after a single round of voting.
  \item \emph{Extension votes.} Votes may themselves endorse fresh
    blocks beyond the proposed tips. For a transaction block to be
    included, it need then only reach the \emph{voters} before their
    next vote, rather than reaching the \emph{leader} before its next
    proposal, removing the proposal leg (one message delay, plus the
    wait for a block production event) from the critical path.
\end{itemize}
Extension votes also yield a form of block inclusion guarantee: if a fresh transaction block
produced by a correct processor reaches all correct processors before
they vote in a view then, so long as the view finalises any block at
all, the block's membership of the eventual ledger is settled in that
view (no competing block can ever be certified), and it enters the
total ordering at latest via the next finalised leader block. The
content of the leader's proposal cannot prevent this, since a
leader wishing to exclude an honest producer's block must suppress
the entire view, forgoing whatever rewards finalisation brings. (The
precise statement, including the sense of `fresh', is the block
inclusion theorem of Section~\ref{sec:verification}.) The guarantee is aimed at precisely such
processors, motivated to finalise blocks but tempted to censor
particular producers or transactions. As we quantify in
Section~\ref{sec:intuition-raptr}, the resulting latency is much
lower than Raptr's, in the good case and by a wider margin in the
common case. Of the roughly $2\delta$ saved in the good case, only
one $\delta$ is bought by the assumption $5f+1\le n$ (one round of
voting rather than two), while the rest comes from the mechanisms
above.

\paragraph{Finalisation and ordering.}\label{sec:intuition-outline} It
will help throughout to keep the following picture in mind,
distinguishing the transaction blocks a quorum certificate \emph{finalises} from
the portion of the total ordering that can be \emph{extracted} from
it. Similar to Minimmit (with V-notarisations now replacing M-notarisations), every view
produces either a nullification or a \emph{V-notarisation}, a set of
$n-f$ view $v$ messages, sufficiently many of which are votes for the
view's leader block. Since votes are now vectors, notarisations are
larger objects than their Minimmit counterparts, and the protocol in
fact manipulates succinct certificates representing them, \emph{V-QCs}
and \emph{L-QCs} respectively. It
does no harm to conflate each notarisation with its certificate, and
we do so freely below. Section~\ref{sec:sizes} gives calculations
showing that the certificates remain small. 

A leader block, then,
consists of exactly two components: a chain proposal, proposing tips
for every chain as in the first bullet point above, and a reference
(by hash) to the V-QC of an earlier view, the latter playing the role
of the M-notarisation for its parent. A V-QC specifies, for each chain, a tip that is safe
to extend, and thereby a total ordering, namely everything below the
specified tips, in a fixed sweep order (Section~\ref{sec:intuition-ordering}).
Validity requires the block's chain proposal to extend,
chain by chain, the tips specified by the block's V-QC, just as a
Minimmit proposal must extend its parent. The ordering therefore only ever
grows. Figure~\ref{fig:layers} depicts the two layers as assembled
by the consensus layer.

\begin{figure}[t]
\centering
\begin{tikzpicture}[scale=0.78, every node/.style={font=\small}]
  \foreach \c/\y/\len in {1/0/6, 2/1.05/5, 3/2.1/6}{
    \node[anchor=east] at (-0.35,\y+0.28) {chain $\c$};
    \foreach \i in {1,...,\len}{
      \draw[fill=gray!14] ({(\i-1)*1.02},\y) rectangle ++(0.78,0.56);
    }
  }
  \foreach \x/\y in {2.04/0, 1.02/1.05, 1.02/2.1}{
    \draw[fill=red!25] (\x,\y) rectangle ++(0.78,0.56);
  }
  \foreach \x/\y in {5.1/0, 4.08/1.05, 5.1/2.1}{
    \draw[fill=green!30] (\x,\y) rectangle ++(0.78,0.56);
  }
  \draw[fill=blue!10] (1.55,3.6) rectangle ++(1.5,0.7);
  \node at (2.3,3.95) {$\ell_v$};
  \draw[fill=blue!10] (4.6,3.6) rectangle ++(1.62,0.7);
  \node at (5.41,3.95) {$\ell_{v+1}$};
  \draw[->,thick] (4.6,3.95) -- (3.05,3.95);
  \node[above] at (3.8,3.97) {\footnotesize V-QC};
  \draw[->,dashed] (2.0,3.6) -- (1.41,2.66);   
  \draw[->,dashed] (2.3,3.6) -- (1.41,1.61);   
  \draw[->,dashed] (2.6,3.6) -- (2.43,0.56);   
  \draw[->,dashed] (5.1,3.6) -- (5.49,2.66);   
  \draw[->,dashed] (5.4,3.6) -- (4.47,1.61);   
  \draw[->,dashed] (5.7,3.6) -- (5.49,0.56);   
  \node[anchor=west] at (6.6,0.28) {\footnotesize $\longrightarrow$ height};
\end{tikzpicture}
\caption{The two layers. Producers extend their own chains of
transaction blocks continuously (bottom), while consensus votes only
on leader blocks (top), each of which references the previous view's
V-QC (solid arrow) and proposes a tip per chain (dashed arrows), red
for $\ell_v$ and green for $\ell_{v+1}$. Each proposal extends the
tips of the view it builds on, and chains grow between proposals
without waiting for the consensus layer.}
\label{fig:layers}
\end{figure}
 An L-QC for a block extends the ordering already specified by
the V-QC the block references, and every subsequently finalised block
extends it further. Since, as we shall see, the tips specified by any
V-QC include the fresh tips of every honest chain (in the common
case), whatever block is finalised next, its ordering already contains
all honest chains' blocks. This is the sense in which easy spoiling
will have no analogue here. One distinction should be retained. A view $v$ 
L-QC finalises tips chain by chain, but the portion of the ordering
extractable from it \emph{immediately} may be capped at the frontier
of a lagging chain, with the remainder extracted a view later, via the
view $v$ V-QC referenced by the next leader block (even if the next leader block attempts to censor transactions). In the good case the
distinction all but disappears: every chain reaches its proposed tip
in every vote, and a single L-QC orders everything, except that a
block still in flight at the vote event, endorsed by some voters but
not others, may defer the slots swept after its chain by a view (its
own ordering was in any case a vote round away).

\subsection{Proposal-relative voting} \label{PRV} 

Throughout this subsection we describe a simplified version of the
protocol, \emph{without} extension votes; the next subsection adds
them.

\paragraph{Chain proposals and positions.} A leader block's chain
proposal contains, for each chain, an \emph{anchor} (a certificate, or
a tip already agreed safe) followed by references to up to $d$
uncertified blocks. The referenced blocks are not themselves included. Each is
specified by a single hash value, and the hash structure of blocks
ensures, with no need for signatures within the proposal, that each
referenced block is the child of the one before, so that a proposal
names, per chain, a segment of chain extending the anchor. Nothing
certifies that the referenced blocks \emph{exist}, since a leader may
reference blocks that were never produced; we shall see that this
is harmless. (The reader may also worry that leader blocks grow large;
we give calculations in Section~\ref{sec:sizes} showing that they
remain modest.) Voters respond with  a vote reporting, for
each chain, the greatest position up to which the voter has DA-voted
the referenced blocks. A processor missing blocks on chain $i$ votes a
lower position on chain $i$ and full positions elsewhere. No
processor blocks or fetches on the critical path.  

\paragraph{Two extraction rules.} Let us now be precise about the two
kinds of notarisation:
\begin{itemize}[itemsep=2pt]
  \item An \emph{L-notarisation} for a leader block is, exactly as in
    Minimmit, a set of $n-f$ votes for the block.
  \item A \emph{V-notarisation} for view $v$ is a set of $n-f$ view
    $v$ messages, sent by distinct processors, each of which is either
    a vote for a view $v$ leader block or a signed \emph{novote}
    message, declaring that its sender did not vote in view $v$. At
    least $2f+1$ of the votes must be for a single leader block, whose
    chain proposal the notarisation is said to \emph{designate}.
\end{itemize}
(Correct processors that time out without voting send novote messages
along with their nullify messages.\footnote{The reader may wonder why
novote messages are needed at all, given nullify messages. The
distinction is that, as we shall see, a correct processor may
sometimes vote and \emph{later} also nullify (this is needed to ensure
view progression), but a correct processor never both votes and sends
a novote message for the same view. This exclusivity is what lets a
V-notarisation account exactly for who did and did not vote, on which
the extraction rules below rely.} Since a faulty leader may
equivocate, we cannot insist that all votes be for the same block, but
if some view $v$ block receives an L-notarisation, one can check that
every view $v$ V-notarisation must designate \emph{its} proposal. The
requirement that a V-notarisation account for $n-f$ distinct
processors, rather than comprising just $2f+1$ votes, is what powers
the quorum intersection arguments below.) As promised, the protocol
actually manipulates succinct certificates standing for these sets,
L-QCs and V-QCs respectively.

With votes now vectors of positions, each kind of notarisation
specifies tips, extracted chain by chain, by rules differing only in
a threshold:
\begin{itemize}[itemsep=2pt]
  \item An L-notarisation finalises the leader block, and its
    \emph{finalised tips} determine which transaction blocks are
    thereby finalised: on each chain, take the greatest proposed
    position endorsed by at least $3f+1$ of the notarisation's votes
    (that is, discard the top $3f$). Note that different
    L-notarisations for the same proposal may finalise different tips.
  \item A V-notarisation licenses view exit (in place of Minimmit's
    M-notarisation), and its \emph{safe-to-extend tips} are what the
    next leader must build on: on each chain, take the greatest
    proposed position endorsed by at least $f+1$ of the
    notarisation's votes for the designated proposal (that is, discard
    the top $f$).
\end{itemize}
The requirement relating the two rules is that, for any view $v$
L-notarisation and any view $v$ V-notarisation, the total ordering
extracted from the V-notarisation must extend the ordering that the
L-notarisation finalises. This is what consistency needs. A
processor may finalise via the L-notarisation, while the next leader
builds on the safe-to-extend tips of whichever V-notarisation it
happens to hold; were some V-notarisation's tips to fall short of a
finalised tip, a later view could extend the chains past finalised
blocks, and the logs of correct processors would fork. Quorum
intersection delivers the requirement, for any $n\ge 5f+1$. Behind
any finalised tip stand at least $3f+1$
votes, hence at least $2f+1$ correct voters, of whom at most $f$ are
absent from any given V-notarisation. At least $f+1$ of the
V-notarisation's votes therefore sit at or above the finalised tip,
so the greatest position endorsed by $f+1$ of its votes, which is
exactly what the safe-to-extend rule selects, is at or above the
finalised tip, on every chain. 
Figure~\ref{fig:rules} shows the two rules at work on a single
chain.

\begin{figure}[t]
\centering
\begin{tikzpicture}[scale=0.9, every node/.style={font=\small}]
  \foreach \i/\col in {0/gray!14, 1/gray!14, 2/gray!14, 3/red!25, 4/green!30}{
    \draw[fill=\col] ({\i*1.15},0) rectangle ++(0.85,0.6);
    \node at ({\i*1.15+0.425},-0.32) {$\i$};
  }
  \node at (0.425,-0.75) {\footnotesize (anchor)};
  \foreach \i/\cnt in {1/1, 2/1, 3/4, 4/3}{
    \foreach \j in {1,...,\cnt}{
      \fill ({\i*1.15+0.425},{0.85+(\j-1)*0.33}) circle (0.075);
    }
  }
  \draw[->] (6.1,0.95) -- (5.13,0.95);
  \node[anchor=west] at (6.15,0.95) {\footnotesize safe to extend: discard top $f$};
  \draw[->] (6.1,0.3) -- (4.0,0.3);
  \node[anchor=west] at (6.15,0.3) {\footnotesize finalised: discard top $3f$};
\end{tikzpicture}
\caption{The two extraction rules on one chain, for $n=11$, $f=2$.
The proposal makes four entries above the anchor, and the dots are
the reported positions of the $n-f=9$ votes of an L-notarisation,
here $4,4,4,3,3,3,3,2,1$. Discarding the top $3f=6$ leaves position
$3$ as the finalised tip (red), while discarding the top $f=2$ from
any V-notarisation's tally containing these votes leaves position $4$
as safe to extend (green). The safe-to-extend tip always sits at or
above the finalised tip.}
\label{fig:rules}
\end{figure}

\paragraph{The total ordering extracted from a V-QC.}\label{sec:intuition-ordering}
It remains to specify how tips are turned into a log. We do this
first for V-QCs, by a deterministic sweep. The ordering extracted from
an L-QC, defined below, will then be an initial segment (typically all or most)
of the ordering extracted from any same-view V-QC. To be concrete,
consider a leader block for view $v_2$, and let $Q_1$ be the V-QC it
references, for some earlier view $v_1<v_2$. As described above, $Q_1$
identifies the block's parent, and the block's chain proposal must
extend the tips specified by $Q_1$. Suppose now that view $v_2$
produces a V-QC $Q_2$ for this block. The ordering extracted from
$Q_2$ extends the ordering already extracted from $Q_1$, as follows.
The new blocks on each chain (those lying above the tip specified by
$Q_1$ and at or below the tip specified by $Q_2$) are appended by
height relative to that chain's \emph{previous} tip, interleaving the
chains at equal offsets. Figure~\ref{fig:ordering}
gives an example with four chains, in which the red cells are the tips
specified by $Q_1$, the green cells those specified by $Q_2$, and the
sweep first appends every chain's first new block, then every chain's
second, and so on, producing
\[
1B,\;2C,\;3B,\;4B,\quad 1C,\;2D,\;3C,\;4C,\quad 1D.
\]
Note that chain 2's block $C$ is its \emph{first} new block, and so is
ordered in the first pass, alongside the other chains' first new
blocks.

\begin{figure}[t]
\centering
\begin{tikzpicture}[scale=0.95]
  \foreach \c/\h/\x/\y/\col in {%
    1/A/0/0/red!25, 2/A/1/0/white, 3/A/2/0/red!25, 4/A/3/0/red!25,
    1/B/0/1/white, 2/B/1/1/red!25, 3/B/2/1/white, 4/B/3/1/white,
    1/C/0/2/white, 2/C/1/2/white, 3/C/2/2/green!30, 4/C/3/2/green!30,
    1/D/0/3/green!30, 2/D/1/3/green!30}{%
    \draw[fill=\col] (\x,\y) rectangle ++(1,1);
    \node at (\x+0.5,\y+0.5) {$\c\h$};}
  \foreach \x/\c in {0/1,1/2,2/3,3/4}{\node at (\x+0.5,-0.45) {\small $\c$};}
  \foreach \y/\h in {0/A,1/B,2/C,3/D}{\node at (-0.45,\y+0.5) {\small $\h$};}
  \node at (2,-1.0) {\small chains};
\end{tikzpicture}
\caption{Extracting the total ordering. Columns are chains $1$--$4$,
and rows are heights $A$--$D$ ($A$ lowest). Red cells are the tips
specified by $Q_1$, and green cells the tips specified by $Q_2$. The
horizontal sweep appends the new blocks at equal offsets above the red
tips, giving $1B,2C,3B,4B$, then $1C,2D,3C,4C$, then $1D$.}
\label{fig:ordering}
\end{figure}

\paragraph{All honest tips are ordered.} The crucial point is that the
sweep orders \emph{all} blocks below the tips specified by a V-QC,
however little the leaders of the intervening views proposed for any
given chain, and the tips specified by any V-QC include the proposed
tip of every honest chain (in the common case). To see the latter,
note that, on
an honest chain whose proposed blocks reached all correct processors
in time, \emph{every} correct vote reports the full position, so among
any $2f+1$ votes of a V-QC's tally at least $f+1$ sit at the proposed
tip, and the tip survives the discard of the top $f$. A faulty
processor can lower the extracted tip only on its own chain, by
withholding its own blocks.

\paragraph{The total ordering extracted from an L-QC.} We now turn to
L-QCs, and here we define the extracted ordering as follows.
Let $Q$ be an L-QC for a view $v_2$ leader block referencing the V-QC
$Q_1$. The ordering extracted from $Q$ extends the ordering extracted
from $Q_1$ by exactly the sweep above, performed towards $Q$'s
finalised tips, but subject to one stopping rule: the sweep halts at
the first slot at which some chain has fallen short of its proposed
tip.\footnote{This form of the rule is exact for the simplified
protocol of this subsection, where the proposed tip is a hard ceiling
on a chain's growth. Once votes may carry extensions (next
subsection), the ceiling dissolves, since a chain \emph{at} its
proposed tip may yet grow, and the rule must be strengthened
accordingly (see the definition of \emph{settled} chains in
Section~\ref{sec:spec}). One
might hope to also \emph{relax} the rule, passing over a lagging chain
when no vote of the L-QC endorses any block beyond the finalised tip.
This is sound for the beyond-tip region, where growth requires $2f+1$
endorsing votes, but not below the proposed tip, where a chain's
reported position advances at the threshold $f+1$. There, $f$ faulty
votes and a single correct vote from outside the L-QC suffice, and no
L-QC can certify that the latter does not exist.} The stopping rule is needed because the sweep interleaves chains.
A chain that has fallen short of its proposed tip may yet have blocks
appear in that slot (via the view's V-QC), and those blocks would
precede, in sweep order, other chains' blocks at later offsets. To see
that this defines an initial segment of the ordering extracted from
\emph{any} view $v_2$ V-QC, recall the quorum intersection argument
above, by which the finalised tips extracted from an L-QC sit at or below the
safe-to-extend tips extracted from any same-view V-QC, chain by chain.
In terms of Figure~\ref{fig:ordering}, any V-QC's green frontier
dominates any L-QC's finalised frontier on every chain. Everything
before the halting slot is thus ordered immediately, upon one round of
voting, while the tail is ordered a view later, when the next leader block
settles how far the lagging chains reached.

This also lets us make the guiding principle precise. A faulty
non-leader processor cannot delay the \emph{finalisation} of other
chains' blocks at all. What it can delay is their \emph{placement} in
the total ordering, at offsets beyond its own chain's frontier, for
roughly a view.\footnote{The one-view wait is robust, in that it does not rely
on the goodwill of subsequent leaders. A valid proposal must anchor
on a V-QC, and anchoring is all or nothing: the ordering extracted
extends \emph{everything} below the V-QC's safe-to-extend tips,
which carry every honest chain's frontier. Even a Byzantine or
censoring later leader therefore places the deferred blocks the
moment it finalises anything at all; its only alternative is to
finalise nothing and sacrifice its view, after which the same choice
confronts the leader that follows.} This is the precise content of the qualifier `significantly' in
the guiding principle: other chains' blocks are finalised immediately
and totally ordered at worst a view later, while the faulty processor's own
chain is the only one whose finalisation it can hold up. Nor is the
one-view wait costly by the standards of competing protocols, since,
as we shall see in Section~\ref{sec:intuition-raptr}, it even completes
sooner, in the common case, than Raptr's first finalisation.

\subsection{Extensions}\label{sec:extensions}

We now add the second mechanism, extension voting. The reader should
take two things from this subsection. First, extensions remove a
message delay from the critical path, since a transaction block need
only reach the \emph{voters} before they vote, rather than the
\emph{leader} before it proposes. Second, they are what provides the
block inclusion guarantee of Section~\ref{sec:intuition-outline},
our form of censorship resistance. The mechanism itself is simple,
and the one subtlety lies in how extension blocks enter the total
ordering.

\paragraph{Racing the vote event.} A block disseminated just before the
leader formed its proposal would, so far, wait a full view before a
leader can reference it. Multimmit instead lets votes carry
\emph{extensions}, meaning up to $e$ fresh blocks (the
\emph{extension bound}, e.g.\ $e=2$) that the voter has DA-voted
above its reported position on each chain. Extensions are anchored at the voter's reported
position rather than at the proposed tip, which makes the frontier
\emph{leader-proof}: even if a leader's entries for a chain are junk, correct voters report a lower position and attach the
chain's true recent blocks from there. The effect on latency is that,
once a block has reached the voters, it is picked up by whichever
votes are cast next, whatever proposal those votes respond to, and (in
the good case) its transactions are ordered as soon as that single
round of voting completes. Figure~\ref{fig:racing} depicts the leg
this removes.

\begin{figure}[t]
\centering
\begin{tikzpicture}[scale=1.0, every node/.style={font=\small}]
  \draw[->] (0,-0.1) -- (8.0,-0.1);
  \foreach \x/\lab in {0/$t$, 2.4/$t+\delta$, 4.8/$t+2\delta$, 7.2/$t+3\delta$}{
    \draw (\x,-0.02) -- (\x,-0.18);
    \node[below] at (\x,-0.2) {\footnotesize \lab};
  }
  \node[anchor=east] at (-0.3,1.7) {\footnotesize with extension votes};
  \draw[->,thick] (0,1.7) -- (2.4,1.7);
  \node[above] at (1.2,1.72) {\footnotesize block $\to$ voters};
  \draw[->,thick] (2.4,1.7) -- (4.8,1.7);
  \node[above] at (3.6,1.72) {\footnotesize votes $\to$ all};
  \fill (4.8,1.7) circle (0.08);
  \node[anchor=west] at (4.95,1.7) {\footnotesize ordered};
  \node[anchor=east] at (-0.3,0.7) {\footnotesize via the leader};
  \draw[->,thick] (0,0.7) -- (2.4,0.7);
  \node[above] at (1.2,0.72) {\footnotesize block $\to$ leader};
  \draw[->,thick] (2.4,0.7) -- (4.8,0.7);
  \node[above] at (3.6,0.72) {\footnotesize proposal $\to$ all};
  \draw[->,thick] (4.8,0.7) -- (7.2,0.7);
  \node[above] at (6.0,0.72) {\footnotesize votes $\to$ all};
  \fill (7.2,0.7) circle (0.08);
  \node[anchor=west] at (7.35,0.7) {\footnotesize ordered};
\end{tikzpicture}
\caption{Best-case ordering of a transaction block disseminated at
time $t$, with and without extension votes. Extensions remove the
proposal leg, since the block need only reach the voters before they
vote, not the leader before it proposes. (On average each route also
incurs a wait for the next vote or proposal event, and the averages
appear in Section~\ref{sec:intuition-raptr}.)}
\label{fig:racing}
\end{figure}

\paragraph{Ordering extension blocks.} Recall how tips were
extracted in the previous subsection. On each chain, we took the
greatest proposed position endorsed by sufficiently many votes. That
rule is sound because the chain proposal names one specific block at
each position, so that all votes on a chain refer to blocks along a
single path, and `the greatest position endorsed by $3f+1$ votes'
unambiguously identifies a block. Extension blocks enjoy no such
pinning. They lie above anything the proposal names, and an
equivocating producer may send incompatible blocks to different
voters, so that two votes reaching the same height above the proposed
tip may be endorsing incompatible blocks. Endorsement must then be counted
branch by branch. The two extraction rules of the previous subsection
extend accordingly, and the asymmetry between them now widens. It is
important to keep the two cases separate.
\begin{itemize}[itemsep=2pt]
  \item \emph{Safe-to-extend tips (V-notarisations).} By the
    chain's \emph{position block} we mean the block selected by the
    rank rule in Section \ref{PRV} above: the block at the greatest proposed position
    that at least $f+1$ of the notarisation's votes endorse. With extension votes incorporated, the safe-to-extend tip
    becomes the deepest block, extending the chain's position block,
    that at least $2f+1$ of the notarisation's votes endorse (the
    position block standing if there is none). This carry needs no further
    condition, and extraction from V-QCs remains exactly as robust as
    before. In the common case an honest chain's fresh blocks are
    endorsed by every correct vote, hence by at least $2f+1$ votes of
    any V-notarisation's tally, and so enter its safe-to-extend tips
    \emph{whatever the leader proposed} --- junk entries included,
    since extensions are anchored at voters' reported positions. This
    rule is what keeps the frontier leader-proof.
  \item \emph{Finalised tips (L-notarisations).} A place in the total
    order requires much more, and at $n=5f+1$ the requirements are
    forced (Section~\ref{sec:verification} proves this, and
    determines the exact threshold for every $n$): \emph{all}
    $n-f$ votes of the L-QC must endorse the extension block, and the
    chain's finalised position must be the proposed tip itself, so
    that the extension sits above a fully endorsed proposal path.
    Note, however, what unanimity does and does not require. The
    votes need not agree on which extension blocks exist: each of
    the $n-f$ votes must merely \emph{endorse} the block in
    question, by reporting an endorsed tip at or beyond it, and
    different votes may endorse different continuations above it.
    Nor is the quorum fixed, since any $n-f$ votes form an L-QC, and
    a processor may select among the votes it has received (see
    `Unanimity is less demanding than it appears' below).
\end{itemize}
The message to carry forward is the positive one. If a transaction
block produced by a correct processor reaches every correct
processor before it votes, and lies within $e$ of the previous
tips, then every correct vote endorses it, and the pool of correct
votes finalises it within the view, whatever the $f$ faulty voters
do. (Placement in the total order follows at once, except when the sweep queues the
block behind a lagging chain's slot, in which case it completes via
the next finalised leader block.) This requires the chain's
proposed entries to be genuine, which is automatic under a correct
leader; junk entries are a Byzantine leader's one lever here, and
Sections~\ref{sec:achieves} and~\ref{sec:verification} quantify
what survives them: the block still gathers at least $n-2f$
endorsing votes in \emph{every} L-QC of the view, however the
leader and the faulty voters behave, which settles its membership
of the eventual ledger and orders it at latest via the next
finalised leader block (the block inclusion theorem).

\paragraph{Unanimity is less demanding than it appears.} First,
observe that the finality requirement for L-notarisations is no stronger in kind than the quorum
requirement of any BFT protocol. In an execution with $f$ faulty
processors, a quorum of $n-f$ consists of exactly the correct
processors, so PBFT, too, finalises its leader's block only when
\emph{every} correct processor votes for it. The extension rule
asks for the same thing per transaction block rather than per
leader proposal: the block must have reached the voters. Three
further observations soften the requirement.
\begin{itemize}[itemsep=2pt]
  \item What is required is unanimity within a \emph{single L-QC}, not
    across the network, since an L-QC is any $n-f$ votes for the block, and
    a processor extracting the ordering may consider whichever such
    sets its received votes provide.
  \item In the common case, the $n-f$ \emph{correct} votes agree on
    every honest chain's fresh blocks, each of which reached all
    correct processors before they voted, so an L-QC with the required
    unanimity is always available, whatever the $f$ faulty voters do.
  \item No search through candidate L-QCs is needed. As we show in
    Section~\ref{sec:spec}, \emph{everything finalised or ordered by
    any assemblable L-QC can be read off directly from the pool of
    received votes}, chain by chain (for extensions, the relevant
    quantity is simply the deepest block on the chain that $n-f$ of
    the received votes endorse), with the pool's conclusions always
    at least as strong as, and compatible with, those of every
    assemblable L-QC. This observation is arguably what makes
    unanimity workable as a threshold: the protocol never asks
    whether some particular certificate achieves unanimity, but only
    what the received votes, taken together, endorse. Since the
    orderings extracted from different L-QCs for the same view are
    always compatible, a processor may also safely emit more as
    further votes arrive.
\end{itemize}

\subsection{What is achieved, and the comparison with
Raptr}\label{sec:intuition-raptr}\label{sec:achieves}

With the mechanisms now in place, we collect what the protocol
guarantees, case by case, making the comparison with Raptr
quantitative along the way. Precise statements and proofs appear in
Section~\ref{sec:verification}. Recall from
Section~\ref{sec:intuition-outline} the distinction between the
blocks a QC finalises and the portion of the total ordering
extractable from it immediately. Two guarantees hold
unconditionally in all cases below: consistency, and the damage
localisation of Section~\ref{sec:intuition-ordering}. We say a
Byzantine leader is \emph{greedy} if it seeks the finalisation of
its leader block (say, for block rewards) and avoids leaving
evidence of provable misbehaviour (say, to avoid slashing). None of
our results require this assumption. Greedy leaders are arguably
the realistic censorship adversary, however, and the guarantees
against them are significantly stronger than those against an
adversary with nothing to lose.

\begin{itemize}[itemsep=3pt]
  \item \emph{The good case} (synchrony, all processors correct).
    Every view finalises, and a single L-QC orders \emph{everything}
    endorsed by all correct voters, fresh extensions included.
    Concretely, consider any transaction block disseminated at least
    $\delta$ before the first correct processor votes and lying
    within $e$ of the proposed tips. Every vote endorses the block,
    and the view's L-QC places it in the total ordering. A block
    still in flight at the vote event waits one view. A transaction
    block is thus ordered $3\delta$ after dissemination on average,
    and $2\delta$ at best, which is optimal.\footnote{As in
    Section~\ref{sec:intro}, latency is measured from the
    dissemination of the transaction block itself; measured from the
    leader's proposal, the figure would simply be $2\delta$.}
    Consensus objects total tens of kilobytes per view
    (Section~\ref{sec:sizes}), independent of transaction volume.
    Now for the comparison. In Raptr, a batch disseminated at time
    $t$ reaches the leader by $t+\delta$, and waits on average
    $\delta$ for inclusion in the next leader block, leader blocks
    being produced every $2\delta$. The batch is ordered after that
    block is sent and two rounds of voting complete, at $t+5\delta$
    on average. In Multimmit, the transaction block reaches all
    voters by $t+\delta$, and waits on average $\delta$ for the
    next vote event. A single round of voting then completes the
    ordering, giving the figures above. Of the $2\delta$ saved on
    average, one $\delta$ is bought by the assumption $5f+1\le n$
    (one round of voting rather than two), and one by extension
    votes (the block need not travel to the leader and back out
    inside a proposal).
  \item \emph{The common case} (synchrony, correct leader, up to $f$
    other processors faulty). From its pool of received votes, in
    the view itself, $v$ say, every correct processor finalises each honest
    chain through its proposed tip and the extensions endorsed by
    every correct voter, whatever the $f$ faulty voters do. That is,
    an honest chain finalises  everything that reached the
    correct voters in time, up to $e$ blocks beyond the proposed
    tips. A faulty chain finalises as far as $3f+1$ votes endorse.
    The sweep that assembles the finalised tips into the total
    ordering may halt at a lagging faulty chain's slot. In that bad case, the deferred
    tail is then ordered, via  view $v$'s V-QC, by the next view that
    finalises a block, immediately if the next leader is correct or
    merely greedy. An honest chain's block disseminated at time $t$
    is thus finalised by $t+3\delta$ on average. The block is
    placed in the total ordering at that same moment or, if the
    sweep queues the block behind a lagging faulty chain's slot, at
    most $2\delta$ later, with the next view's finalisation.
    Raptr's figures, in contrast, degrade twice over. First, its
    views (which the Raptr paper calls rounds) stretch from
    $2\delta$ to $3\delta$. The early view exit requires a quorum
    certificate on the \emph{full} prefix of the proposal, and so a
    quorum of voters each holding every referenced batch. A single
    withholding producer per view ensures that no processor other
    than the leader can be such a voter, and views then advance only
    upon the certificate formed by the second round of voting. With
    leader blocks every $3\delta$, the average wait for inclusion
    grows from $\delta$ to $1.5\delta$. For a batch disseminated
    at time $t$, a proposal including the batch is finalised at
    $t+5.5\delta$ on average. Finalised here means only that the
    \emph{proposal} commits. Whether the batch itself is thereby
    ordered is the second degradation: by easy spoiling, the
    committed prefix may stop short of the batch, ordering little or
    nothing beyond included certificates. Multimmit's placement
    guarantee, by contrast, holds however the next leader behaves.
    The next leader block must anchor on a view $v$ V-QC, whose
    safe-to-extend tips carry every honest chain's frontier. If that
    block is finalised at all, Byzantine or not, it places the
    transaction blocks in question in the total ordering. Easy
    spoiling has no analogue, since $f$ faulty producers delay only
    their own chains' contributions. In summary, Multimmit achieves
    a total ordering that faulty producers cannot influence
    \emph{before} Raptr achieves the ordering that they can.
  \item \emph{A greedy Byzantine leader.} The two conditions in the
    definition of greed give two corresponding guarantees. Because
    the leader seeks finalisation, the block inclusion theorem
    applies: \emph{either the view finalises nothing, or every fresh
    honest block that reached the voters is membership-finalised in
    the view} (no competing block ever certifiable) \emph{and
    ordered at latest by the next finalised leader block}. A greedy
    leader is therefore unable to finalise its block while
    censoring. The guarantee of ordering does not depend on the
    behaviour of subsequent leaders, since \emph{any} later
    finalised leader block orders the blocks in question. A greedy
    leader finalises during synchrony, so the wait is a single view.
    Because the leader also avoids leaving attributable evidence,
    junk proposal entries are unavailable, as is equivocation. (Junk
    entries are the \emph{leader's} one lever against an honest
    chain's in-view order finality, and are unanswerable under a
    challenge to exhibit the signed headers; the accountability
    paragraph of Section~\ref{sec:verification} describes which
    misbehaviours are provable and which merely attributable.) The
    leader's remaining option is omission, i.e.\ proposing less than
    it received, which is indistinguishable from slowness. Extension
    votes bound the resulting damage. Blocks within the extension
    bound $e$ of the previous tips are finalised in-view whatever
    the leader proposes, and placed in the total ordering at latest
    via the next finalised leader block; only the excess on
    under-proposed chains waits for the next view.
  \item \emph{An arbitrary Byzantine leader.} A leader with nothing
    to lose can suppress its entire view, in which case correct
    processors nullify within bounded time, nothing is finalised by
    anyone, and the view is skipped. (The backup-proposal extension
    of Section~\ref{sec:optimisations} ensures that silence no
    longer suffices for this, whenever the view's backup is correct,
    at the price of one further proposal per view.)
\end{itemize}

\subsection{Sizes: some concrete numbers}\label{sec:sizes}

We have promised at several points that the objects the protocol
manipulates remain small. Here are rough numbers. Suppose hashes
are 32 bytes, signatures and signature shares (e.g.\ BLS) are 48
bytes, and take $d=3$ and $e=2$. We consider $n=100$ validators, and,
recalling from the start of this section that the number of block
producers may differ from the number of validators, we let $K$ denote
the number of chains and consider both $K=100$ and $K=40$ (the latter
being representative of intended applications). Per-chain vectors
scale with $K$, while quorum-related components scale with $n$.
\begin{itemize}[itemsep=0pt]
  \item \emph{DA-certificates} are constant size: a block header
    ($\approx 75$ bytes) plus one threshold signature, so $\approx
    125$ bytes.
  \item \emph{Chain proposals} contain, per chain, an anchor (at worst
    a DA-certificate) and up to $d$ payload hashes: under $230$ bytes
    per chain, so $\approx 23$KB at $K=100$ and $\approx 9$KB at
    $K=40$.
  \item \emph{Votes} contain a view number, a block hash and one
    signature ($\approx 90$ bytes), plus a position per chain ($2$
    bits each), plus extensions: at most $e$ hashes per chain, and
    only for chains on which the voter holds blocks beyond the
    proposal. Even with every chain extended by $e$ blocks, this is
    under $7$KB at $K=100$ and under $3$KB at $K=40$; in light load,
    votes are a few hundred bytes.
  \item \emph{L-QCs and V-QCs}, in the good case, comprise the
    proposal, one reference extension vector, a voter bitmap
    ($n/8\approx 13$ bytes) and a \emph{single} aggregate signature:
    $\approx 30$KB at $K=100$ and $\approx 12$KB at $K=40$. Outside
    the good case, each deviating vote adds its differences from the
    reference. In the worst case, with every quorum vote deviating
    from every other, QCs grow to roughly $0.5$MB at $K=100$ and
    $0.2$MB at $K=40$ (the
    price of per-chain vote vectors); $f$ Byzantine voters alone cause
    at most a quarter of that, benign disagreement concentrates on a
    few chains, and the aggregate signature remains a single group
    element throughout.
  \item \emph{Leader blocks} comprise a hash reference to a V-QC and
    a fresh proposal: $\approx 23$KB at $K=100$ and $\approx 9$KB at
    $K=40$. (The referenced V-QC is not included, travelling
    separately.)
\end{itemize}
Two points deserve emphasis.
\begin{itemize}[itemsep=2pt]
  \item All of these sizes are independent of transaction volume: the
    consensus layer never carries transaction data, which travels only
    once, from each producer to all, in the chain layer.
  \item The totals are modest in comparison with what they
    coordinate. Precisely
    because transaction data travels only once, the protocol can
    (modulo execution costs) finalise transactions at roughly the rate
    at which validators can receive data. With views lasting around
    $2\delta$, say $100$--$200$ms, a single view at line rate then
    orders tens of megabytes of transactions on a commodity gigabit
    link ($125$MB/s), and hundreds of megabytes at ten gigabits,
    against consensus objects totalling tens of kilobytes either way.
\end{itemize}

\section{The formal specification}\label{sec:spec}

In what follows, we suppose that all messages are signed by the
sender. We say `disseminate' to mean `send to all processors'. When a
correct processor is instructed to send a message to itself, it
regards that message as immediately received. The protocol is
parameterised by two small constants, discussed in
Section~\ref{sec:intuition}: the \emph{pipelining depth}
$d\in\mathbb{N}_{\ge 1}$ and the \emph{extension bound}
$e\in\mathbb{N}_{\ge 0}$. For concreteness, one may take $d=3$ and
$e=2$.

The pseudocode uses a number of message types, local variables,
functions and procedures, detailed below. Message types and local
variables are summarised in Tables~\ref{tab:messages}
and~\ref{tab:variables}, and the functions extracting tips and
orderings in Table~\ref{tab:extraction}; the pseudocode appears in
Algorithms~\ref{alg:chain} and~\ref{alg:consensus}.

\paragraph{The local variable $\State$.} This variable is maintained
locally by each processor $p_i$ and stores all messages received. It is
considered to be automatically updated, i.e., we do not give explicit
instructions in the pseudocode updating $\State$. Several of the
objects defined below (DA-certificates, nullifications, and QCs) are
formed from sets of constituent messages: any processor may form such
an object upon receiving its constituents.  Accordingly, $\State$ is
regarded as containing such an object whenever it contains an object of
which it is an entry, or messages from which it may be formed; this
too happens automatically, with no explicit instructions in the
pseudocode. Initially, $\State$ contains only the `genesis objects',
defined below: the genesis blocks $\genesis{1},\dots,\genesis{n}$ and a
DA-certificate for each, and the genesis leader block $\lgenesis$
together with the V-QC $Q_{\mathrm{gen}}$ and an L-QC for $\lgenesis$.

\subsection{The chain layer}

\paragraph{Transaction blocks.} For each $i\in\{1,\dots,n\}$, the
\emph{genesis block of chain $i$} is the tuple
$\genesis{i}:=(i,0,\lambda,\lambda)$, where $\lambda$ denotes the empty
sequence. A transaction block other than a genesis block is a tuple
\[
b=(i,h,x,\Tr),
\]
signed by $p_i$, where:
\begin{itemize}[itemsep=0pt]
  \item $i\in\{1,\dots,n\}$ (thought of as the chain to which $b$ belongs);
  \item $h\in\mathbb{N}_{\ge 1}$ (thought of as the height of $b$);
  \item $x$ is a hash value (used to specify $b$'s parent), and;
  \item $\Tr$ is a sequence of distinct transactions.
\end{itemize}
We write $b.\mathrm{chain}$, $b.\mathrm{height}$, $b.\mathrm{par}$ and
$b.\Tr$ to denote the corresponding entries of $b$. The \emph{header}
of $b$ is $\hd(b):=(i,h,x,H(\Tr))$; the required signature is
$p_i$'s signature on $\hd(b)$, and we define
$H(b):=H(\hd(b))$.\footnote{The signature is thus verifiable, and
equivocation provable, from headers alone: two blocks signed by $p_i$
at the same height, or a signed block at height $h+1$ whose $x$
differs from the hash of $p_i$'s signed block at height $h$, constitute
constant-size evidence of misbehaviour. Note that the signature is
excluded from $H(b)$, so that the identifier of a block is a function
of its content only. In implementation, $H(\Tr)$ may be taken to be the
root of a Merkle tree over $\Tr$.} The \emph{parent} of $b$ is the
block whose hash is $b.\mathrm{par}$. The parent of a block of height
$h$ is required to be of height $h-1$ on the same chain. The
\emph{ancestors} of $b$ are $b$ and all ancestors of its parent (a
genesis block has only itself as ancestor). Two transaction blocks on
the same chain are \emph{incompatible} if neither is an ancestor of the
other. A block \emph{extends} another if the latter is an ancestor of
the former.

\paragraph{DA-votes.} A \emph{DA-vote} (`data availability vote') for the transaction block $b$ by
$p_j$ is a message $(\da,\hd(b),\rho)$, where $\rho$ is $p_j$'s
$(n-2f)$-threshold share on $(\da,\hd(b))$. (A DA-vote is thus
self-describing, specifying the chain, height, and hash of the block
it votes for.)

\paragraph{DA-certificates.} A \emph{DA-certificate} for the
transaction block $b$ is a pair $(\hd(b),\tau)$, where $\tau$ is the
$(n-2f)$-threshold signature on $(\da,\hd(b))$, formed from the shares
of $n-2f$ DA-votes for $b$ by different processors. A DA-certificate is thus of
constant size, and specifies the chain, height, parent, and hash of the
block it certifies, and is interpretable (and verifiable) by processors
that have not received $b$ itself. If $c$ is a DA-certificate for $b$,
we write $c.\mathrm{chain}$, $c.\mathrm{height}$ and $c.\mathrm{hash}$
to denote $b.\mathrm{chain}$, $b.\mathrm{height}$ and $H(b)$
respectively. Each genesis block $\genesis{i}$ is considered
DA-certified at the start of the protocol execution.

\paragraph{When $p_j$ may DA-vote for $b$.} Suppose $p_j$ has received
a correctly signed transaction block $b=(i,h,x,\Tr)$.\footnote{By
any route: since blocks are signed by their producers, a block received
by relay is as good as one received directly.} Let $h'$ be the greatest
height $<h$ such that $\State$ (as locally defined for $p_j$) contains a DA-certificate $c$ for some
block of height $h'$ on chain $i$. Then $p_j$ \emph{may DA-vote for
$b$} if:
\begin{enumerate}[label=(\roman*),itemsep=0pt]
  \item $p_j$ has not previously sent a DA-vote for any block of height
    $h$ on chain $i$;
  \item $h-h'\le d$;
  \item $\State$ contains blocks $b_{h'+1},\dots,b_{h}=b$ on chain $i$,
    with $b_{h'+1}.\mathrm{par}=c.\mathrm{hash}$ and
    $b_{h''}.\mathrm{par}=H(b_{h''-1})$ for each $h''\in(h'+1,h]$, and
    $p_j$ has sent a DA-vote for $b_{h''}$ for each $h''\in(h',h)$.
\end{enumerate}
Condition (i) ensures that correct processors send at most one DA-vote
per (chain, height) pair; line 4 of Algorithm~\ref{alg:chain} is
understood to process eligible blocks in increasing order of height,
each DA-vote sent counting as already sent for the conditions above
within the same timeslot, so that a processor receiving a path of new
blocks DA-votes them in sequence. Together with condition (iii),
condition (i) ensures that all DA-certified blocks for a given chain
are compatible with each other (Lemma~\ref{lem:cert-compat}). Condition (ii) bounds the number of uncertified blocks at the
head of any chain. DA-votes for $b$ are sent to $p_{b.\mathrm{chain}}$
(rather than disseminated), who forms and disseminates the
DA-certificate.

\paragraph{The predicate $\mathsf{Ready}(\State)$.} This predicate is used by $p_i$ to
decide whether to produce the next block on its own chain. In what
follows, $b$ denotes the greatest block $p_i$ has produced on chain
$i$ (or $\genesis{i}$), $h^{c}$ denotes the greatest height at which
$\State$ contains a DA-certificate for chain $i$, and
$\mathsf{pending}(\State)$ denotes the set of transactions received by
$p_i$ and not included in $b$ or any of its ancestors. The predicate
$\mathsf{Ready}(\State)$ holds if:
\begin{enumerate}[label=(\roman*),itemsep=0pt]
  \item (\emph{window}) $b.\mathrm{height}+1-h^{c}\le d$;
  \item (\emph{work}) $\mathsf{pending}(\State)\neq\emptyset$, and;
  \item (\emph{pacing}) $|\mathsf{pending}(\State)|\ge B$, or at least
    $\theta$ time has passed since $p_i$ produced $b$,
\end{enumerate}
where the \emph{batch bound} $B\in\mathbb{N}_{\ge 1}$ and the
\emph{production interval} $\theta\ge 0$ are further protocol
parameters. Condition (i) is required for the analysis, while conditions
(ii) and (iii) are policy, and the analysis requires of them only that,
whenever $\mathsf{pending}(\State)$ remains non-empty, the pacing
condition is satisfied within bounded time.\footnote{Setting $B=1$ and
$\theta=0$ gives eager production: minimal inclusion latency, but under
low load each straggling transaction costs a block of its own, namely a
header, a signature, a round of DA-voting, and a slot of the
$d$-window. Larger $B$ (naturally denominated in bytes rather than
transaction count) amortises this per-block overhead under high load,
while $\theta$ bounds the delay a lone transaction can suffer under low
load. The familiar `batch full or timer expired' policy is exactly
(iii). Under sustained load, production is in any case paced by the
window: at most $d$ blocks per certificate round-trip. Values of
$\theta$ commensurate with the view cadence are natural, since a chain
contributes at most $d$ new blocks to any one leader block.}

\paragraph{The procedure $\ProduceNext()$.} This procedure is executed
by $p_i$ to extend its own chain, and may be executed only when
$\mathsf{Ready}(\State)$ holds. Recall from the definition of
$\mathsf{Ready}(\State)$ that $b$ denotes the greatest block $p_i$
has produced on chain $i$ (or $\genesis{i}$). To execute the
procedure, $p_i$:
\begin{itemize}[itemsep=0pt]
  \item Forms a sequence $\Tr$ of distinct transactions comprising
    all of $\mathsf{pending}(\State)$,\footnote{Any inclusion policy
    would do, provided every transaction entering
    $\mathsf{pending}(\State)$ is included by some later execution
    of $\ProduceNext$. Together with the
    requirement, stated with $\mathsf{Ready}(\State)$, that the
    pacing condition be satisfied within bounded time while
    transactions are pending, this is exactly what Step 1 of the
    proof of Theorem~\ref{thm:liveness} uses.}
    and;
  \item Disseminates the block $(i,\,b.\mathrm{height}+1,\,H(b),\,\Tr)$.
\end{itemize}

\subsection{The consensus layer}

\paragraph{The function $\lead(v)$.} The value $\lead(v)$ specifies the
leader for view $v$. To be concrete, we set $\lead(v):=p_{j+1}$, where
$j=v \bmod n$.

\paragraph{Chain proposals.} A \emph{chain proposal} is a function $C$
mapping each $i\in\{1,\dots,n\}$ to a pair $C(i)=(a_i,x_i)$, where:
\begin{itemize}[itemsep=0pt]
  \item $a_i$ (the \emph{anchor} for chain $i$) is either a pair
    $(y,h)$ consisting of a hash value $y$ and a height $h$, or a
    DA-certificate for chain $i$, and;
  \item $x_i=(x_i^1,\dots,x_i^{m_i})$ is a possibly empty sequence
    of $m_i\le d$ hash
    values, called \emph{payloads} (thought of as the values $H(\Tr)$
    of a sequence of consecutive blocks on chain $i$, the first
    extending the anchor).
\end{itemize}
The \emph{base} for chain $i$ is the block specified by the anchor. Note that the anchor always specifies the base's
hash and height explicitly, so that a chain proposal is interpretable in
isolation.\footnote{This is deliberate. The alternative, in which an
anchor defaults to the corresponding coordinate of $\Tips(Q)$ for the
accompanying V-QC $Q$, would make the value $\Tips$ (defined below)
a recursion over the entire history of V-QCs: position-$0$ votes would
resolve to a block identified only by the \emph{previous} V-QC, and so
on back to genesis. Explicit anchors keep every V-QC interpretable in
isolation, keep proposal validity locally checkable, and mean that a
single leader block tells a recovering processor exactly which blocks
to fetch on every chain.} For $0\le k\le m_i$, we write $\Prop(C,i,k)$
for the block specified at \emph{position} $k$ on chain $i$: position
$0$ is the base and, recursively, position $k\ge 1$ is the block with
header $(i,\,h_0+k,\,H(\Prop(C,i,k-1)),\,x_i^k)$, where $h_0$ is the
base's height. That is, entries specify blocks by payload alone, with
chain, height and parent supplied by context. The specified blocks thus form
a chain \emph{by construction}, and the producer's signature on any of
them is verifiable against the reconstructed header.\footnote{Payload
entries do not certify that the specified blocks exist. A leader may
propose junk payloads, in which case (correct) voters simply report
lower positions, with extensions (below) anchored at their reported
positions. Junk entries thus capture neither votes nor extensions,
though they do render the chain unsettled (in the sense defined under
`The total ordering' below), deferring emission of the slots the sweep
visits after it by a view. A Byzantine leader stalls part of its own
view's emission, nothing more.} The \emph{proposed tip} for chain $i$ is
$\Prop(C,i,m_i)$; when $m_i=0$, as is natural for an idle chain, the
proposed tip is the base itself, the proposal making no new entries
for the chain. Whether an anchor of the form $(y,h)$ is
\emph{correct} (names the tip specified by the accompanying V-QC) is a
validity condition on leader blocks, defined below.

\paragraph{Leader blocks.} The \emph{genesis leader block} is the tuple
$\lgenesis:=(0,\lambda,\lambda)$. A leader block other than
$\lgenesis$ is a tuple
\[
\ell=(v,q,C),
\]
signed by $\lead(v)$, where $v\in\mathbb{N}_{\ge 1}$ (thought of as the
view corresponding to $\ell$), $q$ is a hash value, and $C$ is a chain
proposal. The value $q$ is required to be the hash $H(Q)$ of a V-QC
$Q$ (defined below) for an earlier view. We then say that $\ell$
\emph{references} $Q$, which is thought of as identifying $\ell$'s
parent, and write $\ell=(v,H(Q),C)$. Note that the V-QC itself is not
included in the block. V-QCs are forwarded to all upon first
receipt, the leader's anchor among them (see $\SelectAnchor$
below), so processors obtain them separately, and leader blocks stay
small.\footnote{This also suits implementation: the wire format of
leader blocks matches Minimmit's, with the parent reference a single
hash value.} We write $\ell.\mathrm{view}$, $\ell.\mathrm{q}$ and
$\ell.\mathrm{C}$ for the corresponding entries, and define
$H(\ell):=H(v,q,H(C))$. As for transaction blocks, the leader's
signature is excluded from $H(\ell)$, so possession of the tuple
$(v,q,C)$ itself thus determines $H(\ell)$, and so suffices to
verify signatures on votes for $\ell$ (defined below). The QCs below
accordingly include leader blocks directly (without the leader's
signature, which they do not need).

\paragraph{Votes.} A \emph{vote} for the leader block $\ell=(v,H(Q),C)$ by
$p_j$ is a message of the form
\[
(\vote,\,v,\,H(\ell),\,\pi,\,\varepsilon,\,s),
\]
where $\pi$ maps each chain $i$ to a position
$\pi(i)\in\{0,\dots,m_i\}$, $\varepsilon$ maps each chain $i$ to a
sequence $\varepsilon(i)$ of at most $e$ payloads, and $s$ is $p_j$'s
signature on $(\vote,v,H(\ell),\pi,\varepsilon)$. Since the single
signature covers the entire content, no party can exhibit a vote while
withholding or altering any part of $(\pi,\varepsilon)$.\footnote{For
the same reason, the QCs below never \emph{abridge} a vote: although
they record votes succinctly (most votes are recorded implicitly, as
`standard'), the record always determines each constituent vote's full
$(\pi,\varepsilon)$. This is important for the extraction rules below:
were votes to carry a separately verifiable `participated in view $v$'
component, usable in a QC without the vote's content, the assembler of
a V-QC could conceal the positions of correct voters, and the
quorum-intersection argument underlying $\Tips$ would fail.} The intended semantics, enforced for correct voters by the
voting rules below: $p_j$ has DA-voted for $\Prop(C,i,k)$ for every
$k\in[1,\pi(i)]$, and $\varepsilon(i)$ specifies (by payloads, with
chain, heights and parents supplied by context, as for proposal
entries) a sequence of consecutive blocks on chain $i$ extending
$\Prop(C,i,\pi(i))$, each of which $p_j$ has DA-voted for. We refer to
$\varepsilon$ as the vote's \emph{extensions}; note that extensions are
anchored at the vote's reported position, not at the proposed
tip.\footnote{This is what makes the frontier leader-proof: if the
leader's entries for chain $i$ are junk or stale, correct voters report
a lower position and attach the chain's true recent blocks as
extensions from there.}

The \emph{endorsed tip} of a vote, for chain $i$, is the last block
specified by $\varepsilon(i)$ if $\varepsilon(i)\neq\lambda$, and
$\Prop(C,i,\pi(i))$ otherwise. We say a vote \emph{endorses} a
transaction block $b$ if $b$ is an ancestor of, or equal to, the vote's
endorsed tip on $b$'s chain. (Thus a vote endorses every block on
the path it endorses, however that path is split between positions and
extensions.) The extraction rules below are stated in terms of this
relation.

\paragraph{Standard votes, deviations, and tallies.} Fix a leader block
$\ell$ and a \emph{reference extension vector} $E^*$ (a value of the
same type as $\varepsilon$). Recalling that $m_i$ denotes the number
of payload entries that $\ell$'s chain proposal makes for chain $i$
(so that $\pi(i)=m_i$ expresses full endorsement of the proposed tip), a
vote for $\ell$ is \emph{standard} (w.r.t.\ $E^*$) if $\pi(i)=m_i$
for every $i$ and $\varepsilon=E^*$.
The \emph{deviation record} of a vote is the list of pairs $(i,\pi(i))$
for those chains with $\pi(i)<m_i$, together with $\varepsilon$ if
$\varepsilon\neq E^*$. A \emph{tally} for $\ell$ is a tuple
$T=(E^*,P,D)$, where $P$ is a set of processor indices and $D$ assigns
deviation records to some subset of $P$. The tally describes one vote
for $\ell$ per processor in $P$: standard w.r.t.\ $E^*$ for
processors outside the domain of $D$, and otherwise as given by $D$.
The QCs below consist of a leader block, a tally, and an aggregate
signature (plus, for V-QCs, records of the quorum members not voting
for $\ell$). In the good case (synchrony, all processors correct, no block
still in flight at the vote event) all
votes are standard w.r.t.\ a single $E^*$, so $D$ is empty and all
constituent signatures are on a single common message: verification
requires one pairing computation. Each deviating vote adds its
deviation record and one message to the aggregate, and the worst case is
bounded by the quorum size.\footnote{Deviation records may in turn be
compressed, e.g.\ by recording only the coordinates at which a
deviating extension vector differs from $E^*$, and by sharing
identical records across voters; we do not specify this further. A
Raptr-style scheme of per-position keys, under which position
deviations would not fragment the aggregate, does not pay here:
extensions cannot be keyed, ranging as they do over arbitrary hash
values, and extensions concern precisely the blocks racing the vote
event, which is where benign disagreement concentrates.}

\paragraph{The functions $\Positions(\State,\ell)$ and
$\Extensions(\State,\ell)$.} Used by $p_j$ to form its vote for
$\ell=(v,H(Q),C)$. $\Positions(\State,\ell)$ maps each $i$ to the greatest
$k\le m_i$ such that $p_j$ has DA-voted for $\Prop(C,i,k')$ for all
$k'\in[1,k]$. $\Extensions(\State,\ell)$ maps each $i$ to the longest
sequence of payloads $(H(b^1.\Tr),\dots,H(b^{m}.\Tr))$ with $m\le e$
such that $b^1,\dots,b^m$ are blocks on chain $i$ that $p_j$ has
DA-voted for, with $b^1$'s parent the block
$\Prop(C,i,\Positions(\State,\ell)(i))$ and $b^{k}$ the parent of
$b^{k+1}$ for each $k<m$ (and to $\lambda$ if there is no such
sequence).

\paragraph{Novote and nullify messages.} For $v\in\mathbb{N}_{\ge 1}$,
a \emph{novote($v$)} message by $p_j$ is a message $(\novote,v,s)$
with $s$ $p_j$'s ordinary signature on $(\novote,v)$, and a
\emph{nullify($v$)} message by $p_j$ is a message $(\nullify,v,\rho)$
with $\rho$ $p_j$'s $(2f+1)$-threshold share on $(\nullify,v)$. A
correct processor that times out in view $v$ without voting sends
both, while a correct processor that nullifies view $v$ \emph{after} voting
(see Algorithm~\ref{alg:consensus}, lines 16--19) sends only the
latter.
Correct processors never send both a vote for a view $v$ leader block
and a novote($v$) message.\footnote{This property is what separates V-QCs (which certify how far each chain may safely be
extended) from mere view progression, and the analysis relies on it. It
is the reason novote and nullify are distinct message types, and the
reason the former is an ordinary signature (novotes appear individually
attributed inside V-QCs) while the latter is a threshold share
(nullifications need certify nothing about who nullified).}

\paragraph{Nullifications.} A \emph{nullification} for view $v$ is a
message $(\nullification,v,\tau)$, where $\tau$ is the
$(2f+1)$-threshold signature on $(\nullify,v)$, formed from the shares
of nullify($v$) messages by $2f+1$ different processors. (A
nullification is thus of constant size, and self-describing: it
specifies the view it nullifies and what $\tau$ is a signature on. The
tag also syntactically distinguishes nullifications from nullify($v$)
messages, whose shares are group elements of the same kind as
$\tau$.)

\paragraph{L-QCs.} An \emph{L-QC} for the leader block $\ell$ is a
tuple
\[
(\ell,\,T,\,s),
\]
where $T=(E^*,P,D)$ is a tally for $\ell$ with $|P|=n-f$, and $s$ is
the aggregate of the signatures of the $n-f$ votes described by $T$. An
L-QC thus certifies a set of $n-f$ votes for $\ell$ by distinct
processors, and we identify it with that set.

\paragraph{V-QCs.} A \emph{V-QC} for view $v$ is a tuple
\[
Q=(\ell,\,T,\,N,\,X,\,s),
\]
certifying a set $W$ of $n-f$ view~$v$ messages by distinct processors,
where, writing $C$ for the chain proposal of $\ell$:
\begin{itemize}[itemsep=0pt]
  \item $T=(E^*,P,D)$ is a tally for $\ell$ with $|P|\ge 2f+1$; the
    votes it describes are referred to as the votes \emph{for $C$} in
    $W$;
  \item $N$ is a set of processor indices, each contributing a
    novote($v$) message to $W$;
  \item $X$ assigns, to some further set of processor indices, records
    determining complete votes $(v,H(\ell'),\pi,\varepsilon)$ for
    other view $v$ leader blocks $\ell'$,\footnote{Votes for other leader blocks can exist
    only if $\lead(v)$ equivocates, in which case any two such blocks,
    being signed, are evidence of misbehaviour. Note that the records
    in $X$ determine such votes in full: per the atomicity footnote
    above, a QC's record of a vote (however succinctly encoded) must
    always determine the vote's complete content.} and;
  \item $s$ is the aggregate of the signatures of all messages in $W$
    (the novotes contributing a single common message to the
    aggregate).
\end{itemize}
The sets $P$ and $N$ and the domain of $X$ are disjoint, with sizes
summing to $n-f$. We write $Q.\mathrm{view}$ for $v$, and identify $Q$
with the pair $(W,C)$. There is a distinguished initial V-QC
$Q_{\mathrm{gen}}$ for view $0$, considered received by all processors
at the start of the execution, with $\Tips(Q_{\mathrm{gen}})_i :=
\genesis{i}$ for each $i$.

\paragraph{The function $\Tips(Q)$.} Defined for any V-QC $Q=(W,C)$,
the value $\Tips(Q)$ maps each chain $i$ to a transaction block, and is
thought of as specifying, for each chain, a tip that is \emph{safe to
extend} given $Q$. For each chain $i$, let $V_i$ be the votes for $C$
in $W$ (so $|V_i|\ge 2f+1$), and proceed as follows.
\begin{itemize}[itemsep=0pt]
  \item \emph{Positions.} Order the values $\{\pi(i):\pi\in V_i\}$ from
    greatest to least (with multiplicity), discard the first $f$, and
    let $k_i$ be the next value. Set $b_i:=\Prop(C,i,k_i)$.
  \item \emph{Extension carry.} Redefine $b_i$ to be a deepest block
    on chain $i$ that extends $b_i$ and that at least $2f+1$ votes in
    $V_i$ endorse, if there is one, choosing deterministically (say,
    by least hash) if there is more than one.\footnote{For $n=5f+1$
    the choice never arises: two incompatible candidates would each
    require $2f+1$ endorsing votes, no vote endorses both, and
    $|V_i|\le n-f=4f+1$. For larger $n$ it may arise, but any
    deterministic choice is safe. To see this, suppose an extension
    block $t$ on chain $i$ is finalised by some same-view L-QC: all
    $n-f$ of that L-QC's votes endorse $t$, so at least $n-2f$
    correct processors do, of which at least $n-3f\ge 2f+1$ appear in
    $V_i$; meanwhile any block incompatible with $t$ has at most $f$
    correct endorsers, so muster at most $2f$ votes. Every candidate
    here is thus compatible with $t$, and every \emph{deepest}
    candidate (being at least as deep as $t$, which itself qualifies)
    extends $t$.}
\end{itemize}
Then $\Tips(Q)_i:=b_i$. Since anchors specify bases explicitly, each
$\Tips(Q)_i$ is given as an explicit (hash, height) pair by $Q$ alone:
no other V-QC is needed to interpret it. Note also that each coordinate
of $\Tips(Q)$ has been DA-voted by at least one correct processor (at
least $f+1$ when the extension carry applies), which therefore
possesses the block and all of its ancestors down to a DA-certified
block. This secures data availability for all blocks entering the
total ordering defined below, quantified precisely by the
availability lemma of Section~\ref{sec:verification}.

\paragraph{The function $\TipsF(Q)$.} Defined for any L-QC $Q$ for
the leader block $\ell=(v,H(Q'),C)$, the value $\TipsF(Q)$ maps each chain
$i$ to a transaction block, and is thought of as specifying, for each
chain, the tip \emph{finalised} by $Q$. For each chain $i$, proceed as
follows.
\begin{itemize}[itemsep=0pt]
  \item \emph{Positions.} Order the values $\{\pi(i)\}$ over the $n-f$
    votes in $Q$ from greatest to least (with multiplicity), discard
    the first $3f$, and let $k^*_i$ be the next value.
  \item \emph{Extensions.} If $k^*_i=m_i$, then $b^*_i$ is the
    deepest block on chain $i$ extending $\Prop(C,i,m_i)$ that
    \emph{every} vote in $Q$ endorses (or $\Prop(C,i,m_i)$ itself if
    there is none). If $k^*_i<m_i$, then $b^*_i:=\Prop(C,i,k^*_i)$.
\end{itemize}
Then $\TipsF(Q)_i:=b^*_i$. Note the asymmetry of thresholds: positions
use the rank rules (discard $f$, respectively $3f$), while extensions
use endorsement thresholds ($2f+1$, respectively unanimity above a
fully endorsed proposal path). One might hope
to treat both by a single rank rule, counting extension depths along
with positions. The rank rules are sound precisely because the
proposal pins one block per position, so votes on a chain's positions
lie on a common path and a rank determines a block. Extensions lie
above anything pinned. Under an equivocating producer, votes may split
between incompatible branches, a rank then determines only a height,
and the branch-aware repair at the same thresholds is impossible;
whence the strengthened thresholds, which are forced at
$n=5f+1$.\footnote{Quorum intersection gives that $\Tips(Q'')_i$
extends $\TipsF(Q)_i$ for every V-QC $Q''$ of the same view
(Lemma~\ref{lem:safe-extend}). For positions: at least $3f+1$ of $Q$'s
votes report positions $\ge k^*_i$, so at least $f+1$ of them, from
correct voters, appear in the tally of $Q''$, whence $Q''$'s rank is
at least $k^*_i$. For extensions: $k^*_i=m_i$ pins $Q''$'s rank at
$m_i$ exactly, and at least $n-3f\ge 2f+1$ of the votes endorsing
$\TipsF(Q)_i$ appear in $Q''$'s tally, forcing the carry; any block
incompatible with $\TipsF(Q)_i$ musters at most $2f$ votes, so every
carry candidate extends it. The condition $k^*_i=m_i$ cannot be
dropped, since without it an equivocating producer can arrange one
execution finalising an extension block on one branch, another
finalising a proposed entry on the other, with a single V-QC common to
both, so that no extraction rule applied to that V-QC is safe in both
executions. Below unanimity this fails for $n=5f+1$: for any
finalisation threshold less than $n-f$, one can construct two
executions sharing a V-QC bit-for-bit in which incompatible
extensions reach the threshold, so that no extraction rule applied to
V-QCs is safe in both (Section~\ref{sec:verification}, where
Theorem~\ref{thm:exact} determines the exact threshold,
$\lceil(n+3f)/2\rceil$, for every $n\ge 5f+1$). (Guarantees
short of a place in the total ordering are available at weaker
thresholds: see the block inclusion theorem of
Section~\ref{sec:verification}.)}

\paragraph{The total ordering.} Suppose $T$ maps each chain to a
transaction block on that chain. The \emph{positions above $T$} are the pairs $(o,i)$, for
offsets $o\ge 1$ and chains $1\le i\le n$; position $(o,i)$ refers to
height $T_i.\mathrm{height}+o$ on chain $i$. The \emph{horizontal
order} on these positions is lexicographic with offset taken first:
$(o,i)$ precedes $(o',i')$ if $o<o'$, or if $o=o'$ and $i<i'$. (So the
positions at offset $1$ come first, ordered by chain index, then those
at offset $2$, and so on.) If $T'$ also maps chains to blocks, then $\Horiz(T,T')$ denotes the
sequence of blocks obtained by visiting the positions above $T$ in
horizontal order and appending, at each position whose height is at
most $T'_i.\mathrm{height}$ on its chain $i$, the ancestor of $T'_i$
at that height. (In the intended case $T'_i$ extends $T_i$; the
definition does not require this, however, and when a proposal's
anchor jumps to a DA-certificate on a branch incompatible with $T_i$
--- possible only under an equivocating producer --- the appended
blocks are simply read off $T'_i$'s own path.) Thus every chain's
first new block is appended before any chain's second. Note that, for
each $i$, \emph{every} block on $T'_i$'s path above height
$T_i.\mathrm{height}$ appears in the sequence: the definition
determines only how the chains are interleaved, never which blocks are
included.\footnote{This is the
\emph{horizontal} ordering; heights are measured relative to each
chain's previous tip. Alternatives (vertical, hybrid, leader-scheduled)
can be substituted without affecting the analysis, provided the
ordering is a deterministic function of $(T,T')$.} Each V-QC $Q$ then
specifies a sequence $\Ord(Q)$ of transaction blocks, defined
recursively: $\Ord(Q_{\mathrm{gen}}):=\lambda$ and, where $Q=(W,C)$ is
a V-QC for the leader block $\ell=(v,H(Q'),C)$,
\[
\Ord(Q):=\Ord(Q')^\frown\Horiz\bigl(\Tips(Q'),\Tips(Q)\bigr).
\]
Similarly, each L-QC $Q$ for $\ell=(v,H(Q'),C)$ specifies a sequence
\[
\Emit(Q):=\Ord(Q')^\frown\rho,
\]
defined as follows. Say chain $j$ is \emph{settled} (w.r.t.\ $Q$) if
$k^*_j=m_j$ and no vote in $Q$ endorses any block strictly extending
$\TipsF(Q)_j$; otherwise $j$ is \emph{unsettled}.\footnote{If $j$ is settled, then no
V-QC for the view can specify a tip for chain $j$ beyond
$\TipsF(Q)_j$. Positions cannot advance, being capped at the proposed
tip, an ancestor of $\TipsF(Q)_j$ (or equal to it). An extension
carry beyond $\TipsF(Q)_j$ would require
$2f+1$ votes endorsing a block beyond, hence at least $f+1$ correct
ones, while no vote in $Q$ endorses such a block and at most $f$
correct votes lie outside $Q$. The condition $k^*_j=m_j$ cannot be
dropped, since below the proposed tip a V-QC's tip advances at the
threshold $f+1$, which $f$ faulty votes and a single correct vote from
outside $Q$ can meet, and no L-QC can certify that the latter does not
exist. An unsettled
chain may gain blocks above $\TipsF(Q)_j$ in the eventual ordering,
and blocks so gained would precede, in the horizontal sweep, blocks of
other chains at greater offsets. This is why emission must halt at an
unsettled chain's frontier. Note that a chain's potential growth is in
any case bounded: no vote endorses blocks more than $e$ beyond its
reported position, so the sweep's lookahead for an unsettled chain is
capped at $e$ beyond the deepest endorsed block: chains are never
`potentially infinite'.} To form $\rho$, visit the positions above
$\Tips(Q')$ in horizontal order and, at each position, on chain $j$
say, act as follows:
\begin{itemize}[itemsep=0pt]
  \item if the position's height is at most that of $\TipsF(Q)_j$,
    append the ancestor of $\TipsF(Q)_j$ at that height;
  \item otherwise, if chain $j$ is settled, append nothing and
    continue;
  \item otherwise, halt.
\end{itemize}
Then $\rho$ is the sequence of blocks appended before the halt. If
every chain is settled, no halt ever occurs and $\rho$ is the whole of
$\Horiz(\Tips(Q'),\TipsF(Q))$. In the good case, the entire total
ordering, extensions included, is emitted upon receipt of a single
L-QC (`entire' meaning through the deepest blocks the voters had
received when they voted). The sequence of transactions corresponding to $\Emit(Q)$
(concatenating the values $b.\Tr$ in order, removing duplicates) is
what a processor appends to its log upon receiving the L-QC $Q$. The
analysis establishes that $\Emit(Q)\preceq\Ord(Q'')$ for any V-QC
$Q''$ of the same view, so that logs remain consistent as later views
extend the ordering.

\paragraph{Finalisation from the vote pool.} Since any $n-f$ votes for
$\ell$ form an L-QC, a processor holding a pool $P$ of at least
$n-f$ votes for $\ell$, by distinct processors (a faulty processor
may sign more than one vote for $\ell$; the pool retains the first
it receives, and never replaces it), need never enumerate L-QCs. The rules below apply to $P$
directly, and it is these that the analysis of
Section~\ref{sec:verification} treats, with an L-QC's rules as the
special case $|P|=n-f$; they dominate the rules for every individual
L-QC assemblable from $P$, and settledness from a pool is strictly
more permissive than from any single L-QC, so pooling can emit
strictly more. Formally, we extend the domains of $\TipsF$,
settledness and $\Emit$ from L-QCs to pools, the rules below
reducing to the rules already given in the case $|P|=n-f$. Define
$\Emit(P):=\Ord(Q')^\frown\rho$ exactly as for L-QCs, with
$\TipsF(P)$ and settledness with respect to $P$ given, for each
chain $i$, as follows, writing $b^*_i:=\TipsF(P)_i$ for brevity:
\begin{itemize}[itemsep=0pt]
  \item \emph{Positions.} $k^*_i$ is the $(3f+1)$-th largest value
    $\pi(i)$ over the votes in $P$ (this is the drop-$3f$ rule applied
    to the best $n-f$ votes for chain $i$);
  \item \emph{Extensions.} If $k^*_i=m_i$, then $b^*_i$ is the deepest
    block extending $\Prop(C,i,m_i)$ that at least $n-f$ votes in $P$
    endorse (or $\Prop(C,i,m_i)$ itself if there is none). If
    $k^*_i<m_i$, then $b^*_i:=\Prop(C,i,k^*_i)$;
  \item \emph{Settledness.} Chain $i$ is settled if $k^*_i=m_i$ and
    $\beta_i+(n-|P|)\le f$, where $\beta_i$ is the number of votes in
    $P$ endorsing some block strictly extending $b^*_i$.
\end{itemize}
The settledness condition generalises the L-QC rule (to which it
reduces when $|P|=n-f$). It ensures that at most $f$ correct votes
endorsing blocks beyond $b^*_i$ can exist at all, seen or unseen, so no V-QC can
carry chain $i$ past $b^*_i$. Emission is monotone, since as votes
arrive each quantity above only improves, and successive values of $\Emit(P)$
(like the values $\Emit(Q)$ for the various L-QCs assemblable from
$P$, all of which $\Emit(P)$ dominates) are prefixes of one common
sequence, so a processor may re-emit as its pool grows.

\paragraph{The function $\SelectAnchor(\State,v)$.} Used by
the leader of view $v$ to select the V-QC to build on. If
$v'<v$ is the greatest view such that $\State$ contains a V-QC
for view $v'$, the function outputs the V-QC for view $v'$ that
$p_i$ forwarded upon its becoming new (the choice among certificates simultaneously assemblable at that
timeslot being fixed by the forwarding rule; see `New nullifications
and V-QCs' below). The leader thus anchors on an object it has already
sent to all.

\paragraph{The procedure $\ProposeChains(Q,v)$.} Executed by the leader
$p_i$ of view $v$ to determine a new leader block. To execute the
procedure, $p_i$ forms the chain proposal $C$ as follows, for each
chain $j$: let $b:=\Tips(Q)_j$;
\begin{itemize}[itemsep=0pt]
  \item if $\State$ contains a DA-certificate for a block on chain $j$
    of height greater than $b.\mathrm{height}$, then $a_j$ is such a
    certificate of greatest height; otherwise
    $a_j:=(H(b),b.\mathrm{height})$;
  \item $x_j$ is the longest sequence of payloads
    $(H(b^1.\Tr),\dots,H(b^{m}.\Tr))$, $m\le d$, such that
    $b^1,\dots,b^m$ are blocks on chain $j$ that $p_i$ has DA-voted
    for, with $b^1$'s parent the base and $b^k$ the parent of
    $b^{k+1}$ for $k<m$.
\end{itemize}
Then $p_i$ disseminates the leader block $(v,H(Q),C)$. The V-QC
$Q$ requires no dissemination of its own, since $\SelectAnchor$
returns exactly the object that $p_i$ forwarded to all upon its
becoming new.\footnote{That the anchor be an object the leader has
already sent to all is necessary, not stylistic. A V-QC assembled
at proposal time from the leader's received votes may contain a
privately delivered vote, so that no other correct processor holds
the exact object; since proposal validity requires the exact object
referenced, no correct processor could then vote. Anchoring on the
forwarded V-QC rules this out by construction, and since a message
sent at timeslot $t$ arrives by $\max\{\mathrm{GST},t\}+\delta$,
the forwarding also delivers the anchor in time for the
correct-leader lemma of Section~\ref{sec:verification}.
Section~\ref{sec:optimisations} relaxes the selection, allowing a
better anchor at the price of disseminating it.}

\paragraph{When $\State$ contains a valid proposal for view $v$.} This
condition is satisfied when:
\begin{enumerate}[label=(\roman*),itemsep=0pt]
  \item $\State$ contains precisely one leader block $\ell$ of the
    form $\ell=(v,q,C)$ signed by $\lead(v)$;
  \item $\State$ contains a V-QC $Q$ with $H(Q)=q$, for some view
    $v'<v$, and $C$ is well-formed with
    respect to $\Tips(Q)$;\footnote{Well-formedness requires, for each
    chain $j$: either $a_j=(H(\Tips(Q)_j),\,\Tips(Q)_j.\mathrm{height})$,
    or $a_j$ is a valid DA-certificate for chain $j$ of height greater
    than that of $\Tips(Q)_j$; and $|x_j|\le d$. Note that $\Tips(Q)$
    is computable from $Q$ alone, so this condition is checkable given
    $\ell$ and the referenced V-QC. Parent-links among proposed blocks hold by
    construction (entries are payloads); voters do not verify that the
    specified blocks \emph{exist}, and simply report a lower position
    when they have not DA-voted them.} and;
  \item $\State$ contains a nullification for each view in the open
    interval $(v',v)$.
\end{enumerate}
When (i)--(iii) are satisfied w.r.t.\ $\ell$, we say $\State$ contains
a valid proposal $\ell$ for view $v$. Note that this condition guards
only a processor's \emph{first} route to voting (line 8 of
Algorithm~\ref{alg:consensus}). A processor may also vote upon
receiving a V-QC for the current view (lines 12--13), \emph{without}
checking validity. In this case,  the V-QC's $2f+1$ votes for the designated proposal
ensure that at least $f+1$ correct processors found it valid, and the
V-QC contains everything needed to form the vote (it contains $\ell$,
which determines $H(\ell)$, and $\Positions$ and $\Extensions$
require only $C$ and the voter's own DA-voting history).

\paragraph{Further local variables.} The variable $\mathtt{v}$
(initially $1$) specifies the present view. The local timer
$\mathtt{T}$ (initially $0$) increments in real time and is reset upon
entering each view. The variables $\mathtt{nullified}$,
$\mathtt{proposed}$ (initially $\mathrm{false}$) and
$\mathtt{notarised}$ (initially $\bot$) record whether $p_i$ has sent a
nullify($\mathtt{v}$) message, whether it has proposed a leader block
for view $\mathtt{v}$, and the leader block it has voted for in the
present view; all are reset upon entering a new view.

\paragraph{New nullifications and V-QCs.} As in Minimmit, processors
forward all newly received or assembled nullifications and V-QCs to
all. At timeslot $t$, $p_i$ regards a nullification for view $v$ as
\emph{new} if $\State$ did not contain one for view $v$, nor shares
forming one, at any smaller timeslot (since threshold signatures are
unique, no tie-breaking rule is needed), and regards a V-QC for view
$v$ as \emph{new} if $\State$ contained no V-QC for view $v$ at any
smaller timeslot, forwarding one such V-QC, chosen by a fixed
efficient rule, say assembled greedily from the earliest-received
qualifying messages.\footnote{Unlike Minimmit's
M-notarisations, V-QCs are not constant-size
(Section~\ref{sec:sizes}), but forwarding them is unalarming. In
terms of egress, note that under standard operation the leader of the
next view in any case sends both a V-QC (its forwarding duty, which
doubles as the dissemination of its anchor, per $\SelectAnchor$)
and its own leader block. Each other processor's
forwarding duty, a single V-QC to all, is therefore less than the
per-view egress the protocol already demands of each leader.
Moreover, in the common case a processor forwards at
a moment when it has nothing else to send, and nothing in the protocol
waits on the transmission, so the cost is bandwidth off the critical
path. In terms of ingress, a processor already holding the V-QC need
not attend to more than a brief identifier at the head of the
message.}

\begin{table}[t]
\centering
\small
\begin{tabular}{ll}
\toprule
Message type & Description\\
\midrule
$\genesis{i}$ & The tuple $(i,0,\lambda,\lambda)$\\
transaction block $b\neq\genesis{i}$ & A tuple $(i,h,x,\Tr)$, signed by $p_i$ on $\hd(b)=(i,h,x,H(\Tr))$\\
DA-vote for $b$ & $(\da,\hd(b),\rho)$: $\rho$ an $(n-2f)$-threshold share on $(\da,\hd(b))$\\
DA-certificate for $b$ & $(\hd(b),\tau)$: $\tau$ the $(n-2f)$-threshold signature on $(\da,\hd(b))$\\
$\lgenesis$ & The tuple $(0,\lambda,\lambda)$\\
leader block $\ell\neq\lgenesis$ & A tuple $(v,H(Q),C)$, signed by $\lead(v)$; $H(\ell)=H(v,H(Q),H(C))$\\
vote for $\ell$ & A message $(\vote,v,H(\ell),\pi,\varepsilon,s)$\\
tally $T$ for $\ell$ & $(E^*,P,D)$: describes one vote for $\ell$ per processor in $P$\\
novote($v$) & $(\novote,v,s)$: $s$ an ordinary signature on $(\novote,v)$\\
nullify($v$) & $(\nullify,v,\rho)$: $\rho$ a $(2f+1)$-threshold share on $(\nullify,v)$\\
nullification for $v$ & $(\nullification,v,\tau)$: $\tau$ the $(2f+1)$-threshold signature on $(\nullify,v)$\\
V-QC for $v$ & $(\ell,T,N,X,s)$: certifies $n-f$ votes/novotes, $\ge 2f+1$ for $C$\\
L-QC for $\ell$ & $(\ell,T,s)$: certifies $n-f$ votes for $\ell$, distinct signers\\
\bottomrule
\end{tabular}
\caption{Messages.}\label{tab:messages}
\end{table}

\begin{table}[t]
\centering
\small
\begin{tabular}{ll}
\toprule
Variable & Description\\
\midrule
$\mathtt{v}$ & Initially 1, specifies the present view\\
$\mathtt{T}$ & Initially 0, a local timer reset upon entering each view\\
$\mathtt{nullified}$ & Initially false, whether already sent nullify($\mathtt{v}$)\\
$\mathtt{proposed}$ & Initially false, whether already proposed a leader block for view $\mathtt{v}$\\
$\mathtt{notarised}$ & Initially $\bot$, records leader block voted for in present view\\
$\State$ & Records all received messages, automatically updated\\
 & Initially contains genesis blocks, certificates and quorums as specified above\\
\bottomrule
\end{tabular}
\caption{Local variables. The parameters $d$ (pipelining depth), $e$ (extension bound), $B$ (batch bound) and $\theta$ (production interval) are global constants.}\label{tab:variables}
\end{table}

\begin{table}[t]
\centering
\small
\begin{tabular}{lp{10.4cm}}
\toprule
Notion & Description\\
\midrule
$\Prop(C,i,k)$ & The block specified at position $k$ on chain $i$ by chain proposal $C$ (position $0$ being the base)\\
$m_i$ & The number of payload entries $C$ makes for chain $i$; the proposed tip is $\Prop(C,i,m_i)$\\
endorsed tip & Of a vote, per chain: the last block its extensions specify, or its position block $\Prop(C,i,\pi(i))$ if none\\
endorses & A vote endorses $b$ if $b$ is an ancestor of, or equal to, its endorsed tip on $b$'s chain\\
$\Tips(Q)$ & Safe-to-extend tips of the V-QC $Q$: per chain, the $(f{+}1)$-th largest reported position, then the deepest block extending it that $2f{+}1$ tally votes endorse\\
$k^*_i$ & The $(3f{+}1)$-th largest reported position for chain $i$, over an L-QC's votes (or a pool)\\
$\TipsF(Q)$, $\TipsF(P)$ & Finalised tips of the L-QC $Q$, or of the pool $P$ (abbreviated $b^*_i:=\TipsF(P)_i$): per chain, $\Prop(C,i,k^*_i)$ and, if $k^*_i=m_i$, the deepest block extending it that \emph{every} vote endorses ($n-f$ votes, for a pool)\\
settled & Chain $i$, w.r.t.\ an L-QC or pool: $k^*_i=m_i$ and at most $f$ correct votes endorsing blocks beyond $b^*_i$ can exist ($\beta_i+(n-|P|)\le f$)\\
$\Horiz(T,T')$ & The blocks above $T$ on $T'$'s paths, visited by offset and then by chain index\\
$\Ord(Q)$ & The total ordering extracted from the V-QC $Q$: recursively, $\Ord$ of the referenced V-QC followed by $\Horiz$ towards $\Tips(Q)$\\
$\Emit(Q)$, $\Emit(P)$ & The emitted prefix: $\Ord$ of the referenced V-QC, then the sweep towards the finalised tips, halting at the first empty slot of an unsettled chain\\
\bottomrule
\end{tabular}
\caption{The extraction apparatus, as defined in this section.}\label{tab:extraction}
\end{table}


\begin{algo}{the chain layer, instructions for $p_i$}{alg:chain}
\aline{At every timeslot $t$:}
\aline[1]{If $\mathsf{Ready}(\State)$: \acomment{As defined in Section~\ref{sec:spec}}}
\aline[2]{$\ProduceNext()$; \acomment{Extend own chain}}
\aline[1]{For each block $b$ such that $p_i$ may DA-vote for $b$: \acomment{As defined in Section~\ref{sec:spec}}}
\aline[2]{Send a DA-vote for $b$ to $p_{b.\mathrm{chain}}$;}
\aline[1]{If $\State$ contains $n-2f$ DA-votes, with distinct signers, for a block $b$ produced by $p_i$,}
\aline[1]{and $p_i$ has not yet disseminated a DA-certificate for $b$:}
\aline[2]{Form and disseminate the DA-certificate for $b$;}
\end{algo}

\begin{algo}{the consensus layer, instructions for $p_i$}{alg:consensus}
\aline{At every timeslot $t$:}
\aline[1]{Disseminate new nullifications and new V-QCs in $\State$; \acomment{`new' as defined in Section~\ref{sec:spec}}}
\aline[1]{If $p_i=\lead(\mathtt{v})$ and $\mathtt{proposed}=\mathrm{false}$:}
\aline[2]{$\ProposeChains(\SelectAnchor(\State,\mathtt{v}),\mathtt{v})$; \acomment{Send out a new leader block}}
\aline[2]{Set $\mathtt{proposed}:=\mathrm{true}$;}
\aline[1]{If $\State$ contains a valid proposal $\ell$ for view $\mathtt{v}$: \acomment{As defined in Section~\ref{sec:spec}}}
\aline[2]{If $\mathtt{notarised}=\bot$ and $\mathtt{nullified}=\mathrm{false}$:}
\aline[3]{Set $\mathtt{notarised}:=\ell$ and disseminate the vote for $\ell$ with}
\alinecont[3]{$\pi=\Positions(\State,\ell)$ and $\varepsilon=\Extensions(\State,\ell)$; \acomment{Send vote}}
\aline[1]{If $\mathtt{T}=2\Delta$, $\mathtt{nullified}=\mathrm{false}$ and $\mathtt{notarised}=\bot$:}
\aline[2]{Set $\mathtt{nullified}:=\mathrm{true}$ and disseminate $(\novote,\mathtt{v})$ and $(\nullify,\mathtt{v})$; \acomment{Timeout}}
\aline[1]{If $\State$ contains a nullification or a V-QC for view $\mathtt{v}$:}
\aline[2]{If $\State$ contains a V-QC for $\mathtt{v}$, designating the proposal of a leader block $\ell$,}
\aline[2]{with $\mathtt{notarised}=\bot$ and $\mathtt{nullified}=\mathrm{false}$: disseminate a vote for $\ell$ as in line 8;}
\aline[2]{Set $\mathtt{v}:=\mathtt{v}+1$, $\mathtt{nullified}:=\mathrm{false}$, $\mathtt{proposed}:=\mathrm{false}$,}
\alinecont[2]{$\mathtt{notarised}:=\bot$, $\mathtt{T}:=0$;}
\aline[3]{\acomment{Go to next view}}
\aline[1]{If $\mathtt{nullified}=\mathrm{false}$, $\mathtt{notarised}\neq\bot$ and $\State$ contains $\ge 2f+1$ messages, each signed}
\aline[1]{by a different processor, and each either (i) of the form $(\nullify,\mathtt{v})$ or $(\novote,\mathtt{v})$, or}
\aline[1]{(ii) a vote for some $\ell'\neq\mathtt{notarised}$ for view $\mathtt{v}$:}
\aline[2]{Set $\mathtt{nullified}:=\mathrm{true}$ and disseminate $(\nullify,\mathtt{v})$;}
\aline[3]{\acomment{Minimmit's mechanism ensuring view progression; see the walk-through}}
\aline[1]{If, for some leader block $\ell$, the pool $P\subseteq\State$ of votes for $\ell$,}
\alinecont[1]{by distinct processors, has $|P|\ge n-f$:}
\aline[2]{Finalise $\ell$ and set $\mathrm{log}_i$ to extend the transactions of $\Emit(P)$;}
\aline[3]{\acomment{Pool finalisation as defined in Section~\ref{sec:spec}; monotone, so re-run as $P$ grows}}
\end{algo}

\paragraph{Walk-through: lines 16--19 of Algorithm~\ref{alg:consensus}
ensure view progression.} Section~\ref{sec:intuition} asked the reader
to take on trust that every view produces a V-notarisation or a
nullification; lines 16--19, inherited from Minimmit, are the delicate
part of the mechanism, and we now explain them. The concern is a
correct processor that has voted in view $\mathtt{v}$ but sees neither
a V-QC nor a nullification form. Perhaps the leader equivocated and
votes are split, or other processors timed out before the proposal
reached them. Waiting indefinitely is not an option, but nor may the
processor simply nullify, since a nullification must prove that
\emph{no} view $\mathtt{v}$ block receives an L-notarisation, and the
processor's own vote may be helping some block towards exactly that.
The rule resolves the tension: the processor nullifies only upon
receiving, from $2f+1$ distinct processors, messages none of which
support the block $\ell$ it voted for (nullify or novote messages, or
votes for other blocks). At least $f+1$ of these come from correct
processors and, considering the \emph{first} correct supporter of any
given block to nullify in this way, one can check that none of those
correct processors ever votes for $\ell$, capping its support at
$n-f-1$, so nullifying is safe. Conversely, the rule ensures
progression. After GST, every correct processor eventually holds a
view $\mathtt{v}$ message from every correct processor. If $2f+1$ votes for
a single block are among them, a V-notarisation exists, and otherwise
every correct voter eventually accumulates $2f+1$ messages of the
stated forms and nullifies, so that a nullification forms. Note
finally that the processor sends its nullify share \emph{without} a
novote message, since it did vote and a novote would be false. (Full
arguments appear in Section~\ref{sec:verification}.)

\section{Verification}\label{sec:verification}

\subsection{Consistency}\label{sec:consistency}

Following Minimmit, we reason about \emph{notarisations}: sets of
messages actually sent, whether or not any processor assembles the
corresponding certificate. Precisely, we say a transaction block $b$
is \emph{certifiable} if at least $n-2f$ processors send DA-votes for
$b$; a leader block $\ell$ \emph{receives an L-notarisation} if at
least $n-f$ processors send votes for $\ell$; a \emph{V-notarisation}
for view $v$ is a set $W$ of view $v$ votes and novote($v$) messages,
one each from $n-f$ distinct processors, at least $2f+1$ of which are
votes for a common leader block $\ell$, whose proposal $W$ is said to
\emph{designate}; and view $v$ \emph{receives a nullification} if at
least $2f+1$ processors send nullify($v$) messages.\footnote{One caution
regarding designation: for $n=5f+1$ a V-notarisation designates
exactly one proposal (two disjoint sets of $2f+1$ votes would require
$4f+2>n-f$ messages), but for larger $n$ it may designate two. The
ambiguity is harmless. A V-QC always names a \emph{single} designated
proposal, that of the leader block it contains, and it is this
designation that parenthood (below) and the protocol's rules
consult. Wherever
the lemmas below extract values from a V-notarisation
(Lemmas~\ref{lem:safe-extend} and~\ref{lem:prefix}), an
L-notarisation for the view is in force, and
Lemma~\ref{lem:designation} then makes the designation unique.}  Any QC certifies a
notarisation of the corresponding kind (this uses the assumption that
signatures cannot be forged), and the functions $\Tips$, $\TipsF$,
$\Ord$ and $\Emit$ depend only on the underlying notarisation and the
designated proposal, so all statements below transfer to QCs.  Recall from
Section~\ref{sec:setup} that a message received at timeslot $t$ was
sent at a timeslot strictly before $t$.

We begin with the chain layer.

\begin{lemma}[Certified blocks are compatible]\label{lem:cert-compat}
Any two certifiable transaction blocks on the same chain are
compatible.
\end{lemma}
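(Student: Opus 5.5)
The plan is a minimal-counterexample argument on the height of the taller block. The only facts used are quorum intersection and DA-voting conditions (i) and (iii). Condition (ii) is not needed.

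Suppose, towards a contradiction, that incompatible certifiable blocks exist on chain $i$. Among all such incompatible pairs $(b,b')$ with $b.\mathrm{height}=h\le h'=b'.\mathrm{height}$, choose one minimising $h'$. Note that $h\ge 1$, since the genesis block is an ancestor of every block on its chain.

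First, an intersection count. Each of $b$ and $b'$ has at least $n-2f$ DA-voters, hence at least $n-3f$ correct DA-voters. Both sets lie inside the at least $n-f$ correct processors, so they share at least $2(n-3f)-(n-f)=n-5f\ge 1$ correct processors. Fix such a correct $q$ that DA-voted for both $b$ and $b'$. If $h=h'$, condition (i) says $q$ DA-voted at most once at that height on chain $i$, so $b=b'$, which is a contradiction.

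So $h<h'$. When $q$ DA-voted for $b'$, let $g<h'$ be the height $h'$ used in the DA-voting rule. Then $q$'s $\State$ contained a DA-certificate $c$ at height $g$, and by condition (iii) it held the linking blocks $b_{g+1},\dots,b_{h'}=b'$. By the parent-hash links and collision resistance, these blocks are exactly the ancestors of $b'$ at those heights, $c$ certifies the ancestor of $b'$ at height $g$, and $q$ DA-voted for each $b_{h''}$ with $h''\in(g,h')$. There are two cases.
\begin{itemize}
\item If $g<h$, then $q$ DA-voted for the ancestor $a=b_h$ of $b'$ at height $h$. Since $q$ also DA-voted for $b$ at height $h$, condition (i) gives $a=b$. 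Hence $b$ is an ancestor of $b'$, which is a contradiction.
\item If $g\ge h$, then the block $\hat c$ certified by $c$ is certifiable, because a DA-certificate is formed from $n-2f$ shares actually sent by distinct processors (unforgeability). Moreover $\hat c$ is an ancestor of $b'$ with $h\le g<h'$. The block $b$ is incompatible with $\hat c$: if $b$ were an ancestor of $\hat c$, it would be an ancestor of $b'$; and $\hat c$ cannot be an ancestor of $b$, since its height is at least $h$ and $\hat c\ne b$ would follow from incompatibility with $b'$. So $(b,\hat c)$ is an incompatible certifiable pair whose taller height is at most $g<h'$, contradicting the minimality of $h'$.
\end{itemize}

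The main obstacle is the bookkeeping in the second case. One must make sure that "$\State$ contains a DA-certificate" really yields certifiability in the notarisation sense, which uses signature unforgeability. One must also make sure that the linking blocks of condition (iii) are genuinely the ancestors of $b'$, which uses the hash links. The subcase $g=h$ is absorbed, either by the same-height argument applied to $(b,\hat c)$ or directly by minimality. The intersection bound $n-5f\ge 1$ is exactly where $n\ge 5f+1$ enters; without it one would only get a shared processor, not a shared correct processor.
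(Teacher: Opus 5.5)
Your proof is correct and follows essentially the paper's route. You use a minimal counterexample (you minimise the taller height, the paper minimises $(h,h')$ lexicographically, which comes to the same thing here) and a correct common DA-voter. You then make the same case split on the height of the anchoring certificate at that voter's DA-vote for $b'$: below $h$ gives a direct clash with condition (i), and at least $h$ gives a smaller incompatible certifiable pair.

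One small repair is needed in the intersection count. Inclusion--exclusion needs an \emph{upper} bound on the universe, so ``at least $n-f$ correct processors'' is the wrong direction. Count instead as the paper does: $2(n-2f)-n=n-4f\ge f+1$ common DA-voters, hence a correct one. Equivalently, with $f_a$ faulty processors, $2(n-2f-f_a)-(n-f_a)=n-4f-f_a\ge 1$.
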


\begin{proof}
Towards a contradiction, suppose otherwise.  Among all pairs of incompatible certifiable blocks
(on some fixed chain), choose a pair $b,b'$, of heights $h\le h'$, so that $(h,h')$ is lexicographically minimal.  Each of $b,b'$ has DA-votes from $n-2f$ distinct processors. 
The two voter sets intersect in at least $2(n-2f)-n=n-4f\ge f+1$
processors, so some correct processor $p$ must DA-vote for both. If $h=h'$ this
contradicts condition (i) for DA-voting (one DA-vote per chain and height), so
$h<h'$.

Consider the timeslot at which $p$ sent its DA-vote for $b'$, and
let $b''$, of height $h''<h'$, be the block certified by the
anchoring certificate at that vote (so that $h''$
is the greatest certified height below $h'$ in $p$'s local value $\State$ at that timeslot). By
condition (iii), $p$ then held blocks $b_{h''+1},\dots,b_{h'}=b'$
forming the path from $b''$ to $b'$.

First, suppose $h''\ge h$. The block $b''$ is certifiable, and the
pair $b,b''$ has both heights below $h'$. It follows from our choice
of $(h,h')$ that the blocks
$b$ and $b''$ are compatible. Then $b$
is an ancestor of $b''$, hence of $b'$, giving the required contradiction.

So $h''<h$, and we may set $z$ to be the block on the path from
$b''$ to $b'$ of height $h$. Since $b$ and $b'$ are incompatible, $b$ and
$z$ are incompatible; in particular $z\neq b$. By condition (iii),
$p$ sent a DA-vote for $z$ at or before the timeslot at which it
voted for $b'$. Since $p$ also DA-votes for $b$, this means $p$ DA-votes at height $h$ for
two distinct blocks, contradicting condition (i). 
\end{proof}

We next turn to the consensus layer. 

\begin{lemma}[One vote per view]\label{lem:one-vote}
A correct processor sends at most one vote for each view.
\end{lemma}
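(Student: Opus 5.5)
The plan is to inspect Algorithm~\ref{alg:consensus} and list every instruction that sends a vote. There are exactly two: line 8, reached via the valid-proposal route, and line 13, reached on receiving a V-QC for the current view. Both are guarded by the conjunction $\mathtt{notarised}=\bot$ and $\mathtt{nullified}=\mathrm{false}$.

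\emph{Line 8.} Sending the vote simultaneously sets $\mathtt{notarised}:=\ell\neq\bot$. For the rest of the view, neither guard can succeed again.

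\emph{Line 13.} Sending the vote is immediately followed, in the same block of instructions, by the view increment of lines 14--15. The processor therefore leaves view $\mathtt{v}$ at once, and any subsequent vote it sends concerns a strictly greater view.

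\emph{Same-timeslot case.} Next I rule out a line-8 vote and a line-13 vote for the same view within one timeslot. Instructions execute sequentially in order, so if line 8 fired, $\mathtt{notarised}\neq\bot$ by the time line 13 is evaluated. If line 13 fires, line 8 for that view must already have been skipped in this timeslot; the view has changed before the next timeslot's execution of line 8.

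\emph{Reset.} The variables $\mathtt{notarised}$ and $\mathtt{nullified}$ are reset only on entering a new view (lines 14--15). Views only increase, since $\mathtt{v}$ is only ever incremented. Hence a processor is in view $v$ during one contiguous interval of timeslots, and during that interval the flag $\mathtt{notarised}$, once set, stays set.

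\emph{Conclusion.} Every vote a correct processor sends is for the view $\mathtt{v}$ current at the time of sending. A vote for view $v$ is sent only while $\mathtt{v}=v$. Line 8 votes for $\ell$ with $\ell.\mathrm{view}=\mathtt{v}$ by validity condition (i), and line 13 votes for the block designated by a V-QC for view $\mathtt{v}$. Combining this with the guard argument above gives at most one vote per view.

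The only mildly delicate point is the line-13 route, where $\mathtt{notarised}$ is not set before the view changes. I would handle it precisely as above, by noting that the view increment in line 14 follows in the same instruction block, so no second vote in that view is possible.
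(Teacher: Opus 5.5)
Your proof is correct and follows essentially the same route as the paper's: both vote routes (lines 7--8 and 12--13) are guarded by $\mathtt{notarised}=\bot$, which remains set once a vote is cast, and it is reset only on entering a new view. The one difference is that the paper reads line 13's ``vote as in line 8'' as also setting $\mathtt{notarised}$, whereas you close that route via the immediate view increment at line 14. That is equally valid, and it still holds under the paper's convention that instructions are re-run within a timeslot, since any later pass sees the incremented view.
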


\begin{proof}
Votes are sent only at lines 7--8 and 12--13 of
Algorithm~\ref{alg:consensus}, both guarded by
$\mathtt{notarised}=\bot$; voting sets $\mathtt{notarised}$ to the
block voted for, and the variable is reset only upon entering a new
view.
\end{proof}

\begin{lemma}[Vote--novote exclusivity]\label{lem:exclusivity}
No correct processor sends both a vote for a view $v$ leader block and
a novote($v$) message.
\end{lemma}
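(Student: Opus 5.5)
The plan is a direct inspection of Algorithm~\ref{alg:consensus}, in the style of the proof of Lemma~\ref{lem:one-vote}. First, list every line at which a correct processor sends a view $v$ vote or a novote($v$) message. Votes are sent only at lines 7--8 and 12--13. Novote($v$) messages are sent only at lines 10--11, the timeout rule; the nullify-after-voting rule at lines 16--19 sends only a nullify share and no novote. The proof then shows that, within the period in which $\mathtt{v}=v$, the guards on these lines exclude one another.

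The key steps are two.
\begin{itemize}
\item \emph{Vote then timeout.} If the processor votes (either route), it sets $\mathtt{notarised}\neq\bot$. That value persists until view exit. Line 10 requires $\mathtt{notarised}=\bot$, so no novote($v$) can follow.
\item \emph{Timeout then vote.} If the processor first times out at line 10, it sets $\mathtt{nullified}:=\mathrm{true}$. Both voting routes (line 7 and line 13) require $\mathtt{nullified}=\mathrm{false}$. This flag is likewise reset only on entering a new view, so no vote can follow within view $v$.
\end{itemize}
Within a single timeslot, the lines execute sequentially and each update is seen by later lines. A vote at line 8 therefore blocks line 10, and a timeout at line 10 blocks line 13.

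It remains to rule out messages sent while $\mathtt{v}\neq v$. Every vote and novote message carries the current value of $\mathtt{v}$, and $\mathtt{v}$ only increases. Once the processor leaves view $v$, it never again sends view $v$ messages. So all view $v$ votes and novotes are sent during the single contiguous interval in which $\mathtt{v}=v$, and the flag resets at line 14 occur only when leaving that interval.

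The main, if minor, obstacle is the line-12--13 route, which votes at the moment of view exit. Here I would check that the vote is sent before the reset at line 14, and that it is guarded by $\mathtt{nullified}=\mathrm{false}$ and $\mathtt{notarised}=\bot$. This ensures a processor that has already sent a novote($v$) cannot cast this late vote.
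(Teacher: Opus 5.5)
Your proof is correct and follows the paper's: a vote sets $\mathtt{notarised}\neq\bot$, which blocks the timeout at lines 9--10, and a novote sets $\mathtt{nullified}=\mathrm{true}$, which blocks both voting routes, each flag persisting until the reset at line 14. Your check that all view-$v$ votes and novotes are sent while $\mathtt{v}=v$ only spells out what the paper's ``for the remainder of the view'' leaves implicit; one small slip is that the timeout guard is at line 9 and the novote is sent at line 10, not ``lines 10--11''.
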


\begin{proof}
Novote messages are sent only at line 10, guarded (at line 9) by
$\mathtt{notarised}=\bot$ and $\mathtt{nullified}=\mathrm{false}$. A
prior vote in the view sets $\mathtt{notarised}\neq\bot$, blocking
line 10. A novote sets $\mathtt{nullified}=\mathrm{true}$,
blocking both voting guards for the remainder of the view.
\end{proof}

\begin{lemma}[Designation]\label{lem:designation}
Suppose the view $v$ leader block $\ell$ receives an L-notarisation.
Then at most $2f$ processors ever send view $v$ votes for leader
blocks other than $\ell$, or send novote$(v)$ messages. In particular, every view $v$
V-notarisation designates $\ell$'s proposal, and no other view $v$
leader block receives $2f+1$ votes.
\end{lemma}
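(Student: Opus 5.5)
The plan is a counting argument on the $n-f$ processors that send votes for $\ell$, using Lemmas~\ref{lem:one-vote} and~\ref{lem:exclusivity}. Let $S$ be a set of $n-f$ distinct processors that send votes for $\ell$. At most $f$ of these are faulty, so at least $n-2f$ members of $S$ are correct. Each correct $p\in S$ sends a view $v$ vote for $\ell$. By Lemma~\ref{lem:one-vote}, $p$ sends no view $v$ vote for any other leader block. By Lemma~\ref{lem:exclusivity}, $p$ sends no novote$(v)$ message.

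It follows that every processor that sends a view $v$ vote for some $\ell'\neq\ell$, or a novote$(v)$ message, lies outside the set of correct members of $S$. That complement has size at most $n-(n-2f)=2f$, which gives the first claim. Note that faulty members of $S$ may send such messages too, but they are counted within this complement, so no separate treatment of them is needed.

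For the second claim, suppose some view $v$ V-notarisation $W$ designated the proposal of a leader block $\ell'\neq\ell$. Then $W$ would contain at least $2f+1$ votes for $\ell'$ from distinct processors, contradicting the bound of $2f$. The same bound shows directly that no $\ell'\neq\ell$ receives $2f+1$ votes.

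The one point of care, which I expect to be the only obstacle, is designation by proposal rather than by block. If two distinct leader blocks carried the same chain proposal $C$, a V-notarisation with $2f+1$ votes for some $\ell'\neq\ell$ would still, trivially, designate $\ell$'s proposal. Either reading of the statement therefore holds: designation of $\ell$ itself fails for $\ell'\neq\ell$, and designation of $\ell$'s proposal holds whenever the proposals coincide. It remains to confirm that a V-notarisation designates at least one block with $2f+1$ votes. This is part of its definition, and the bound shows that this block must be $\ell$.
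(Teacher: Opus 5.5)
Your proof is correct and follows the paper's argument essentially verbatim. The paper likewise takes the at least $n-2f$ correct voters for $\ell$, uses Lemmas~\ref{lem:one-vote} and~\ref{lem:exclusivity} to exclude them as senders of other view $v$ votes or novote$(v)$ messages, and reads off the ``in particular'' from the $2f+1$-vote requirement in the definition of a V-notarisation.
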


\begin{proof}
At least $n-2f$ of the $n-f$ processors voting for $\ell$ are correct
and, by Lemmas~\ref{lem:one-vote} and \ref{lem:exclusivity},  send no other view $v$ vote and no novote$(v)$ message. Any
view $v$ vote for another block, or any novote$(v)$ message, is therefore sent by one of the
remaining at most $2f$ processors. 
\end{proof}

\begin{lemma}[No nullification]\label{lem:no-null}
Suppose the view $v$ leader block $\ell$ receives an L-notarisation.
Then view $v$ receives no nullification.
\end{lemma}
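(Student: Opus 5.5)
The plan is a counting argument on the nullify($v$) messages, followed by a ``first correct defector'' argument in the style of Minimmit. Fix an L-notarisation for $\ell$, i.e.\ a set $S$ of at least $n-f$ processors sending votes for $\ell$. Let $S_c\subseteq S$ be its correct members, so $|S_c|\ge n-2f$. Suppose, towards a contradiction, that view $v$ receives a nullification, so that at least $2f+1$ distinct processors send nullify($v$) messages.

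The first step is to show that some member of $S_c$ sends nullify($v$). At most $n-|S|\le f$ of the nullifiers lie outside $S$, so at least $f+1$ of them lie in $S$. At most $f$ processors are faulty, so at least one nullifier lies in $S_c$.

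The second step is to identify the route by which such a processor nullifies. A correct processor sends nullify($v$) only at line 10 or at line 19 of Algorithm~\ref{alg:consensus}.
\begin{itemize}
  \item Line 10 requires $\mathtt{notarised}=\bot$, and it sets $\mathtt{nullified}=\mathrm{true}$, which blocks both voting guards for the rest of view $v$. A processor taking this route therefore never votes in view $v$. So no member of $S_c$ nullifies at line 10.
  \item Hence every member of $S_c$ that nullifies does so at line 19, with $\mathtt{notarised}=\ell$ (using Lemma~\ref{lem:one-vote}).
\end{itemize}

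The third step is the heart of the proof, and I expect it to be the one point needing care. Let $p\in S_c$ be a processor that nullifies at the \emph{earliest} timeslot $t$ among members of $S_c$. By the guard at lines 16--18, at $t$ the processor $p$ holds messages from $2f+1$ distinct processors, each of which is
\begin{itemize}
  \item a nullify($v$) message,
  \item a novote($v$) message, or
  \item a view $v$ vote for some $\ell'\neq\ell$.
\end{itemize}
I claim no sender of these messages lies in $S_c$. A member of $S_c$ sends no novote($v$) message, by Lemma~\ref{lem:exclusivity}. It sends no vote for any $\ell'\neq\ell$, by Lemma~\ref{lem:one-vote}. Any nullify($v$) message it sent that $p$ received by $t$ was sent at a timeslot strictly before $t$, by the convention recalled at the start of Section~\ref{sec:verification}. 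That contradicts the minimality of $t$. The strict-inequality convention is what disposes of same-timeslot ties, and $p$'s own nullify is not yet sent when the guard is checked.

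The final step is the count. All $2f+1$ senders lie outside $S_c$, yet only $n-|S_c|\le 2f$ processors do. This is the required contradiction.
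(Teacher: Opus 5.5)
Your proof is correct and matches the paper's argument. Take the earliest correct voter for $\ell$ to nullify: it must do so via lines 16--19, and all $2f+1$ senders of its evidence lie outside the correct voters for $\ell$ (novotes by exclusivity, other votes by one-vote-per-view, earlier nullifies by minimality), of whom there are at most $2f$; a final count on nullify senders then finishes it. The differences are only bookkeeping: you bound the evidence by $n-|S_c|\le 2f$ and reason from an assumed nullification, where the paper splits correct processors into voters $A$ and the rest $B$ with $|B|\le f$, and you spell out why voters cannot use the line 10 route, which the paper leaves implicit.
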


\begin{proof}
Let $A$ be the set of correct processors that vote for $\ell$ and let $B$ be
the set of remaining correct processors. If $f_a\le f$ is the
number of faulty processors, $|A|\ge n-f-f_a$, so $|B|\le
(n-f_a)-(n-f-f_a)=f$.

We claim no member of $A$ ever sends nullify($v$). A member of $A$
has $\mathtt{notarised}\neq\bot$ for the remainder of the view, so
can send nullify($v$) only via lines 16--19, upon holding $2f+1$
messages by distinct processors, each a nullify($v$), a novote($v$),
or a view $v$ vote for a block other than $\ell$. Towards a contradiction, suppose there is some least timeslot $t$ at which some $p\in A$ does so.  Each of $p$'s $2f+1$ evidence messages was
sent strictly before $t$. Consider any correct sender among them: a
novote($v$) sender lies in $B$ by Lemma~\ref{lem:exclusivity}; the
sender of a vote for a block other than $\ell$ lies in $B$ by
Lemma~\ref{lem:one-vote}; and a nullify($v$) sender lies in $B$, or
else is a member of $A$ that sent nullify($v$) strictly before $t$,
contradicting the minimality of $t$. So the correct senders all lie
in $B$, and number at most $f$. With at most $f$ faulty senders, the
evidence totals at most $2f<2f+1$: a contradiction.

Correct nullify($v$) senders therefore all lie in $B$. With at most
$f$ faulty processors, fewer than $2f+1$ processors send nullify($v$)
messages, and view $v$ receives no nullification.
\end{proof}

Next, we establish consistency of the leader chain itself. Define the \emph{parent} of a leader block
$\ell=(v,H(Q),C)$ to be the leader block that $Q$ contains  (and define the parent of
any leader block referencing $Q_{\mathrm{gen}}$ to be $\lgenesis$).
 Ancestry and compatibility for leader blocks are then
defined as for transaction blocks.

\begin{lemma}[The leader chain]\label{lem:leader-chain}
Suppose the view $v$ leader block $\ell$ receives an L-notarisation.
Then every leader block receiving $2f+1$ votes in any view $v'\ge v$
has $\ell$ as an ancestor. In particular, leader blocks receiving
L-notarisations are pairwise compatible.
\end{lemma}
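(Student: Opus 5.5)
We follow the Minimmit consistency argument recalled in Section~\ref{sec:intuition}, with V-notarisations (or rather, $2f+1$ votes for a block) playing the role of M-notarisations. Towards a contradiction, suppose some leader block receiving $2f+1$ votes in a view $v'\ge v$ does not have $\ell$ as an ancestor, and choose such a block $\ell'$ with $v'$ minimal. If $v'=v$, Lemma~\ref{lem:designation} gives $\ell'=\ell$, which has $\ell$ as an ancestor; so $v'>v$.

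Since $\ell'$ receives $2f+1$ votes, at least $f+1$ come from correct processors. The first step is to check that at least one correct voter for $\ell'$ voted via line 8, i.e.\ after verifying that its $\State$ contained a valid proposal $\ell'$. The line 12--13 route requires a V-QC for view $v'$ designating $\ell'$, which itself certifies $2f+1$ votes for $\ell'$. Taking the earliest vote for $\ell'$ by any correct processor, that vote cannot have used line 12--13: at that timeslot at most $f$ (faulty) votes for $\ell'$ had been sent, so no such V-QC could have existed (every message in $\State$ was sent strictly earlier). Hence some correct $p$ found $\ell'=(v',H(Q),C)$ valid. Therefore $Q$ is a V-QC for some view $v''<v'$, and nullifications exist for every view in $(v'',v')$. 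The certified notarisation contains at least $2f+1$ votes for the leader block $\ell''$ that $Q$ contains, namely the parent of $\ell'$, and $\ell''$ is a view $v''$ block. (If $Q=Q_{\mathrm{gen}}$, the parent is $\lgenesis$, of view $0$, and we treat this as $v''=0<v$.)

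We now argue by cases, exactly as in the Minimmit sketch. If $v''=v$, then Lemma~\ref{lem:designation} forces $\ell''=\ell$, since no other view $v$ block receives $2f+1$ votes, so $\ell$ is an ancestor of $\ell'$. If $v<v''<v'$, then $\ell''$ receives $2f+1$ votes in a view in $[v,v')$, so by minimality of $v'$ it has $\ell$ as an ancestor, and hence so does $\ell'$. If $v''<v$, then $v\in(v'',v')$, so a nullification for view $v$ exists. A nullification entails $2f+1$ sent nullify$(v)$ messages, contradicting Lemma~\ref{lem:no-null}. For the final claim, given two L-notarised blocks from views $v\le v'$, apply the main statement to the later one, which in particular receives $2f+1$ votes.

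The main obstacle is the first step: justifying that the parent relationship (through the referenced V-QC) is really vouched for by a correct processor's validity check, despite votes cast without validity checks via lines 12--13. The induction over the earliest correct vote handles this. A minor point is ensuring that $Q$ is well defined as the object hashed; collision resistance makes $\ell''$ unique. A further point is that ancestry is well founded: views strictly decrease along parents, since validity requires $v''<v'$, and this holds for every block that a correct processor checked.
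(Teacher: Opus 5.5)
Your proof is correct and follows essentially the same route as the paper's: induction on $v'$ (your minimal counterexample), ruling out the line 12--13 route for the earliest correct voter, then using proposal validity, Lemma~\ref{lem:no-null} to exclude $v''<v$, and either Lemma~\ref{lem:designation} or the induction hypothesis to conclude. The only differences are presentational. You split into three cases where the paper first derives $v''\ge v$, and you treat the genesis reference explicitly, which the paper leaves implicit.
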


\begin{proof}
By induction on $v'$. For $v'=v$ the claim is
Lemma~\ref{lem:designation}. So let $v'>v$, suppose the claim holds
for all views in $[v,v')$, and let $\ell'=(v',H(Q''),C')$
receive $2f+1$ votes, at least $f+1$ of them from correct processors.
Let $p_j$ be a correct processor voting for $\ell'$ at the earliest
timeslot, $t_j$ say. If $p_j$ voted via lines 12--13, its $\State$
contained a V-QC \emph{for view $v'$ itself}, designating $\ell'$'s
proposal. (This rule fires only upon a V-QC for the
processor's current view; this is not the earlier-view V-QC $Q''$ that
$\ell'$ references, which belongs to the line 8 route below.) Such a
V-QC certifies $2f+1$ view $v'$ votes for $\ell'$, at least $f+1$ of
them correct, all sent strictly before $t_j$. This contradicts the
choice of $p_j$. So $p_j$ voted via lines 7--8, and its $\State$ contained a valid
proposal, in particular the V-QC $Q''$, for some view
$v''<v'$, together with a nullification for every view in the open
interval $(v'',v')$. A nullification for a view exists only if that view
receives one ($2f+1$ shares are needed to form the threshold
signature, and shares by correct processors cannot be forged), so by
Lemma~\ref{lem:no-null} we have $v\notin(v'',v')$, i.e.\ $v''\ge v$.
Now the tally of $Q''$ certifies $2f+1$ view $v''$ votes for
$\ell'$'s parent $\ell''$, which therefore receives $2f+1$ votes in
view $v''\in[v,v')$. By the induction hypothesis (or
by Lemma~\ref{lem:designation} if $v''=v$), $\ell$ is an ancestor of
$\ell''$, hence of $\ell'$.

For the final claim, if $\ell$ and $\ell'$ receive L-notarisations
in views $v\le v'$, then $\ell'$ in particular receives $2f+1$ votes,
so has $\ell$ as an ancestor.
\end{proof}

 We state the next lemma  for pools. Taking $P$ to be the votes of an L-QC
recovers the statement for $\TipsF$.

\begin{lemma}[Safe extension]\label{lem:safe-extend}
Let the leader block $\ell=(v,H(Q),C)$ receive an L-notarisation,
let $P$ be a set of votes for $\ell$ by at least
$n-f$ distinct processors, and let $k^*_i$, $b^*_i:=\TipsF(P)_i$ and settledness with respect to $P$ be
as in `Finalisation from the vote pool' (Section~\ref{sec:spec}). Let $W$ be any view $v$ V-notarisation.
Then, for every chain $i$:
\begin{enumerate}[label=(\alph*),itemsep=0pt]
  \item $\Tips(W)_i$ extends $b^*_i$, and;
  \item if chain $i$ is settled w.r.t.\ $P$, then $\Tips(W)_i=b^*_i$.
\end{enumerate}
\end{lemma}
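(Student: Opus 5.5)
By Lemma~\ref{lem:designation}, $W$ designates $\ell$'s proposal $C$, so $\Tips(W)$ is computed from $C$ and the tally $V_i$ of votes for $\ell$ in $W$. Also $|V_i|\ge n-f-2f$, since at most $2f$ members of $W$ send novotes or votes for other blocks. The plan is a quorum-intersection count on each chain $i$, split into a position step and an extension step.

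\emph{Positions.} By definition of $k^*_i$, at least $3f+1$ votes in $P$ report $\pi(i)\ge k^*_i$. At most $f$ of these are faulty, so at least $2f+1$ come from correct processors. At most $f$ correct processors lie outside $W$, so at least $f+1$ correct votes with $\pi(i)\ge k^*_i$ lie in $W$. These are votes for $\ell$, so they lie in $V_i$: a correct processor sends one vote per view (Lemma~\ref{lem:one-vote}), and the vote it sends in $W$ is the same vote. Hence the $(f+1)$-th largest position $k_i$ satisfies $k_i\ge k^*_i$, and the position block $\Prop(C,i,k_i)$ extends $\Prop(C,i,k^*_i)$. The extension carry only moves to blocks extending this position block. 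So when $k^*_i<m_i$, part (a) follows.

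\emph{Extensions, $k^*_i=m_i$.} Here $k_i=m_i$ as well, since positions are capped at $m_i$, so the position block is the proposed tip. Now $b^*_i$ is endorsed by at least $n-f$ votes of $P$. At least $n-2f$ of these are correct, so at least $n-3f\ge 2f+1$ of them lie in $V_i$. Thus $b^*_i$ qualifies as a carry candidate. Next, any block incompatible with $b^*_i$ (above the tip) is endorsed by at most $f$ correct processors, since the other correct ones endorse a path through $b^*_i$. Adding at most $f$ faulty votes gives at most $2f<2f+1$, so it is not a candidate. Every candidate is therefore compatible with $b^*_i$. A deepest candidate has depth at least that of $b^*_i$, so it extends $b^*_i$, which gives (a).

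The one point needing care is that a correct processor's vote endorses a single path. Endorsing a block incompatible with $b^*_i$ therefore excludes endorsing $b^*_i$, whether in $P$ or in $W$, since both contain the same vote.

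\emph{Settledness, part (b).} Assume chain $i$ is settled, so $k^*_i=m_i$ and $\beta_i+(n-|P|)\le f$. I will show that no block strictly extending $b^*_i$ has $2f+1$ endorsers in $V_i$. By (a) this forces $\Tips(W)_i=b^*_i$. The correct endorsers of such a block either appear in $P$, and are counted in $\beta_i$, or are among the correct processors absent from $P$. I need the number of correct processors absent from $P$ to be at most $n-|P|$. This holds because $P$ has distinct signers and at most $f$ of them are faulty. That bound is loose, so I will instead count endorsers directly: correct endorsers number at most $\beta_i+(n-|P|)\le f$, and faulty endorsers add at most $f$, giving at most $2f$ in total. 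This is the main obstacle, since the bookkeeping must avoid double counting faulty processors inside $P$ among the $\beta_i$. The clean way is to bound all correct endorsers by $\beta_i+(n-|P|)$ and faulty ones by $f$ separately; double counting only weakens that bound, so it remains valid.
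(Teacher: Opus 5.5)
Your proof follows the paper's route step for step: designation; the rank-intersection count giving $k_i\ge k^*_i$; the $f$-correct-plus-$f$-faulty count showing that every carry candidate is compatible with $b^*_i$; and the $\beta_i+(n-|P|)\le f$ count for settledness. Your single-endorsed-path observation is a slightly more direct substitute for the paper's appeal to one DA-vote per height.

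Two small repairs are needed.

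First, in part (a), split on whether $b^*_i=\Prop(C,i,k^*_i)$, as the paper does, rather than on whether $k^*_i<m_i$. Suppose $k^*_i=m_i$ but no block extending the proposed tip, including the tip itself, has $n-f$ endorsers in $P$. Then $b^*_i$ defaults to $\Prop(C,i,m_i)$, and your claim that $b^*_i$ has $n-f$ endorsers can fail. Your positions paragraph already proves (a) in this case.

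Second, when you place correct voters for $\ell$ in $V_i$, cite Lemma~\ref{lem:exclusivity} as well as Lemma~\ref{lem:one-vote}. A correct processor's message in $W$ could a priori be a novote.

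The double-counting worry in (b) does not arise. $P$ has distinct signers, so exactly $n-|P|$ processors are absent from it.
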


\begin{proof}
By Lemma~\ref{lem:designation}, $W$ designates $\ell$'s proposal, so
$\Tips(W)$ is computed from the tally $V_i$ of $W$'s votes for
$\ell$, relative to the same chain proposal $C$. A correct
processor appearing in $W$ appears with its unique actual vote, in
$V_i$ if it voted for $\ell$ (in that case it cannot contribute a novote by
Lemma~\ref{lem:exclusivity}, nor a vote for another block by
Lemma~\ref{lem:one-vote}). Fix a chain $i$, and let $f_a\le f$ be the
number of faulty processors.

\emph{Positions.} By definition of $k^*_i$, at least $3f+1$ votes in
$P$, by distinct processors, have $\pi(i)\ge k^*_i$; at least $2f+1$
of these processors are correct, and at most $f$ processors have no
message in $W$, so at least $f+1$ votes in $V_i$ have $\pi(i)\ge
k^*_i$. The $(f+1)$-th largest position in $V_i$ is therefore at
least $k^*_i$. Writing $k_i$ for the position rank of $\Tips(W)_i$,
we have $k_i\ge k^*_i$.

\emph{Proof of (a).} If $b^*_i=\Prop(C,i,k^*_i)$, then the position
block $\Prop(C,i,k_i)$ extends $b^*_i$, and the extension-carry step
only ever replaces a block by one extending it, so $\Tips(W)_i$
extends $b^*_i$. Otherwise $k^*_i=m_i$ and at least $n-f$ votes in $P$,
by distinct processors, endorse $b^*_i$, which strictly extends
$\Prop(C,i,m_i)$. Then $k_i=m_i$ (positions are capped at $m_i$), so
the position block is $\Prop(C,i,m_i)$, an ancestor of $b^*_i$. At
least $n-f-f_a$ of the processors endorsing $b^*_i$ are correct, and
at least $n-f-f_a-f\ge n-3f\ge 2f+1$ of them appear in $V_i$,
endorsing $b^*_i$, so $b^*_i$ is a carry candidate. Moreover \emph{every} carry
candidate is compatible with $b^*_i$: if $b$ is incompatible with
$b^*_i$, then a correct processor endorsing $b$ has DA-voted a block
incompatible, at some height, with the one DA-voted by each of the
$\ge n-f-f_a$ correct processors endorsing $b^*_i$, so (correct
processors DA-voting once per height) at most $(n-f_a)-(n-f-f_a)=f$
correct processors endorse $b$. Therefore $b$ gains at most $f+f_a\le 2f$
votes in $V_i$, which is below the carry threshold. A deepest candidate is at
least as deep as $b^*_i$ and compatible with it, so extends it,
whichever candidate the tie-breaking rule selects. $\Tips(W)_i$ therefore 
extends $b^*_i$.

\emph{Proof of (b).} Suppose chain $i$ is settled, meaning  $k^*_i=m_i$ and
$\beta_i+(n-|P|)\le f$. Any correct processor whose vote endorses a
block strictly extending $b^*_i$ either has its vote in $P$ (there are
at most $\beta_i$ such) or has no vote in $P$ (there are at most
$n-|P|$ such processors), giving at most $f$ correct processors in all. A
carry candidate $b'$ strictly extending $b^*_i$ would need $2f+1$ votes
in $V_i$ endorsing $b'$, of which at most $f$ correct and at most
$f_a\le f$ faulty exist, so no such $b'$ exists. Candidates incompatible with $b^*_i$
are excluded as in (a), and positions are capped at
$m_i$, so every candidate, and the position block, is an ancestor of
or equal to $b^*_i$. By (a), $\Tips(W)_i$ extends $b^*_i$. Combining,
$\Tips(W)_i=b^*_i$.
\end{proof}

\begin{lemma}[Emission is a prefix]\label{lem:prefix}
In the setting of Lemma~\ref{lem:safe-extend},
$\Emit(P)\preceq\Ord(Q')$ for every V-QC $Q'$ for view $v$. In
particular, any two values $\Emit(P)$, $\Emit(P')$ extracted from
pools of votes for $\ell$ (by different processors, or at different
times) are compatible.
\end{lemma}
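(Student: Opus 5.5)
The plan is to unfold both sides down to a common prefix and then compare the two horizontal sweeps position by position, using Lemma~\ref{lem:safe-extend} at every position. To avoid a clash of notation, write $\ell=(v,H(Q_0),C)$ for the block in the statement, so that $Q_0$ is the V-QC $\ell$ references. Then by definition $\Emit(P)=\Ord(Q_0)^\frown\rho$, with $\rho$ the halting sweep over the positions above $\Tips(Q_0)$ towards $\TipsF(P)$.

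\emph{Step 1: reduce to one sweep.} Let $Q'$ be any V-QC for view $v$. It certifies a view $v$ V-notarisation, and by Lemma~\ref{lem:designation} no view $v$ leader block other than $\ell$ receives $2f+1$ votes. The single leader block that $Q'$ contains must therefore be $\ell$, and its referenced V-QC is $Q_0$. Hence
\[
\Ord(Q')=\Ord(Q_0)^\frown\Horiz\bigl(\Tips(Q_0),\Tips(Q')\bigr),
\]
and it suffices to show $\rho\preceq\Horiz(\Tips(Q_0),\Tips(Q'))$. As the paper notes, $\Tips$ and $\Ord$ depend only on the underlying notarisation and the designated proposal, so I will argue with the notarisation $W$ that $Q'$ certifies.

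\emph{Step 2: compare the sweeps.} Both sequences visit exactly the same positions above $T:=\Tips(Q_0)$, in the same horizontal order. I claim that, at every position $(o,j)$ visited by $\rho$ before it halts, the two sweeps append the same thing. Let $h$ be the height of the position. There are two cases.
\begin{itemize}[itemsep=0pt]
  \item If $h\le b^*_j.\mathrm{height}$, then $\rho$ appends the ancestor of $b^*_j$ at height $h$. By Lemma~\ref{lem:safe-extend}(a), $\Tips(W)_j$ extends $b^*_j$, so $h\le\Tips(W)_j.\mathrm{height}$, and the ancestor of $\Tips(W)_j$ at height $h$ is the same block. This holds even when $b^*_j$ lies on a branch incompatible with $T_j$, since $\Horiz$ reads blocks off $T'_j$'s own path.
  \item If $h>b^*_j.\mathrm{height}$ and chain $j$ is settled, then $\rho$ appends nothing. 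By Lemma~\ref{lem:safe-extend}(b), $\Tips(W)_j=b^*_j$, so $\Horiz$ also appends nothing.
\end{itemize}
The only remaining case is that $\rho$ halts. Its output is then exactly the portion of $\Horiz(T,\Tips(W))$ produced before that position, which is a prefix. This gives $\Emit(P)\preceq\Ord(Q')$.

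\emph{Step 3: compatibility.} The point to check here is that the ``in particular'' clause cannot rely on some view $v$ V-QC actually having been assembled. Instead, I take any $n-f$ votes forming the L-notarisation of $\ell$. These constitute a view $v$ V-notarisation $W_0$: they come from $n-f$ distinct processors and include at least $2f+1$ votes for $\ell$. Steps 1 and 2 apply verbatim to $W_0$, since they used only its notarisation structure, the designation of $\ell$'s proposal, and Lemma~\ref{lem:safe-extend}. So every $\Emit(P)$ and $\Emit(P')$ is a prefix of the single sequence $\Ord(Q_0)^\frown\Horiz(\Tips(Q_0),\Tips(W_0))$, and the two are therefore compatible.

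I expect the main obstacle to be bookkeeping rather than any new combinatorics. One care point is the settled case, where it must be the equality in Lemma~\ref{lem:safe-extend}(b) that prevents $\Horiz$ from appending a block at a position that $\rho$ skips. The other is the possible anchor jump, where $T_j$ and $b^*_j$ lie on different branches.
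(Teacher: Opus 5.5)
Your proposal is correct and follows the paper's proof essentially step for step. You reduce, via Lemma~\ref{lem:designation}, to comparing $\rho$ with $\Horiz(\Tips(Q_0),\Tips(Q'))$; you make the same two-case, position-by-position comparison using parts (a) and (b) of Lemma~\ref{lem:safe-extend}; and for the compatibility clause you observe, as the paper does, that any $n-f$ votes of the L-notarisation already form a view $v$ V-notarisation.
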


\begin{proof}
By Lemma~\ref{lem:designation}, the leader block contained in $Q'$ is
$\ell$, so $Q'$ references $Q$ and
$\Ord(Q')=\Ord(Q)^\frown\Horiz(\Tips(Q),\Tips(Q'))$, while
$\Emit(P)=\Ord(Q)^\frown\rho$: it suffices to show that 
$\rho\preceq\Horiz(\Tips(Q),\Tips(Q'))$. Note that
Lemma~\ref{lem:safe-extend} applies to $Q'$, taking $W$ to be the
V-notarisation that $Q'$ certifies. Both sequences visit the
positions above $\Tips(Q)$ in horizontal order. Consider any
position visited by $\rho$ strictly before its halt (if any), on
chain $j$ say. If the position's height is at most that of $b^*_j$,
then $\rho$ appends the ancestor of $b^*_j$ at that height; by
Lemma~\ref{lem:safe-extend}(a) this is also the ancestor of
$\Tips(Q')_j$ at that height, which is what
$\Horiz(\Tips(Q),\Tips(Q'))$ appends. Otherwise chain $j$ is settled
(else $\rho$ would have halted here) and, by
Lemma~\ref{lem:safe-extend}(b), $\Tips(Q')_j=b^*_j$ lies below the
position, so both sequences append nothing. The two sequences thus
agree at every position up to $\rho$'s halt, where $\rho$ ends:
$\rho\preceq\Horiz(\Tips(Q),\Tips(Q'))$.

For the final claim, note that a V-notarisation for view $v$ exists:
any $n-f$ votes of the L-notarisation received by $\ell$ form one.
Both $\Emit(P)$ and $\Emit(P')$ are prefixes of $\Ord(Q')$ for a
V-QC $Q'$ certifying it.
\end{proof}

\begin{theorem}[Consistency]\label{thm:consistency}
If $p_i$ and $p_j$ are correct then, for any timeslots $t$ and $t'$,
$\mathrm{log}_i(t)$ and $\mathrm{log}_j(t')$ are compatible.
\end{theorem}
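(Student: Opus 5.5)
Each correct processor's log is, by line 22 of Algorithm~\ref{alg:consensus}, the transaction sequence of a value $\Emit(P)$ for a pool $P$ of at least $n-f$ votes for some leader block $\ell$. The log only grows as new such values are emitted, with duplicates removed. Since transaction-level logs are obtained from block sequences by the fixed map (concatenate $b.\Tr$, delete repeats), which preserves the prefix relation, it suffices to show the following: any two block sequences $\Emit(P)$ and $\Emit(P')$ emitted by correct processors (possibly for different leader blocks) are compatible. One must also check that each processor's own successive emissions form a chain under $\preceq$, so that ``extend the log'' is well defined. That check will follow from the same argument applied with $p_i=p_j$.

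Fix emissions $\Emit(P)$ for $\ell$ of view $v$ and $\Emit(P')$ for $\ell'$ of view $v'$, with $v\le v'$. Each of $\ell,\ell'$ receives an L-notarisation, since $n-f$ distinct signers vote, so by Lemma~\ref{lem:leader-chain} $\ell$ is an ancestor of $\ell'$. If $v=v'$ then $\ell=\ell'$ by Lemma~\ref{lem:designation}, and the final claim of Lemma~\ref{lem:prefix} gives compatibility.

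If $v<v'$, I claim $\Emit(P)\preceq\Emit(P')$. Write $\ell'=(v',H(Q''),C')$, so $\Emit(P')=\Ord(Q'')^\frown\rho'$. It suffices to show $\Emit(P)\preceq\Ord(Q'')$. The V-QC $Q''$ contains the parent $\ell''$ of $\ell'$, which lies in some view $v''$ with $v\le v''<v'$ and has $\ell$ as an ancestor (Lemma~\ref{lem:leader-chain}). I would then prove, by induction along the leader chain from $\ell$ to $\ell''$, that for every V-QC $R$ whose designated leader block extends $\ell$, $\Ord(R)$ extends some $\Ord(Q_v)$ with $Q_v$ a view $v$ V-QC for $\ell$. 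The step is immediate from the recursive definition $\Ord(R)=\Ord(R')^\frown\Horiz(\cdots)$, where $R'$ is the V-QC referenced by $R$'s leader block. For the base case, the V-QC referenced by the child of $\ell$ in this ancestry is a view $v$ V-QC, and by Lemma~\ref{lem:designation} it designates $\ell$. Combining the induction with Lemma~\ref{lem:prefix}, $\Emit(P)\preceq\Ord(Q_v)\preceq\Ord(Q'')\preceq\Emit(P')$.

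The main obstacle is making this induction rigorous. The ancestry relation passes through whatever V-QCs are actually referenced, which need not be those of any correct processor. For each leader block on the path, the reference must be to a V-QC containing the previous block. This is exactly the definition of parent, so walking down from $\ell''$ through parents until view $v$ yields a chain of V-QCs $R_k,\dots,R_0$ with $\Ord(R_{j+1})$ extending $\Ord(R_j)$ and $R_0$ of view $v$. The definition of parent also forces $R_0$ to contain $\ell$ itself, since at view $v$ the ancestor is $\ell$. Lemma~\ref{lem:prefix} then applies to $Q'=R_0$. The genesis cases ($\Ord(Q_{\mathrm{gen}})=\lambda$) are trivial.
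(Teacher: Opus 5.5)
Your proof is correct and follows the paper's argument essentially step for step. You reduce to compatibility of the block sequences $\Emit(P)$, settle $v=v'$ via Lemma~\ref{lem:designation} and Lemma~\ref{lem:prefix}, and for $v<v'$ walk the leader-chain path from $\ell$ to $\ell'$ through the referenced V-QCs, chaining $\Emit(P)\preceq\Ord(Q^{(0)})\preceq\cdots\preceq\Ord(Q^{(r-1)})\preceq\Emit(P')$. The obstacle you flag is handled exactly as you say: by the definition of parent, each referenced V-QC contains the preceding block on the path, and the paper's explicit enumeration $\ell=\ell_0,\dots,\ell_r=\ell'$ relies on the same fact.
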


\begin{proof}
Every value taken by the log of a correct processor is the
transaction sequence of $\Emit(P)$ for some pool $P$ of $n-f$ or more
votes for a leader block $\ell$ (lines 21--22 of
Algorithm~\ref{alg:consensus}). In particular $\ell$ receives an
L-notarisation. Since prefixes of block sequences yield prefixes of
the corresponding deduplicated transaction sequences, it suffices to
show that any two such values $\Emit(P)$, $\Emit(P')$, for
leader blocks $\ell$ and $\ell'$ of views $v\le v'$, are
compatible as block sequences.

If $v=v'$ then $\ell'=\ell$ by Lemma~\ref{lem:designation},
and the claim is Lemma~\ref{lem:prefix}. So suppose $v<v'$. By
Lemma~\ref{lem:leader-chain}, $\ell$ is an ancestor of
$\ell'$. Let
$\ell=\ell_0,\ell_1,\dots,\ell_r=\ell'$ be the path, and for
each $s<r$ let $Q^{(s)}$ be the V-QC referenced by $\ell_{s+1}$, so
that $Q^{(s)}$ designates the proposal of $\ell_s$. Then:
\begin{itemize}[itemsep=0pt]
  \item $\Emit(P)\preceq\Ord(Q^{(0)})$, by Lemma~\ref{lem:prefix},
    since $Q^{(0)}$ is a V-QC for view $v$;
  \item $\Ord(Q^{(s)})\preceq\Ord(Q^{(s+1)})$ for each $s<r-1$, since
    $\ell_{s+1}$, the leader block contained in $Q^{(s+1)}$,
    references $Q^{(s)}$, so that $\Ord(Q^{(s+1)})$ extends
    $\Ord(Q^{(s)})$ by definition, and;
  \item $\Ord(Q^{(r-1)})\preceq\Emit(P')$, since
    $\ell'$ references $Q^{(r-1)}$ and
    $\Emit(P')=\Ord(Q^{(r-1)})^\frown\rho'$.
\end{itemize}
Chaining, $\Emit(P)\preceq\Emit(P')$.
\end{proof}

\subsection{Liveness}\label{sec:liveness}

Recall from Section~\ref{sec:spec} that $\State$ contains a
certificate (a V-QC, L-QC, nullification or DA-certificate) whenever
it contains messages from which one can be formed. We add one reading
convention: the instructions of Algorithms \ref{alg:chain} and
\ref{alg:consensus} are executed repeatedly within each timeslot,
until no instruction applies, so that a processor holding
view-advancing triggers for several consecutive views passes through
all of them in a single timeslot. Bounds below are stated in terms
of $\delta$, the (unknown) least upper bound on message delay after
GST (Section~\ref{sec:setup}), wherever possible: a message sent at
timeslot $t$ arrives by $\max\{\mathrm{GST},t\}+\delta$, and the
known bound $\Delta\ge\delta$ enters only through the $2\Delta$
timeouts.

\begin{lemma}[View synchronisation]\label{lem:sync}
If the first correct processor $p$ to enter view $v$ does so at timeslot
$t$, then every correct processor enters a view $\ge v$ by
$\max\{\mathrm{GST},t\}+\delta$.
\end{lemma}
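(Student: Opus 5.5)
The plan is to follow the Minimmit argument: show that whatever let $p$ advance is forwarded to all, and arrives within $\delta$. A correct processor moves from view $v-1$ to view $v$ only via lines 11--15 of Algorithm~\ref{alg:consensus}. That transition happens when $\State$ contains a nullification or a V-QC for view $v-1$. Under the reading convention, it may also happen as part of a cascade of such transitions within a single timeslot.

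So at timeslot $t$, the first correct processor $p$ to enter $v$ holds, for some $v_0\le v-1$, a V-QC or nullification for each view in $[v_0,v-1]$. Here $v_0$ is the view $p$ occupied at the start of timeslot $t$. The first step is to observe that, for each of these views $w$, the certificate becomes new for $p$ at some timeslot $\le t$. I will take the \emph{greatest} view in the chain, namely $v-1$, and focus on its certificate alone.

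At the timeslot at which $p$ first holds a nullification or V-QC for view $v-1$, line 2 makes $p$ disseminate it. This holds whichever kind it is, since $p$ regards it as new at that timeslot and forwards it. The timeslot is at most $t$, so the forwarded object reaches every correct processor $q$ by $\max\{\mathrm{GST},t\}+\delta$. Some care is needed in the case where $p$ first held only shares or messages from which the certificate could be formed. The definition of $\State$ regards the certificate as contained at that point, and the `new' definition for nullifications explicitly counts shares. So the forwarded certificate is in $p$'s hands at that moment.

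The step I expect to need most care is showing that receipt of a view $v-1$ certificate moves $q$ to a view $\ge v$ even when $q$ is in some much earlier view $w<v-1$. The guard at line 11 only tests the \emph{current} view $\mathtt{v}$, so a certificate for $v-1$ alone does not advance $q$ from $w$. To resolve this, I will strengthen the claim by induction on $v$. The inductive statement is that every correct processor holds, by $\max\{\mathrm{GST},t_w\}+\delta$, a certificate for every view $w<v$ that some correct processor has passed through, where $t_w$ is the first time a correct processor left $w$.

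Apply the forwarding argument to each view in $[1,v-1]$. The first correct processor to leave view $w$ does so at time $t_w\le t$, and forwards the certificate at that time. Hence by $\max\{\mathrm{GST},t\}+\delta$ every correct $q$ holds a nullification or V-QC for \emph{every} view $w\in[1,v-1]$. The reading convention then lets $q$ cascade through lines 11--15 within the timeslot of arrival, from its current view all the way to a view $\ge v$. If $q$ is already beyond $v$, there is nothing to show. The claim follows.
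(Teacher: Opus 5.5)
Your proof is correct and is essentially the paper's own argument. A nullification or V-QC for every view below $v$ was forwarded (line 2 of Algorithm~\ref{alg:consensus}) at some timeslot $\le t$, so all of them arrive by $\max\{\mathrm{GST},t\}+\delta$, and the reading convention then cascades each correct processor to a view $\ge v$. The paper simply takes $p$ itself as the forwarder of every certificate, since $p$ passed through each view $v'<v$; the induction framing is therefore unnecessary, and your direct per-view argument already suffices.
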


\begin{proof}
Processor $p$ passed through every view $v'<v$, so its local value $\State$ at $t$
contains, for each such $v'$, a nullification or V-QC for $v'$. For
each such $v'$ it has therefore forwarded a nullification or V-QC for $v'$ to all processors at the timeslot $\le t$
at which one first became new to it (line 2 of
Algorithm~\ref{alg:consensus}). All of these messages arrive by
$\max\{\mathrm{GST},t\}+\delta$. A correct processor holding, for
every $v'<v$, a nullification or V-QC for $v'$ advances (lines 11 and
14), within the timeslot, to a view $\ge v$.
\end{proof}

\begin{lemma}[View exit]\label{lem:exit}
Suppose the first correct processor to enter view $v$ does so at
timeslot $t\ge\mathrm{GST}$. Then every correct processor enters a
view $\ge v+1$ by $t+2\Delta+3\delta$.
\end{lemma}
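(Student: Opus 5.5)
We time the argument from $t$. By Lemma~\ref{lem:sync}, every correct processor is in a view $\ge v$ by $t+\delta$. It therefore suffices to show that some correct processor holds a nullification or V-QC for view $v$ by $t+2\Delta+2\delta$. Forwarding at line 2 of Algorithm~\ref{alg:consensus} then delivers it to all by $t+2\Delta+3\delta$. Any correct processor still in view $v$ then advances at lines 11--14, and within the same timeslot it passes through any further triggers it holds.

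A correct processor that has left view $v$ before this point already holds such an object and has forwarded it, so that case is covered. Fix the correct processors still in view $v$. Each entered view $v$ by $t+\delta$, and its timer reaches $2\Delta$ by $t+2\Delta+\delta$. At that moment it has either voted for some view $v$ leader block or, at line 10, sent both novote$(v)$ and nullify$(v)$. Hence by $t+2\Delta+\delta$ every correct processor has sent exactly one view $v$ message of type vote or novote; exclusivity is Lemma~\ref{lem:exclusivity}. Processors that left view $v$ early are fine, since they already hold a V-QC or nullification, and its arrival suffices. All these messages reach every correct processor by $t+2\Delta+2\delta$.

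The case split is as follows. Suppose some leader block $\ell$ receives $2f+1$ votes among the correct ones. There are at least $n-f$ correct processors, each contributing one vote or novote, and these form a V-notarisation. Every correct processor holds it by $t+2\Delta+2\delta$ and so forms a V-QC. Otherwise, no block gets $2f+1$ correct votes. Consider a correct voter $p$ for $\ell$. The messages from correct processors that are novotes, or votes for blocks other than $\ell$, come from at least $n-f-2f\ge 2f+1$ processors, since $n-f\ge 4f+1$ and $\ell$ has at most $2f$ correct voters. So $p$ fires lines 16--19 by $t+2\Delta+2\delta$ and sends nullify$(v)$. The same holds for correct timeout processors at line 10. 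The nullify shares then arrive by $t+2\Delta+3\delta$, so there are $n-f\ge 2f+1$ nullify messages and a nullification forms at every correct processor by $t+2\Delta+3\delta$.

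The main obstacle is the accounting of $\delta$'s. In the nullification branch, correct voters only learn of the disagreement at $t+2\Delta+2\delta$, so their shares land by $t+2\Delta+3\delta$. This is exactly the bound claimed, with no separate forwarding hop needed, because every correct processor assembles the nullification itself from shares sent to all. The V-QC branch finishes by $t+2\Delta+2\delta$. Some care is also needed with processors that left view $v$ before sending a view $v$ message. Such a processor already holds a certificate for view $v$ and forwarded it upon its becoming new, so every correct processor has it by the deadline anyway.
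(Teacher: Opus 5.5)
Your proof is correct and follows essentially the paper's argument. Early exits by $t+2\Delta+2\delta$ are disposed of by forwarding, which the paper does by invoking Lemma~\ref{lem:sync}. Otherwise every correct processor sends a vote or novote by $t+2\Delta+\delta$, and you split on whether some leader block has $2f+1$ correct votes. That split yields either a V-QC by $t+2\Delta+2\delta$, or, via lines 16--19 and the count $n-3f\ge 2f+1$, a nullification assembled at every correct processor by $t+2\Delta+3\delta$. The only blemish is your opening reduction (``some correct processor holds a certificate by $t+2\Delta+2\delta$''): the nullification branch does not deliver that. As you note at the end, however, that branch reaches every correct processor directly by $t+2\Delta+3\delta$, so nothing is lost.
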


\begin{proof}
Let $t'$ be the first timeslot at which a correct processor enters a
view $\ge v+1$ (if ever). If $t'\le t+2\Delta+2\delta$ then, by
Lemma~\ref{lem:sync}, we are done. So suppose no correct processor
enters a view $\ge v+1$ by $t+2\Delta+2\delta$. We show $t'\le
t+2\Delta+3\delta$.

By Lemma~\ref{lem:sync}, every correct processor enters view $v$ by
$t+\delta$, and, remaining in the view, sends exactly one of the
following by $t+2\Delta+\delta$: a vote (lines 7--8 or 12--13), or,
at $2\Delta$ on its local timer, a novote($v$) and nullify($v$) pair
(lines 9--10). By $t+2\Delta+2\delta$, therefore, every correct
processor holds a view $v$ vote or novote from every correct
processor, giving at least $n-f$ view $v$ messages by distinct
processors. There are two cases.

Suppose first that at least $2f+1$ correct processors vote for a
common leader block $\ell$. Then by $t+2\Delta+2\delta$ the $\State$
of every correct processor contains a V-QC for view $v$ (assembled
from $n-f$ of the correct messages), and every correct processor
enters view $v+1$ at $t+2\Delta+2\delta$ (lines 11 and 14), a
contradiction to our supposition.

Otherwise, every leader block receives at most $2f$ correct votes.
Consider any correct processor $p$ that voted, for $\ell_p$ say. Of
the $\ge n-f$ correct view $v$ messages that $p$ holds by
$t+2\Delta+2\delta$, those that are votes for $\ell_p$ number at most
$2f$, so at least $n-3f\ge 2f+1$, by distinct processors, are each a
nullify($v$), a novote($v$), or a vote for a leader block other than
$\ell_p$. At $t+2\Delta+2\delta$, $p$ then sends nullify($v$) via
lines 16--19, if it has not already done so. Every correct processor
has thus sent nullify($v$) by $t+2\Delta+2\delta$. The shares arrive
by $t+2\Delta+3\delta$, every correct processor assembles a
nullification, and enters view $v+1$.
\end{proof}

Since the first correct processor to enter view $v+1$ does so at or
after the first correct entry into view $v$, an induction gives:

\begin{corollary}\label{cor:progression}
Every view is eventually entered by every correct processor.
\end{corollary}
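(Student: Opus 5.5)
The plan is an induction on $v$, using Lemmas~\ref{lem:sync} and~\ref{lem:exit}, split at GST. The claim is that every correct processor eventually enters view $v$ for every $v\ge 1$. All correct processors start in view $1$, so the base case is immediate. For the inductive step, we know every correct processor eventually enters view $v$, and we need the same for view $v+1$.

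First suppose the first correct entry into view $v$ happens at some timeslot $t\ge\mathrm{GST}$. Then Lemma~\ref{lem:exit} applies directly: every correct processor enters a view $\ge v+1$ by $t+2\Delta+3\delta$. Views only increase, and by the reading convention a processor holding triggers for several views passes through each of them, so every correct processor in particular passes through view $v+1$.

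Now suppose the first correct entry into view $v$ happens at some $t<\mathrm{GST}$. There are two possibilities. If some correct processor enters view $v+1$ at any time, Lemma~\ref{lem:sync} gives that every correct processor reaches a view $\ge v+1$ within bounded time, and we are done as before. Otherwise, no correct processor ever leaves view $v$. In that case we rerun the argument of Lemma~\ref{lem:exit} starting from time $\max\{\mathrm{GST},t\}$ instead of $t$. By Lemma~\ref{lem:sync}, all correct processors are in view $v$ by $\mathrm{GST}+\delta$. Each one either votes or times out within $2\Delta$ of entering the view, and after GST all of these messages are delivered within $\delta$. The two-case analysis then goes through unchanged: if $2f+1$ correct votes go to a common block, a V-QC forms; otherwise, every correct voter collects $n-3f\ge 2f+1$ non-supporting messages, nullifies via lines 16--19, and a nullification forms. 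Either outcome contradicts the assumption that no correct processor advances.

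The main obstacle is this pre-GST case. The statement of Lemma~\ref{lem:exit} assumes $t\ge\mathrm{GST}$, so I will not cite it as stated. Instead I will note that its proof uses the bound on $t$ only through message delivery and the timers. A processor that entered view $v$ before GST fires its timeout by $\max\{\mathrm{GST},t\}+2\Delta+\delta$, or has already voted or nullified by then. Its messages therefore arrive by $\max\{\mathrm{GST},t\}+2\Delta+2\delta$. This yields view exit by $\max\{\mathrm{GST},t\}+2\Delta+3\delta$, which closes the induction.
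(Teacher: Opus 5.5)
Your proof is correct and follows the paper's route: an induction on views driven by Lemma~\ref{lem:exit}, with Lemma~\ref{lem:sync} spreading each view entry from the first correct processor to all, and views advancing one at a time. Your explicit rerun of the view-exit argument from $\max\{\mathrm{GST},t\}$ supplies a step that the paper's one-line justification leaves implicit, and the step is genuinely needed: correct processors start before GST, so view~$1$, and likewise the last view entered before GST, never meets the hypothesis $t\ge\mathrm{GST}$ of Lemma~\ref{lem:exit} as stated.
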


\begin{lemma}[Correct-leader views]\label{lem:leader-view}
Suppose $\lead(v)$ is correct and the first correct processor to
enter view $v$ does so at timeslot $t\ge\mathrm{GST}$. Then every
correct processor votes for the leader's block $\ell$ by $t+2\delta$,
so that $\ell$ receives an L-notarisation. By $t+3\delta$ every
correct processor holds a pool of at least $n-f$ votes for $\ell$, so
finalises $\ell$ (lines 21--22) and enters view $v+1$.
\end{lemma}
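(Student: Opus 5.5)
The plan is to trace the leader's proposal through the first correct vote event, using Lemma~\ref{lem:sync} for timing, and then to apply the pool-finalisation rule. The first step concerns when the leader proposes. Let $p_L=\lead(v)$, which is correct. By Lemma~\ref{lem:sync}, $p_L$ enters view $v$ at some $t_L\le t+\delta$, and it executes $\ProposeChains(\SelectAnchor(\State,v),v)$ in that same timeslot. The anchor $Q$ returned by $\SelectAnchor$ is the V-QC for the greatest $v'<v$ that $p_L$ holds, and $p_L$ has already forwarded it to all when it became new. Because $p_L$ passed through every view in $(v',v)$ without holding a V-QC for any of them, it must have held a nullification for each such view, and it forwarded those too on first receipt. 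All of these objects, together with the leader block $\ell$, were therefore sent at or before $t_L$. Since $t_L\ge t\ge\mathrm{GST}$, they all arrive at every correct processor by $t_L+\delta\le t+2\delta$. Well-formedness of $C$ with respect to $\Tips(Q)$ holds by construction of $\ProposeChains$. The leader is correct and so signs only one view $v$ block, which means condition (i) of validity holds as well. Consequently, by $t+2\delta$ every correct processor that is still in view $v$ holds a valid proposal $\ell$.

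The main obstacle is the second step: showing that every correct processor actually votes for $\ell$, rather than having exited the view, voted for something else, or nullified first. I would handle these possibilities one at a time.

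\emph{Timeouts.} A correct processor enters view $v$ no earlier than $t$, so its $2\Delta$ timer cannot fire before $t+2\Delta\ge t+2\delta$. Strictly, the timer fires at $t+2\Delta$, while the proposal may arrive exactly at $t+2\delta$. I would argue that within a timeslot the instructions are executed in order, so the vote at line 8 precedes line 9. If that ordering is not enough, I would fall back on the convention that a message arrives strictly before the bound is exceeded.

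\emph{Other votes.} The only other view $v$ leader blocks would have to be signed by the correct leader, so none exist. Any view $v$ V-QC must therefore designate $\ell$, and the route through lines 12--13 also yields a vote for $\ell$.

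\emph{Early nullification via lines 16--19.} This route requires $\mathtt{notarised}\ne\bot$, so it can only occur after the processor has already voted. It is therefore harmless.

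\emph{Early exit.} A processor might leave view $v$ before voting because it received a nullification or a V-QC. A V-QC for view $v$ triggers a vote at line 13 before the exit. For a nullification, I would argue that none can form before $t+2\Delta$. No correct processor sends nullify($v$) before voting except at its timeout, and a correct processor sends it after voting only upon seeing $2f+1$ messages that are novotes or nullifies. Before $t+2\Delta$ there are at most $f$ such senders, since correct processors do not time out that early and there are no competing blocks.

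This yields $n-f$ correct votes for $\ell$ by $t+2\delta$, and hence an L-notarisation.

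The final step is routine. The votes are disseminated by $t+2\delta$ and arrive by $t+3\delta$. Each correct processor's pool then contains at least $n-f$ votes for $\ell$ from distinct processors, so lines 21--22 finalise $\ell$. Any $n-f$ of these votes form a V-QC for view $v$ designating $\ell$, so line 11 moves the processor to view $v+1$. If the processor had already moved on via a nullification, I would appeal to the no-nullification argument just given together with Lemma~\ref{lem:no-null}.
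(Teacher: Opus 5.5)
Your proof follows the paper's argument nearly step for step: the forwarded anchor and interim nullifications give a valid proposal everywhere by $t+2\delta$; lines 6--8 precede the timeout; no nullification ejects a correct processor before it votes; and the pool at $t+3\delta$ finalises $\ell$ and supplies the V-QC for exit. Two claims, however, are asserted where the paper supplies a `first correct processor' induction. The first is `any view $v$ V-QC must therefore designate $\ell$'. This does not follow as written. QCs carry the leader block \emph{without} the leader's signature, and the vote-along rule (lines 12--13) does not check validity. So a V-QC naming an unsigned tuple $\ell'$ would be usable if $2f+1$ votes for $H(\ell')$ existed; the leader signing only one block constrains line-8 votes and nothing more. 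The fix is as follows. The earliest correct vote for any $\ell'\neq\ell$ cannot come via lines 12--13, since that presupposes a view $v$ V-QC for $\ell'$, hence at least $f+1$ strictly earlier correct votes for it. Nor can it come via line 8, since there is no valid proposal other than $\ell$. So correct processors vote only for $\ell$, and every view $v$ V-QC, containing $f+1$ correct votes, designates $\ell$. Your `no competing blocks' in the nullification count rests on exactly this fact.

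The second is your bound of at most $f$ qualifying senders before $t+2\Delta$, which is circular as stated. Correct processors that have already voted may send nullify($v$) via lines 16--19, and such shares are admissible evidence for one another. This is also why calling that route `harmless' is premature: it is harmless only in combination with this bound. Consider instead the first correct processor to nullify that way. Its $2f+1$ evidence messages were sent strictly earlier, and correct processors contribute none of them: no novotes (no timeout fires before $t+2\Delta$), no earlier nullifies (by minimality), and no votes for blocks other than $\ell$ (by the previous paragraph). That leaves at most $f$, a contradiction. With these two inductions made explicit the proof is complete. The paper proves the slightly stronger fact that no correct processor ever sends novote($v$) or nullify($v$), which makes your closing appeal to Lemma~\ref{lem:no-null} unnecessary, though that appeal is also valid.
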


\begin{proof}
By Lemma~\ref{lem:sync}, every correct processor enters view $v$ by
$t+\delta$. Entry times thus lie in $[t,t+\delta]$, so no correct
processor's view $v$ timer reaches $2\Delta$ before $t+2\Delta$.

\emph{Every correct processor holds a valid proposal by $t+2\delta$.}
The leader enters view $v$ at some $t'\le t+\delta$ and immediately
proposes (lines 3--5), disseminating $\ell=(v,H(Q),C)$ for the V-QC
$Q$ returned by $\SelectAnchor$, of view $v'$ say, and $\ell$ arrives
everywhere by $t'+\delta\le t+2\delta$. Since $\SelectAnchor$
returns a V-QC for the \emph{greatest} view for which the leader
holds one, the leader advanced through each view $u\in(v',v)$ by
nullification, and holds a nullification for each. The V-QC $Q$ was forwarded to all by the
leader itself upon its becoming new, at some timeslot $\le t'$
($\SelectAnchor$ returns exactly the forwarded object), and so
arrives everywhere by $\max\{\mathrm{GST},t'\}+\delta\le
t+2\delta$; each
nullification became new to the leader at some timeslot $\le t'$
and was forwarded to all upon becoming new (line 2, by the leader or
by whoever first supplied it), so arrives everywhere by
$\max\{\mathrm{GST}, t'\}+\delta\le t+2\delta$ (threshold
signatures being unique, what was forwarded is the nullification). The leader being
correct, $\ell$ is the unique view $v$ leader block signed by
$\lead(v)$ and $C$ is well-formed with respect to $\Tips(Q)$:
conditions (i)--(iii) of proposal validity hold at every correct
processor by $t+2\delta$.

\emph{No correct processor sends a novote($v$) or nullify($v$)
message.} Timeouts (lines 9--10) fire no earlier than $t+2\Delta\ge
t+2\delta$, by
which time $\State$ contains the valid proposal, and lines 6--8
precede lines 9--10 within a timeslot, so a processor reaching $2\Delta$
on its timer votes rather than timing out (its guards hold, as we
check below). For nullify($v$) via lines 16--19: consider the first
correct processor to send nullify($v$). It has voted, necessarily for
$\ell$ (a vote via lines 7--8 requires a valid proposal, and $\ell$
is the unique candidate; a vote via lines 12--13 requires a view $v$
V-QC designating another block, hence $f+1$ correct votes for that
block, which do not exist, by this same argument applied inductively
to earlier voters). Its evidence must contain $2f+1$ messages, none
supporting $\ell$. Correct processors supply no novotes and no
earlier nullifies (it is first), and no votes for other blocks, so at
most $f$ faulty messages qualify, i.e., the evidence never accumulates.

\emph{Every correct processor votes for $\ell$ by $t+2\delta$.} At
the first timeslot ($\le t+2\delta$) at which it holds the valid
proposal while in view $v$, its guards hold:
$\mathtt{nullified}=\mathrm{false}$ since it sends no nullify($v$)
(above); $\mathtt{notarised}$ is $\bot$ unless it has already voted
for $\ell$ (the unique votable block, as above). It remains to check
that it is still in view $v$. It cannot have entered $v+1$ via a
nullification for $v$ ($f+1$ correct shares would be needed, and
none exist), and if via a V-QC for $v$, then lines 12--13 fire before
line 14, so it voted for $\ell$ first.

Thus at least $n-f$ correct processors vote for $\ell$ by
$t+2\delta$, providing an L-notarisation. The votes arrive at all correct processors by
$t+3\delta$, whereupon every correct processor finalises $\ell$ (lines 21--22)
and, holding a V-QC for view $v$ (any $n-f$ of the votes), enters
view $v+1$.
\end{proof}

Lemma~\ref{lem:leader-view} is the $3\delta$ figure of
Section~\ref{sec:intuition-raptr}. In a correct-leader view after
GST, the $2\Delta$ timeouts never bind, and every bound is governed
by the actual network delay.

\paragraph{Data availability.} One convention remains before the
theorem. Recall from Section~\ref{sec:setup} that we solve
Extractable SMR: the protocol guarantees that the data of every block
entering the ordering is retrievable, with the retrieval mechanism
left to implementation. Note first that $\Emit(P)$ is computable
without block data: votes, chain proposals and headers determine the
emitted sequence of block \emph{identifiers}. We accordingly read
lines 21--22 modulo retrieval: the emitted identifiers are determined
at once, and the corresponding transactions enter the log as the data
of each emitted block is obtained. Retrieval terminates, because every
block of an emitted sequence, together with its ancestors, is held in
full by at least one correct processor, and by at least $2f+1$
whenever the block lies at or below a finalised tip. (Briefly: at
least $2f+1$ correct processors DA-voted at or above any finalised
tip; a DA-vote requires possession of the block and of its uncertified
ancestors down to a certified block; and every certified block, with
its ancestors, is held by at least $2f+1$ correct DA-voters, by an
induction descending the chain. The weaker one-holder bound concerns
only blocks carried by safe-to-extend tips, whose $f+1$ endorsing
votes guarantee a single correct holder, on branches later abandoned:
possible only under an equivocating producer. The availability lemma
of the next subsection gives the full accounting.)
A correct processor
missing the data of an emitted block requests it from all processors,
and after GST a correct holder answers within $2\delta$. The same
convention covers the V-QCs of the reference chain, which computing
$\Ord$ requires and which a processor finalising via the pool need
not hold: a processor missing a V-QC named by hash requests it from
all. Possession is guaranteed, since any leader block on the
reference chain of a finalised block received $2f+1$ votes, and the
first $f+1$ correct processors to vote for it necessarily did so via
the valid-proposal route (a vote via lines 12--13 presupposes a view
V-QC containing $f+1$ earlier correct votes), each therefore holding
the referenced V-QC. In the common case no such retrieval occurs,
each view's referenced V-QC having been forwarded to all upon
first receipt, by the leader among others.

\begin{theorem}[Liveness]\label{thm:liveness}
If $p_i$ and $p_j$ are correct and $p_i$ receives the transaction
$\mathrm{tr}$ then, for some $t$,
$\mathrm{tr}\in\mathrm{log}_j(t)$.
\end{theorem}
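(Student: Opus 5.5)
The argument has three steps: get $\mathrm{tr}$ into a DA-certified block on chain $i$; show that some later correct-leader view after GST carries that block into the safe-to-extend tips of a V-QC; and show that a subsequent finalised leader block emits everything up to those tips.

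\emph{Step 1 (the transaction enters a certified block).} Since $p_i$ is correct, $\mathrm{tr}$ enters $\mathsf{pending}(\State)$ and stays there until some $\ProduceNext$ includes it. I will show that $\mathsf{Ready}$ eventually holds.

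\begin{itemize}
  \item The work and pacing conditions are met within bounded time by assumption.
  \item The window condition needs $p_i$'s own blocks to become certified. After GST, every block $p_i$ disseminates reaches all correct processors within $\delta$.
  \item Each correct processor DA-votes for such a block. The DA-voting conditions (i)--(iii) hold because $p_i$ extends only its own chain and never exceeds the window.
  \item So $n-f\ge n-2f$ shares return, and $p_i$ certifies each block within $2\delta$. The window therefore reopens.
\end{itemize}

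Hence some block $b$ on chain $i$ containing $\mathrm{tr}$ is produced and becomes DA-certified. All correct processors receive the certificate, together with $b$ itself.

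\emph{Step 2 (a correct-leader view includes $b$).} By Corollary~\ref{cor:progression} every view is entered, and the leaders rotate. So pick a view $v$ with $\lead(v)$ correct whose first correct entry is after the time all correct processors hold the certificate for $b$. By Lemma~\ref{lem:leader-view}, every correct processor votes for $\ell_v$ and finalises it, and a view $v$ V-QC exists.

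The leader's $\ProposeChains$ sets the chain $i$ anchor to its greatest DA-certificate. That certificate has height at least that of $b$, unless $\Tips(Q)_i$ is already at least as high. In the latter case $\Tips(Q)_i$ extends $b$ by Lemma~\ref{lem:cert-compat}, provided $\Tips(Q)_i$ is itself compatible with $b$.

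This compatibility is the main obstacle. I will argue it from data availability: any tip specified by a V-QC is DA-voted by a correct processor, and its path runs down through certified blocks. Those certified blocks are compatible with $b$ by Lemma~\ref{lem:cert-compat}. Honesty of $p_i$ rules out incompatible uncertified branches, since $p_i$ never signs two blocks at one height.

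So for every view $v$ as above, $\Tips$ of any view $v$ V-QC on chain $i$ reaches $b$ or beyond. In the anchor case this holds because positions are $\ge 0$ and the base is at or above $b$. In the other case it holds because $\Tips$ extends the anchor. Consequently $b$ lies in $\Ord(Q'')$ for every view $v$ V-QC $Q''$.

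\emph{Step 3 (emission).} Take the next correct-leader view $v'>v$ after GST. By Lemma~\ref{lem:leader-view}, $\ell_{v'}$ is finalised by every correct processor, $p_j$ included, with a pool $P$. Its anchor is a V-QC $Q'$ for some view $\ge v$, forwarded by $\SelectAnchor$.

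I need $\Ord(Q')$ to contain $b$. Follow the reference chain of $\ell_{v'}$ back to view $v$. The views in between are either nullified or designate blocks descending from $\ell_v$, by Lemma~\ref{lem:leader-chain}, because $\ell_v$ received an L-notarisation. So that chain passes through a view $v$ V-QC, and $\Ord$ grows monotonically along it. Therefore $b\in\Ord(Q')\preceq\Emit(P)$.

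It follows that $\mathrm{tr}\in\mathrm{log}_j(t)$ once the data is retrieved, using the availability convention stated before the theorem. Deduplication never removes the first occurrence of $\mathrm{tr}$.
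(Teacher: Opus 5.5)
Your proof is correct. Steps 1 and 3 are essentially the paper's, and the real difference is in Step 2.

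\textbf{The paper's Step 2.} The paper fixes a height $h^*\ge h_b$ and a time by which every correct processor has DA-voted chain $i$ through $h^*$ and holds its certificates through $h^*-d$. A correct leader's base then sits at height at least $h^*-d$, and its entries reach $h^*$. Every correct voter therefore reports a position at height at least $h^*$, and so does the $(3f+1)$-th largest position. Hence $b$ lies below the view-$v$ \emph{finalised} tip. Step 3 then needs Lemma~\ref{lem:safe-extend} to transfer this to $\Tips(Q_v)_i$.

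\textbf{Your Step 2.} You instead wait until $b$ itself is DA-certified and the leader holds its certificate. Then $\ProposeChains$ places the chain-$i$ base at or above $b$. Since $\Tips$ of any V-QC designating that proposal extends its base, $b$ lies below $\Tips(Q_v)_i$ directly, with no rank counting and no appeal to Lemma~\ref{lem:safe-extend}.

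Compatibility needs less than you invoke. It suffices that every chain-$i$ block a correct processor DA-votes, or sees certified, is signed by the honest $p_i$ and so lies on its single path; Lemma~\ref{lem:cert-compat} is not needed. Your claim that every V-QC tip is DA-voted by a correct processor needs a routine induction along the reference chain for rank-$0$ hash anchors, as in Lemma~\ref{lem:availability}(b).

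\textbf{What each route buys.} Yours is shorter and rests only on the correct leader's anchor rule. The paper's proves more: $b$ is finalised in view $v$ through the uncertified position path, without waiting for $b$'s own certificate. That mechanism is what underlies the latency claims. For an eventual property like liveness, either suffices.

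One small point: to apply Lemma~\ref{lem:leader-view}, the chosen view's first correct entry must also be at or after GST. Your outline states this, but Step 2 omits it.
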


\begin{proof}
We may suppose $\mathrm{tr}$ never enters $\mathrm{log}_j$, and
derive a contradiction. Throughout, we use the fact that all delays
after GST are bounded, so that `eventually' claims compose.

\emph{Step 1: $\mathrm{tr}$ enters $p_i$'s chain.} We first observe
that every block $p_i$ produces is eventually DA-certified, with the
certificate eventually held by every correct processor. By induction
on height $h$: once every correct processor holds the certificate for
$p_i$'s block at height $h-1$ together with $p_i$'s block $b_h$
(which $p_i$ disseminated), the block $b_h$ is eligible at every
correct processor: condition (i) holds since a correct processor's
DA-votes on chain $i$ are all for blocks of $p_i$'s unique chain
($p_i$ signs one block per height, and forgeries are excluded), so no
conflicting vote exists; (ii) holds with $h-h'\le 1$; and (iii)
holds along $p_i$'s chain, the votes below $b_h$ having been sent in
previous steps of the induction. Correct processors DA-vote every eligible
block (line 4 of Algorithm~\ref{alg:chain}), so $p_i$ receives $n-2f$
shares, forms the certificate, and disseminates it. Consequently the
window condition of $\mathsf{Ready}$ is satisfied within bounded time
whenever it fails, and, since $\mathrm{tr}\in\mathsf{pending}$
persistently (it is never included, else Steps 2--3 below apply to
the including block), the pacing condition ensures that $p_i$
eventually executes $\ProduceNext$, producing a block $b$ containing
$\mathrm{tr}$, at height $h_b$ say.

\emph{Step 2: $b$ is finalised.} Let $h^*\ge h_b$ be the height of
$p_i$'s tip at some fixed timeslot. Choose $t^*\ge\mathrm{GST}$
large enough that, by $t^*$, every correct processor has received and
DA-voted every block of chain $i$ up to height $h^*$, and holds every
DA-certificate for chain $i$ up to height $h^*-d$. (Such certificates
exist: by the window condition, $p_i$ produced its block at height
$h^*$ only with a certified block at height at least $h^*-d$, and
certificates are disseminated.) By Corollary~\ref{cor:progression}
and the round-robin $\lead$ function, some view $v$ with correct
leader has its first correct entry at a timeslot $t\ge t^*$. By
Lemma~\ref{lem:leader-view}, every correct processor votes for the
leader's block $\ell$ by $t+2\delta$ and finalises it by
$t+3\delta$.

Now consider the proposal's chain $i$ coordinate. The leader anchors
at $\Tips(Q)_i$ or at a DA-certificate of greater height, whichever
is higher. It holds certificates for chain $i$ through height
$h^*-d$, so the base has height at least $h^*-d$. The entries
comprise every block the leader has DA-voted extending the base, up
to $d$ of them, and the leader has DA-voted chain $i$ through height
$h^*$. The proposed tip therefore has height at least
$\min\{(h^*-d)+d,\,h^*\}=h^*\ge h_b$. If the base itself has height
at least $h^*$, then the finalised tip extends the base, which
extends $b$ on the chain's single path, and we are done. Otherwise
every proposed entry at height at most $h^*$ lies on the single path
and was DA-voted by every correct processor by $t^*$, so every
correct voter reports a position of height at least $h^*$. (The
leader may also propose entries above $h^*$, which correct voters
need not hold; this only lowers reported positions towards the
height-$h^*$ entry, never below it.) The $(3f+1)$-th largest
position over the $n-f$ votes of any L-notarisation for $\ell$
therefore has height at least $h^*\ge h_b$, and the finalised tip
extends the block at that position. Since the blocks of chain $i$
form a single path, $b$ is an ancestor of the view $v$ finalised tip
on chain $i$.

\emph{Step 3: $b$ is emitted.} The value $\Emit(P)$ computed at view
$v$ may halt before reaching $b$'s slot, since some other chain may
be unsettled. Let $v'>v$ be the next view with correct leader, with
first correct entry at a timeslot $t'\ge t$
(Corollary~\ref{cor:progression} again). By
Lemma~\ref{lem:leader-view}, every correct processor finalises the
view $v'$ block $\ell_{v'}$ by $t'+3\delta$. In doing so it computes
$\Emit(P')=\Ord(Q^*)^\frown\rho'$, where $Q^*$ is the V-QC
referenced by $\ell_{v'}$. Now $\ell$ receives an L-notarisation, so
$\ell$ is an ancestor of $\ell_{v'}$ by Lemma~\ref{lem:leader-chain}.
The chain of references from $\ell_{v'}$ therefore passes through a
V-QC $Q_v$ for view $v$, and $\Ord(Q^*)\succeq\Ord(Q_v)$. By
Lemma~\ref{lem:safe-extend}, $\Tips(Q_v)_i$ extends the view $v$
finalised tip on chain $i$, hence extends $b$. Each segment
$\Horiz(T,T')$ of the recursion defining $\Ord$ includes every block
on $T'$'s path in the heights it covers, so $b$ appears in
$\Ord(Q_v)$, hence in $\Ord(Q^*)$, hence in $\Emit(P')$. Every correct processor therefore emits a sequence of
identifiers containing $b$ by $t'+3\delta$, plus bounded time for
any reference-chain V-QCs it must retrieve, and by the retrieval
convention it appends the corresponding transactions within bounded
further time.
(For $b$ itself no retrieval is needed, since every correct processor
holds $b$, having DA-voted it.) Either $\mathrm{tr}$ was already in
its log, as a removed duplicate, or it enters now. In either case
$\mathrm{tr}\in\mathrm{log}_j$,  a contradiction.
\end{proof}

\subsection{Remaining properties}

We collect the remaining claims made in earlier sections: the
availability lemma promised in Section~\ref{sec:liveness}, membership
finality, the block inclusion theorem, and the lower bound forcing the
extraction thresholds. Throughout, `held in full' means possession of
a block's complete data $(i,h,x,\Tr)$, as opposed to its header or
hash, and $f_a\le f$ denotes the number of actually faulty processors
in the execution.

\paragraph{Availability.}

\begin{lemma}[Availability]\label{lem:availability}
\begin{enumerate}[label=(\alph*),itemsep=0pt]
  \item Every certifiable transaction block, and each of its
    ancestors, is held in full by at least $2f+1$ correct processors
    (the set of holders may vary from block to block).
  \item For every V-notarisation $W$ and chain $i$, the block
    $\Tips(W)_i$, and each of its ancestors, is held in full by at
    least one correct processor.
  \item Every block of an emitted sequence $\Emit(P)$ is held in
    full by at least one correct processor; and by at least $2f+1$
    if, on its chain, it lies at or below the finalised tip of some
    emission, where that tip certifies something new: a position
    block of rank $k^*\ge 1$, a DA-certified base, or a unanimously
    endorsed extension block. (The excluded case is $k^*=0$ with a
    hash-anchor base: the finalised tip is then the anchor itself,
    merely re-affirming the previous safe-to-extend tip, whose
    endorsing votes DA-voted nothing; blocks covered by no other
    finalisation retain only the single-holder guarantee, which
    suffices for retrieval.)
\end{enumerate}
\end{lemma}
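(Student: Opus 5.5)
I would prove the three parts in order, each building on the last. The one basic fact used throughout is the DA-voting rule. A correct processor that DA-votes a block $b$ on chain $i$ holds in full every block on the path from the anchoring certified block at height $h'$ up to $b$, by condition (iii). It has also DA-voted each of those blocks.

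\emph{Part (a).} I would argue by induction on height, descending the chain. Let $b$ be certifiable, so that at least $n-2f$ processors DA-vote it, of whom at least $n-3f\ge 2f+1$ are correct. Each of these correct voters holds $b$ in full. The ancestors of $b$ split into two groups.
\begin{itemize}
\item Ancestors strictly between the anchoring certified block and $b$ are held by the same voter, by condition (iii). The anchoring height may differ from voter to voter, so I would not try to fix a common anchor.
\item For an ancestor $a$ at height $h_a$, I take its set of holders to be the correct DA-voters of $b$ whose anchoring height is below $h_a$, together with holders supplied by the induction hypothesis.
\end{itemize}
A simpler route is available. Take any correct voter $p$ of $b$, with anchor $c$ at height $h'$. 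The certified block $c$ lies on $b$'s path, because $b$'s ancestor at height $h'$ is the parent chain that $p$ reconstructed from $c.\mathrm{hash}$. Since $c$ is certifiable and $h'<h$, the induction hypothesis says that $c$ and all its ancestors are held by at least $2f+1$ correct processors. For an ancestor $a$ at height greater than $h'$, every correct DA-voter $p'$ of $b$ either holds $a$ (if its own anchor is below $a$) or has an anchor at height at least that of $a$. In the second case that anchor is a certifiable ancestor of $b$ of lower height, and induction applies to it.

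So I would set up the induction as follows: for every certifiable $b$ and every ancestor $a$ of $b$, the block $a$ is held by at least $2f+1$ correct processors. The induction is on the height of $b$, using Lemma~\ref{lem:cert-compat} to guarantee that anchors lie on $b$'s path. The genesis block is the base case, held by everyone.

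\emph{Part (b).} The tip $\Tips(W)_i$ is endorsed by at least $f+1$ votes of $W$. This holds both for the position block of rank $k_i$ and for the carry, which needs $2f+1$ votes. At least one of these votes is from a correct processor $p$. Endorsement by a correct vote means $p$ DA-voted every block from the proposal's base up to its endorsed tip, which contains $\Tips(W)_i$ as an ancestor. There is one degenerate case: when the tip is the base itself, with position $0$ and no extension. Then the base is either a DA-certified block, handled by (a), or the hash anchor $\Tips(Q)_i$ of the referenced V-QC, handled by induction on the view. Otherwise, the DA-vote for $\Tips(W)_i$ gives possession of the path down to a certified block, and (a) covers the rest.

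\emph{Part (c).} Blocks in $\Emit(P)$ lie either in $\Ord(Q')$ or in the sweep segment $\rho$.
\begin{itemize}
\item A block of $\Ord(Q')$ is an ancestor of some $\Tips(\cdot)$ along the reference chain, so (b) applies.
\item A block of $\rho$ is an ancestor of $b^*_j$, which by Lemma~\ref{lem:safe-extend} lies below $\Tips(W)_j$ for a same-view V-notarisation, so (b) again applies.
\end{itemize}
For the $2f+1$ bound I would split on what the finalised tip certifies.
\begin{itemize}
\item If $k^*\ge 1$, at least $3f+1$ votes report positions $\ge k^*$. At least $2f+1$ of them are correct, and each DA-voted the position block, hence holds it and its uncertified ancestors. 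Each such ancestor is held by those same $\ge 2f+1$ voters down to their anchors, and (a) covers below. The varying-anchor issue is handled exactly as in (a).
\item If the tip is a DA-certified base, (a) applies directly.
\item If the tip is a unanimously endorsed extension, at least $n-f-f_a\ge 2f+1$ correct endorsers DA-voted it.
\end{itemize}

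The main obstacle is the bookkeeping in (a), with anchors varying per voter. I expect Lemma~\ref{lem:cert-compat}, together with the height-indexed induction, to resolve it.
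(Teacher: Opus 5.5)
Your proof is correct. Part (b) and the $2f+1$ half of (c) match the paper's argument, including the key dichotomy for ancestors: either every correct voter holds the block directly, or some correct voter's certifiable anchor sits at or above it and (a) takes over. The real differences are in (a) and in the single-holder half of (c).

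For (a), the paper uses no induction. Given an ancestor $b$ of a certifiable block, it takes a certifiable block $b^*$ of least height extending $b$. A correct voter's anchor at or above $b$'s height would be a lower certifiable block extending $b$, so minimality forces every anchor strictly below $b$, and all $\ge 2f+1$ correct DA-voters of $b^*$ hold $b$ outright. Your strong induction on height performs the same case split step by step; the extremal choice is its well-founded form, shorter and with an explicit holder set. You also do not need Lemma~\ref{lem:cert-compat} to put anchors on $b$'s path, since the hash links of condition (iii) already do so, as you yourself observe.

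For the single-holder claim in (c), you place the final segment below $\Tips(W)_j$ for a same-view V-notarisation via Lemma~\ref{lem:safe-extend}, then apply (b) uniformly. This is valid, because $P$ witnesses an L-notarisation and any $n-f$ of its votes form such a $W$; say so explicitly, since you rely on its existence. The paper instead stays inside the availability argument. Finalised tips of the three \emph{new} kinds are covered by the stronger claim, and the rank-$0$ hash-anchored tip is the previous safe-to-extend tip, to which (b) applies directly. Your route is more uniform; the paper's avoids importing the safety lemma and, through it, Lemma~\ref{lem:designation}.
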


\begin{proof}
(a) Let $b$ be an ancestor of a certifiable block and, among the
certifiable blocks extending $b$ (or equal to it), let $b^*$ be one
of least height. At least $n-2f$ processors DA-voted $b^*$, so at
least $n-3f\ge 2f+1$ correct ones. By condition (iii), each possessed,
when it voted, the path from its anchor (a certifiable ancestor of
$b^*$ of smaller height) to $b^*$. No such anchor has height at least
that of $b$, since it would then be a certifiable block extending $b$
(anchors and $b$ are ancestors of $b^*$, hence comparable) of height
smaller than $b^*$'s, contradicting minimality. So each of the $2f+1$
correct voters held $b$.

(b) We argue by induction on views. If $\Tips(W)_i$ is a position
block $\Prop(C,i,k_i)$ with $k_i\ge 1$, then at least
$f+1$ votes in the tally report $\pi(i)\ge k_i$, so at least one
correct processor DA-voted positions $1,\dots,k_i$, and held, at each
vote, the corresponding block and the path down to its certifiable
anchor. If the extension carry applies, at least $2f+1$ votes
endorse $\Tips(W)_i$, so at least $f+1$ correct processors did the same
along the carried path. In either case, ancestors below the relevant
anchors are covered by (a). If $k_i=0$ and no carry applies, then
$\Tips(W)_i$ is the base itself, which well-formedness requires to
be either a DA-certificate, covered by (a), or a hash anchor equal to
$\Tips(Q')_i$ for the V-QC $Q'$ referenced by the proposal, which
is for an earlier view, and the induction hypothesis applies (the
base case being genesis, held by all).

(c) For the stronger claim, suppose the block lies at or below a
finalised tip $b^*_j$ of one of the three kinds hypothesised, and argue
relative to that emission's pool. If $b^*_j=\Prop(C,j,k^*_j)$ with
$k^*_j\ge 1$, then at least $3f+1$ votes report $\pi(j)\ge k^*_j$, so at
least $2f+1$ correct processors DA-voted positions $1,\dots,k^*_j$;
each held any given ancestor $b$ of $b^*_j$ directly, or else had a
certifiable anchor extending $b$, in which case (a) applies to $b$.
If $b^*_j$ is a DA-certified base, it is certifiable and (a) applies
directly. If $b^*_j$ is an extension block, all $n-f$ votes endorse
it, so at least $n-2f\ge 2f+1$ correct processors DA-voted its path,
and the first dichotomy applies. The excluded case, $k^*_j=0$ over a
hash-anchor base, is exactly the one in which $b^*_j$ is the previous
safe-to-extend tip re-affirmed, a block that need not be certifiable,
so that (a) does not apply. For the weaker claim, note that every
block of $\Emit(P)$ lies, on its chain, at or below either a
finalised tip of the final segment or a coordinate of $\Tips$ for
some V-QC on the reference chain; a rank-$0$ hash-anchored finalised
tip equals the previous safe-to-extend tip, so in every case (b)
applies.
\end{proof}

So, the guarantee is as follows: quorum-scale redundancy ($2f+1$
correct holders) for everything at or below finalised tips that
certify new material, with a single guaranteed holder otherwise. A
single correct holder suffices for retrieval, but the redundancy gap is worth quantifying. It concerns,
per chain, only the blocks above the last certifiable block or
certified base, and its \emph{persistence} requires a faulty
producer, whether equivocating or merely withholding, say by
disclosing blocks to a single correct processor. On an honest chain
every block is eventually DA-voted by all correct processors
(Theorem~\ref{thm:liveness}, Step 1), hence certifiable, and (a)
restores $2f+1$ holders.

\paragraph{Membership finality.} Order finality of extensions
requires unanimity above a fully endorsed proposal
(see the end of this subsection). A useful guarantee is
nevertheless available at a much weaker threshold. Say a transaction
block $b$ is \emph{membership-finalised} by the L-notarisation $Q$
if $b$ lies strictly above the base of $Q$'s chain proposal for
$b$'s chain, and at least $3f+1$ of $Q$'s votes endorse
$b$.\footnote{The first condition, that $b$ lie strictly above the
base, cannot be dropped. Since endorsement is closed under
ancestry, every vote endorses the base and
everything below it, without any voter having DA-voted those blocks,
and a carried base needs as little as one correct DA-voter on its
branch. Without the condition, an equivocating producer could
have a block below the base `membership-finalised' by all $n-f$
votes while $n-2f$ correct processors certify an incompatible
sibling.}
Membership finality is deliberately decoupled from the finalised
tips of Section~\ref{sec:spec}: there is no requirement that
$k^*_i=m_i$, nor that $b$ lie at or below $\TipsF(Q)_i$. One
direction does hold: every block strictly above the base and at or
below $\TipsF(Q)_i$ is membership-finalised by $Q$, since at least
$3f+1$ votes report positions of at least $k^*_i$, and an extension
tip is endorsed by all $n-f$. The converse fails, and usefully so:
an extension block endorsed by $3f+1$ votes but short of unanimity
lies strictly above $\TipsF(Q)_i$, its membership settled even
though its placement in the total ordering must await a later view.

\begin{lemma}[Membership finality]\label{lem:membership}
If $b$ is membership-finalised by some L-notarisation, then no block
incompatible with $b$ is ever certifiable.
\end{lemma}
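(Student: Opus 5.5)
The plan is a quorum-intersection argument on DA-votes, in the style of Lemma~\ref{lem:cert-compat}. The first step shows that many correct processors have DA-voted $b$ together with its ancestors down to a certifiable block. The second step shows that any certifiable block incompatible with $b$ shares a correct DA-voter with $b$. A height case analysis then forces that voter to DA-vote twice at one height, contradicting condition~(i).

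\emph{Step 1: correct endorsers DA-voted $b$ and a certified path below it.} Let $Q$ be the L-notarisation and $C$ its proposal, with base $b_0$ on chain $i=b.\mathrm{chain}$. Since $b$ lies strictly above $b_0$ and at least $3f+1$ votes of $Q$ endorse it, at least $2f+1$ of those votes come from correct processors. Let $p$ be any such correct endorser. Its endorsed tip on chain $i$ extends $b$, which lies strictly above $b_0$. By $\Positions$ and $\Extensions$, $p$ has DA-voted every block on the path from height $h(b_0)+1$ up to its endorsed tip, and in particular every such block up to $b$. Now apply condition~(iii) to $p$'s DA-vote for the block at height $h(b_0)+1$. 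This yields a certifiable block $c_p$ of height at most $h(b_0)$, an ancestor of $b$, such that $p$ has DA-voted every ancestor of $b$ strictly above $c_p$. So each of these $2f+1$ correct processors has DA-voted every ancestor of $b$ at heights in $(h(c_p),h(b)]$.

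\emph{Step 2: intersection.} Suppose, towards a contradiction, that $b'$ on chain $i$ is certifiable and incompatible with $b$. Then $b'$ has DA-votes from $n-2f$ processors, of which at least $n-2f-f_a$ are correct. The correct processors number $n-f_a$, so the correct DA-voters of $b'$ and the $2f+1$ correct endorsers of $b$ share at least $(n-2f-f_a)+(2f+1)-(n-f_a)=1$ member $p$. Fix $c:=c_p$ as in Step 1.

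\emph{Step 3: case analysis on $h(b')$.}
\begin{itemize}
\item Suppose $h(b')\le h(b)$. If $h(b')\le h(c)$, then Lemma~\ref{lem:cert-compat} makes the certifiable blocks $b'$ and $c$ compatible, so $b'$ is an ancestor of $c$, hence of $b$, a contradiction. Otherwise $h(c)<h(b')\le h(b)$, so $p$ DA-voted the ancestor of $b$ at height $h(b')$. That ancestor differs from $b'$, since the two blocks are incompatible, so $p$ cast two DA-votes at one height, contradicting condition~(i).
\item Suppose $h(b')>h(b)$. Take the anchor $a$ of $p$'s DA-vote for $b'$; it is certifiable, and by (iii) $p$ DA-voted the whole path from $a$ to $b'$. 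By Lemma~\ref{lem:cert-compat}, $a$ is compatible with $c$. If $h(a)\ge h(b)$, compare $a$ with $b$: if $a$ extends $b$ then so does $b'$, a contradiction. If $a$ does not extend $b$, then $a$ is incompatible with $b$ at height below $h(b')$, and we recurse with $a$ in place of $b'$, having first chosen $b'$ of least height among incompatible certifiable blocks; this is the same minimality device as in Lemma~\ref{lem:cert-compat}. If instead $h(a)<h(b)$, the path from $a$ to $b'$ contains a block $z$ at height $h(b)$ with $z\neq b$. Then $p$ DA-voted both $z$ and $b$, contradicting (i).
\end{itemize}

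The main obstacle is the bookkeeping in Step~3, where heights fall at or below the base. That region is exactly where the footnote warns that endorsement alone carries no DA-votes. The device that handles it is to extract, via condition~(iii), the certifiable anchor $c_p$ beneath the first block above the base. This puts every relevant height either above $c_p$, where $p$ provably DA-voted, or at or below a certifiable block, where Lemma~\ref{lem:cert-compat} applies.
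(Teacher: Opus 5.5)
Your proof is correct, and it reaches the contradiction by a somewhat different decomposition from the paper's.

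\textbf{The paper's route.} The paper fixes the height $h_0$ at which $b'$'s path diverges from $b$'s. It shows that every correct endorser of $b$ DA-voted $b$'s path at $h_0$; an endorser whose certified anchor lies at or above $h_0$ is ruled out at once by Lemma~\ref{lem:cert-compat}. It then asserts that at least $n-2f-f_a$ correct processors DA-voted $b'$'s path at $h_0$. One DA-vote per height makes the two sets disjoint, and counting gives $f+1\le f_a$.

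\textbf{Your route.} You instead pick a single correct processor $p$ lying in both the set of endorsers of $b$ and the set of DA-voters of $b'$ itself. You then split on the height of $b'$ relative to $h(b)$ and $h(c_p)$, and bring in the anchor of $p$'s DA-vote for $b'$ when $h(b')>h(b)$.

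\textbf{Comparison.} The ingredients are the same in both proofs: the certified anchor beneath the first block above the base, Lemma~\ref{lem:cert-compat}, and condition (i). Both also yield the threshold $2f+f_a+1$ discussed after the lemma. What your version buys is explicitness on the $b'$ side. The paper's count at $h_0$ tacitly needs a least-height device: a correct DA-voter of $b'$ whose anchor lies at or above $h_0$ need not have DA-voted there. Such a voter must be handled by passing to a lower certifiable block on $b'$'s side, as in the proof of Lemma~\ref{lem:availability}(a), and your subcase $h(a)\ge h(b)$ makes this visible. The paper's version, in return, is shorter and is organised around a single height.

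\textbf{Two presentational points.} First, make the least-height choice of $b'$ at the outset, not as a retroactive ``recursion'', since the processor $p$ chosen in Step~2 depends on $b'$. Second, your remark that $a$ is compatible with $c$ is never used and can be dropped.
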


\begin{proof}
Suppose $b'$, incompatible with $b$, were certifiable, diverging
from $b$'s path at height $h_0$, so that at least $n-2f-f_a$ correct
processors DA-voted $b'$'s path at $h_0$. At least $3f+1-f_a$
correct processors' votes endorse $b$. Consider any one of them,
$p_j$ say. Since $b$ lies strictly above the base, $p_j$'s endorsed tip
does too, so $p_j$ has genuinely DA-voted its endorsed path, from its
certified anchor upward, and $b$ lies on that path. If $p_j$'s anchor
has height at least $h_0$, the anchor is a certifiable block whose
path agrees with $b$'s at $h_0$, so is incompatible with $b'$, and
Lemma~\ref{lem:cert-compat} yields a contradiction with $b'$'s
certifiability directly. Otherwise $p_j$ DA-voted $b$'s path at
$h_0$. If the latter holds for all such $p_j$, then, correct
processors DA-voting once per height, the two sets are disjoint, and
$(3f+1-f_a)+(n-2f-f_a)\le n-f_a$, i.e.\ $f+1\le f_a$, a
contradiction.
\end{proof}

The threshold is exact, uniformly in $f_a$. For fault count $f_a$,
the proof shows $2f+f_a+1$ endorsing votes suffice; and $2f+f_a$ do
not: with $f_a$ faulty processors, the correct processors can split
as $2f$ DA-voting $b$'s branch and $n-2f-f_a$ DA-voting $b'$'s,
whereupon $b$ gathers $2f+f_a$ endorsing votes in an L-notarisation
while $b'$ is certified. Since the protocol must be safe for every
$f_a\le f$ without knowing $f_a$, the threshold $3f+1$ is forced, and
optimal.

We should be precise about how much the lemma gives, since its
conclusion is weaker than it may first appear. Directly, it
constrains certification only: every block subsequently certified on
the chain is compatible with $b$, so the chain's certified spine can
never leave $b$'s branch. Since uncertified blocks may nevertheless
be finalised, this neither places $b$ in the ledger nor, on its own,
excludes incompatible blocks from the ordering. Both upgrades arrive
with larger counts, and we state them for the case that matters
below, in which $b$'s producer is correct, so that blocks
incompatible with $b$ do not exist at all. When at least $n-2f$
votes of \emph{every} L-notarisation endorse $b$, which is what
Theorem~\ref{thm:inclusion} below provides for well-circulated
blocks of correct producers, any later finalisation of the chain at
or beyond $b$'s height extends $b$, and places $b$ with it. Whether the chain is finalised again at such
heights is a liveness matter rather than a safety one. For a correct
producer this is guaranteed, as in the proof of
Theorem~\ref{thm:liveness}. A faulty producer, by contrast, may
stall its own chain indefinitely, and not only by falling silent:
after an equivocation, correct processors that spent their one
DA-vote per height on the rival branch are permanently unable to vote
for $b$'s path below the certified frontier, so that even when a
correct holder of $b$ leads and proposes it, $b$'s branch may be able
to muster only $3f+1-f_a$ correct reported positions, short of the
$3f+1$ that finalisation requires unless faulty processors
choose to assist. Consistent with damage localisation, the only chain
that can be wedged in this way is the equivocator's own.

\paragraph{Block inclusion.} We now spell out the censorship
resistance obtained, which is a pay-off of extension voting. The
key point to watch is which quantifiers depend on the leader: none
do.

\begin{theorem}[Block inclusion]\label{thm:inclusion}
Let $b$ be a block on chain $i$ produced by the correct processor
$p_i$, and let $v$ be a view in which some leader block
$\ell=(v,H(Q),C)$ receives an L-notarisation. Suppose that:
\begin{enumerate}[label=(\roman*),itemsep=0pt]
  \item every correct processor DA-voted $b$, and each ancestor of
    $b$ lying strictly above $\Tips(Q)_i$, before sending its view
    $v$ vote, and;
  \item $b$'s height exceeds that of $\Tips(Q)_i$ by at most $e$.
\end{enumerate}
Then:
\begin{enumerate}[label=(\alph*),itemsep=0pt]
  \item the vote of every correct processor that votes for $\ell$
    endorses $b$;
  \item at least $n-2f\ge 3f+1$ votes of \emph{every} L-notarisation
    for view $v$ endorse $b$; in particular, whenever $b$ lies
    strictly above the proposal's base, $b$ is membership-finalised
    by every one of them (and when instead a DA-certified base lies
    at or above $b$, every certifiable block is compatible with $b$
    by Lemma~\ref{lem:cert-compat} in any case);
  \item $b$ is an ancestor of (or equal to) $\Tips(W)_i$ for
    \emph{every} view $v$ V-notarisation $W$, and;
  \item $b$ appears in $\Emit(P)$ for every pool $P$ of $n-f$ or more
    votes for a leader block of any later view that receives an
    L-notarisation: $b$ is ordered by the next finalised leader
    block.
\end{enumerate}
\end{theorem}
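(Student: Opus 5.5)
The plan is to prove (a) directly from the definitions of $\Positions$ and $\Extensions$, and then derive (b)--(d) by quorum counting, reusing Lemmas~\ref{lem:designation}, \ref{lem:safe-extend}, \ref{lem:leader-chain} and~\ref{lem:membership}.

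\emph{Part (a).} Let $p_j$ be a correct processor voting for $\ell=(v,H(Q),C)$, with base $\beta$ on chain $i$. Since $p_i$ is correct, chain $i$ is a single path; by well-formedness $\beta$ is either $\Tips(Q)_i$ or a certified block above it, so $\beta$ lies on that path. First consider the case where $\beta$ has height at least that of $b$. Then $b$ is an ancestor of $\beta$, and every vote endorses $\beta$, hence $b$. Otherwise $b$ lies strictly above $\beta$. By (i), $p_j$ DA-voted every block of the path from $\beta$ (exclusive) up to $b$, since these lie strictly above $\Tips(Q)_i$. Let $\pi(i)$ be $p_j$'s reported position. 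Every genuine proposal entry at or below $b$'s height lies on this path and has been DA-voted, so $\Prop(C,i,\pi(i))$ reaches at least $\min(\text{height of }b,\ \text{last genuine entry})$. The entries are real blocks of $p_i$, because junk entries would stop the position, and $p_i$ signs one block per height. If the position block is at or above $b$, we are done. Otherwise $\Extensions$ takes the longest DA-voted continuation from the position block, of length at most $e$. The gap from the position block to $b$ is at most $b.\mathrm{height}-\Tips(Q)_i.\mathrm{height}\le e$ by (ii), since the position block is at or above $\Tips(Q)_i$. So the longest continuation reaches $b$ and $p_j$'s endorsed tip extends $b$. The main obstacle is exactly this bookkeeping: the extension is anchored at the reported position, not at $\Tips(Q)_i$, so the bound $e$ must be measured from the lower point. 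I would check carefully that the position block never lies below $\Tips(Q)_i$ (it is at least the base) and that $\Extensions$ picks a sequence on $p_i$'s unique path. That second point holds because $p_j$ DA-votes only blocks signed by $p_i$, which form one path.

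\emph{Part (b).} Let $\ell'$ receive an L-notarisation $L$ in view $v$. By Lemma~\ref{lem:designation}, $\ell'=\ell$, since $\ell$ also receives one and no other block gets $2f+1$ votes. $L$ contains at least $n-2f$ correct votes, all for $\ell$, so by (a) each endorses $b$. Since $n\ge 5f+1$, we have $n-2f\ge 3f+1$, which gives membership finality whenever $b$ lies strictly above the base. The remark about a DA-certified base at or above $b$ then follows from Lemma~\ref{lem:cert-compat}, since the base is an extension of $b$ on the unique path.

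\emph{Part (c).} Let $W$ be a view $v$ V-notarisation; it designates $C$ by Lemma~\ref{lem:designation}. I would apply Lemma~\ref{lem:safe-extend}(a) to a pool consisting of the correct votes in an L-notarisation of $\ell$. It is cleaner, though, to argue directly. At least $n-f-f_a-f\ge 2f+1$ correct votes appear in $V_i$, and all endorse $b$. If $b$ lies at or below the $(f+1)$-th ranked position block, we are done, because carry only extends. Otherwise $b$ strictly extends that position block and has at least $2f+1$ endorsers, so it is a carry candidate. Every candidate is compatible with $b$, because chain $i$ has a unique path. So the deepest candidate extends $b$.

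\emph{Part (d).} Let $\ell''$ of view $v''>v$ receive an L-notarisation. By Lemma~\ref{lem:leader-chain}, $\ell$ is an ancestor of $\ell''$, so the reference chain from $\ell''$ passes through some view $v$ V-QC $Q_v$. By (c), $\Tips(Q_v)_i$ extends $b$. The block $b$ lies strictly above $\Tips(Q)_i$ (or else it is already in $\Ord(Q)$ by the analogous argument), so $b$ appears in $\Horiz(\Tips(Q),\Tips(Q_v))\preceq\Ord(Q_v)\preceq\Ord(Q^*)\preceq\Emit(P)$, where $Q^*$ is the V-QC referenced by $\ell''$. This is exactly as in Step~3 of Theorem~\ref{thm:liveness}.
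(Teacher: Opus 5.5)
Your proof is correct and follows the paper's argument essentially step for step. Part (a) uses $\Positions$ and $\Extensions$, with the $e$-gap measured from a position block at or above $\Tips(Q)_i$; parts (b) and (c) use the $n-2f$ and $n-3f\ge 2f+1$ correct-vote counts on chain $i$'s single path; part (d) uses Lemma~\ref{lem:leader-chain} and the recursion defining $\Ord$.

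Two routine steps that the paper makes explicit are only asserted in your version. First, well-formedness of $C$ needs justifying for voters who used the vote-along route; it holds because the earliest correct voter for $\ell$ necessarily checked validity, as in Lemma~\ref{lem:leader-chain}. Second, the case in (d) with $b$ at or below $\Tips(Q)_i$ follows because the tips $\Tips(\cdot)_i$ along the reference chain lie on chain $i$'s single path with non-decreasing heights, so the $\Horiz$ segments cover every height below $\Tips(Q_v)_i$.
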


\begin{proof}
Since $p_i$ is correct and signatures cannot be forged, the blocks of
chain $i$ form a single path, one block per height.

(a) Let $p_j$ be a correct processor voting for $\ell$, with vote
$(\pi,\varepsilon)$, and consider the base for chain $i$ in $C$. By
proposal validity (checked by the first correct voter, and inherited
along the vote-along rule as in Lemma~\ref{lem:leader-chain}), the
base extends or equals $\Tips(Q)_i$. If $b$ lies at or below the
base, then $b$ is an ancestor of every block $p_j$'s vote endorses on
chain $i$, and we are done; likewise if the position block
$\Prop(C,i,\pi(i))$ lies at or above $b$ on the chain's single
path, since the endorsed tip extends the position block. Otherwise
$b$ lies above the base, and by (ii) at most $e$ above $\Tips(Q)_i$,
hence at most $e$ above the base (which extends $\Tips(Q)_i$). The
position block is a real block of the chain, DA-voted by $p_j$ when
$\pi(i)\ge 1$ and equal to the base otherwise, and it is an
ancestor of $b$, since the real entries lie on the chain's single
path below $b$. By hypothesis (i), $p_j$ has DA-voted every block
strictly between $\Tips(Q)_i$ and $b$, so its DA-voted blocks run
consecutively from the position block to $b$, and $b$ lies within
$e$ of the position block. $\Extensions(\State,\ell)$ therefore
carries $p_j$'s endorsed tip on chain $i$ to $b$ or beyond, so $p_j$'s
vote endorses $b$.

(b) An L-notarisation contains votes from $n-f$ distinct processors,
at least $n-2f$ of them correct. By (a), each such vote endorses
$b$, and $n-2f\ge 3f+1$.

(c) Let $W$ be a view $v$ V-notarisation with tally $V_i$. Correct
processors appearing in $W$ appear with their actual votes. Since
$\ell$ receives an L-notarisation, at least $n-2f$ correct processors
voted for $\ell$, of which at most $f$ are absent from $W$, so at
least $n-3f\ge 2f+1$ votes in $V_i$ endorse $b$. If $b$ lies at or below the
position block $\Prop(C,i,k_i)$, then $b\preceq\Tips(W)_i$
immediately. Otherwise $b$ extends the position block (single path)
and has $2f+1$ endorsing votes in $V_i$, so $b$ is a carry candidate
and the carry applies. Candidates lie on the single path, and a
deepest candidate has height at least $b$'s, so extends $b$:
$\Tips(W)_i$ extends $b$.

(d) Let $\ell'$, of view $v'>v$, receive an L-notarisation, and let
$P$ be a pool of $n-f$ or more votes for it. By
Lemma~\ref{lem:leader-chain}, $\ell$ is an ancestor of $\ell'$, so
the chain of references from $\ell'$ passes through a V-QC $Q_v$
for view $v$, and $\Emit(P)$ extends $\Ord(Q_v)$. By (c),
$\Tips(Q_v)_i$ extends $b$. Finally, $b$ appears in $\Ord(Q_v)$:
along $Q_v$'s own reference chain the coordinates $\Tips(\cdot)_i$
are non-decreasing in height (each proposal's base extends or equals
the previous safe-to-extend tip) and all lie on chain $i$'s single
path, so the $\Horiz$ segments partition the heights below
$\Tips(Q_v)_i$, and $b$ is appended in the segment covering its
height.
\end{proof}

One addition is available when the leader proposes honestly for
chain $i$, and we state it separately because, unlike everything in
the theorem, its hypotheses depend on the leader.

\begin{corollary}\label{cor:inview}
In the setting of Theorem~\ref{thm:inclusion}, suppose additionally
that $b$ is compatible with the proposed tip for chain $i$.
\begin{enumerate}[label=(\alph*),itemsep=0pt]
  \item If $b$ lies at or below the proposed tip, then the finalised
    tip on chain $i$ extracted from \emph{every} L-notarisation for
    view $v$, and from every pool of $n-f$ or more view $v$ votes,
    extends $b$.
  \item If $b$ extends the proposed tip, the same holds for every
    pool containing the votes of all correct processors (though not
    necessarily for every individual L-notarisation, whose unanimity
    $f$ faulty votes can deny).
\end{enumerate}
In either case $b$ is finalised within the view, although its
placement in the total ordering may await a later view.
\end{corollary}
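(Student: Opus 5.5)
Throughout, I will work on chain $i$'s single path (the producer $p_i$ is correct) and argue directly from the definitions of $k^*_i$ and $\TipsF$ for pools and L-QCs. Since $b$ is compatible with the proposed tip $\Prop(C,i,m_i)$, and the proposed entries above the base lie on that single path, every proposed position at or below $b$'s height names a genuine ancestor of $b$ (or $b$ itself).

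\emph{Case (a).} Suppose $b$ lies at or below the proposed tip. If $b$ lies at or below the base, nothing needs proving: every finalised tip is $\Prop(C,i,k^*_i)$ or a block extending the proposed tip, and either way it extends the base. Otherwise let $k_b\ge 1$ be $b$'s position, so $b=\Prop(C,i,k_b)$. Each correct voter $p_j$ for $\ell$ has DA-voted $b$ and each ancestor of $b$ strictly above $\Tips(Q)_i$ by hypothesis (i). The base extends or equals $\Tips(Q)_i$, so these ancestors include every proposed entry at positions $1,\dots,k_b$. Hence $\Positions(\State,\ell)(i)\ge k_b$ for $p_j$, so every correct vote reports $\pi(i)\ge k_b$. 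In any L-notarisation, or any pool of $n-f$ or more votes by distinct processors, at least $n-2f\ge 3f+1$ votes are correct. The $(3f+1)$-th largest position is therefore at least $k_b$, i.e.\ $k^*_i\ge k_b$. Then $\TipsF_i$ is either $\Prop(C,i,k^*_i)$ or a block extending $\Prop(C,i,m_i)$, and both extend $b$.

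\emph{Case (b).} Suppose $b$ strictly extends the proposed tip. By hypothesis (i), every correct processor has DA-voted all proposed entries, since they are ancestors of $b$ above $\Tips(Q)_i$. So every correct vote reports $\pi(i)=m_i$, and $k^*_i=m_i$ for any pool of at least $n-f$ votes, since at least $3f+1$ of them are correct. By Theorem~\ref{thm:inclusion}(a), every correct vote endorses $b$. If the pool contains all correct votes, at least $n-f_a\ge n-f$ votes in it endorse $b$. The pool's extension rule then picks the deepest block extending $\Prop(C,i,m_i)$ endorsed by $n-f$ pool votes. This deepest block is compatible with $b$, because blocks of chain $i$ form a single path, and it is at least as deep as $b$, so it extends $b$.

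The case $k^*_i=m_i$ with a pool of size at least $n-f$ is the one that needs care, because the rule then uses endorsement counts rather than ranks. The single-path property makes this easy, so I expect the main work to be bookkeeping. Specifically, I need to check that hypothesis (i) indeed covers every proposed entry below $b$. This uses the base extending $\Tips(Q)_i$, which follows from well-formedness, as in the proof of Theorem~\ref{thm:inclusion}(a).

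For the parenthetical in (b), I would exhibit the counterexample $f_a=f$ with an L-notarisation containing $f$ faulty votes that report full positions but no extensions. Unanimity then fails, and $\TipsF_i$ stops at the proposed tip.

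The closing sentence of the statement holds because finality within the view is exactly the conclusion of (a) and (b). Placement in the total ordering can still be deferred, since $\Emit$ may halt at another, unsettled chain. Theorem~\ref{thm:inclusion}(d) supplies the later ordering.
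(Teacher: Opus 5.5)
Your proof is correct and follows the paper's argument essentially step for step. In (a), every correct voter reports $\pi(i)$ at least $b$'s position, so $k^*_i$ is at least that position. In (b), every correct vote reports $m_i$ and endorses $b$, so $b$ qualifies for the extension step and the deepest qualifying block extends it.

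The one small difference is how you show that deepest block is compatible with $b$. You appeal to the single-path property of chain $i$, where the paper intersects the two endorsing sets in a correct processor. Your route works once you add that any qualifying block has a correct endorser, who DA-voted it, so it is a genuine block of $p_i$ rather than a header fabricated in faulty votes.
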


\begin{proof}
(a) If $b$ lies at or below the base, then every extracted tip
extends the base, which is position $0$, and hence extends $b$.
Otherwise write $b=\Prop(C,i,m)$, for some $m\le m_i$. The proposed
entries below $b$ are ancestors of $b$ lying above the base, so by
hypothesis (i) of the theorem every correct voter has DA-voted
$\Prop(C,i,k)$ for every $k\in[1,m]$, and reports $\pi(i)\ge m$.
Every L-notarisation therefore contains at least $n-2f\ge 3f+1$ votes
with $\pi(i)\ge m$, so its position rank $k^*_i$ is at least $m$,
and the same holds for the rank of any pool of $n-f$ or more votes.
The extracted tip is $\Prop(C,i,k^*_i)$ or a block extending it, and
so extends $b$.

(b) If $b$ extends the proposed tip, then \emph{every} proposed
entry is an ancestor of $b$ lying above the base, so every correct
processor that votes for $\ell$ reports $\pi(i)=m_i$, and by
Theorem~\ref{thm:inclusion}(a) its vote endorses $b$. A pool containing the votes of all
correct processors thus contains at least $n-f$ votes with
$\pi(i)=m_i$, giving position rank $k^*_i=m_i$, and at least $n-f$
votes endorsing $b$, so the extension step applies and $b^*_i$ is a
deepest block that $n-f$ of the pool's votes endorse. Any two such
blocks are compatible, since their endorsing sets intersect in a
correct processor, whose DA-votes determine a single path; a deepest
one therefore extends $b$.

In either case, placement in the total ordering may await a later
view, since emission
halts at the empty slots of unsettled chains
(Section~\ref{sec:spec}).
\end{proof}

Part (a)'s additional hypothesis is exactly what junk entries deny
(Section~\ref{sec:achieves}): the one thing a leader can withhold is
this in-view finalisation, never the membership and carry guarantees
of the theorem. Part (b) makes formal the observation of
Section~\ref{sec:intuition} that the unanimity required for
extension finality is unanimity within a \emph{single} L-QC, and
that an L-QC with the required unanimity can always be assembled from
the pool once the correct votes have arrived, whatever the $f$ faulty
voters do.

\paragraph{What the theorem gives.} Hypotheses (i) and (ii) mention
only dissemination: (i) holds whenever $b$ and its predecessors reach
the correct processors before they vote (correct processors DA-vote
eligible blocks immediately, so one $\delta$ of dissemination
suffices), and (ii) is a freshness bound, satisfied by everything
produced since the previous view's tips up to depth $e$. Neither
hypothesis mentions the leader, and the conclusions quantify over
\emph{every} L-notarisation and \emph{every} V-notarisation of the
view. The view $v$ leader (by junk entries, stale entries, a low
anchor, equivocation, or any other behaviour) can influence which
of these objects exist, but not what they contain regarding $b$.
The consequence is best put as a dichotomy: \emph{either view $v$
produces no L-notarisation at all, or $b$ is membership-finalised in
view $v$ and ordered by the next finalised leader block}. A Byzantine
leader cannot have its block finalised \emph{and} exclude a
well-circulated honest block. The only censorship available is to
sacrifice the view  (producing nothing, harming every chain
equally, and forfeiting a leader slot) and even this merely delays
$b$ by the one view, since hypothesis (ii) is then re-established
with respect to the carried tips. Contrast Raptr's easy spoiling
(Section~\ref{sec:intuition-raptr}), where a single faulty
\emph{producer} voids the optimistic path for the whole block. Here
even the \emph{leader} cannot void inclusion, and faulty producers
can affect no chain but their own. What remains leader-sensitive is
only in-view \emph{order} finality on targeted chains (the one-view
delay of Section~\ref{sec:extensions}, available to the leader alone, at the price of
attributability).

We note the alternative we did not take: having processors echo
leader blocks to one another before voting would make even in-view
order finality leader-proof (voters could then verify that the entries
they lack were genuinely withheld rather than never sent), at the cost
of $O(n^2)$ communication in leader-block-sized objects and timeouts
inflated by an additional $\delta$. The dichotomy above is what makes
the cheaper design defensible: what the leader can suppress is
immediacy, never membership.

\paragraph{Chain-local finality.} Finalisation fixes a chain's
contribution to the ledger up to the finalised tip; a later view
settles only the cross-chain interleaving, together with the chain's
entries at greater heights. We make this explicit, since the naive
reading fails in two instructive ways.

\begin{corollary}[Chain-local finality]\label{cor:chainlocal}
Let the view $v$ leader block $\ell=(v,H(Q'),C)$ receive an
L-notarisation, let $P$ be a pool of votes for $\ell$ by $n-f$ or
more distinct processors, and let $b^*_i$ be the finalised tip for
chain $i$ extracted from $P$. Then:
\begin{enumerate}[label=(\alph*),itemsep=0pt]
  \item in the ordering extracted from any V-QC of view at least
    $v$, and in any emission at view $v$ or later, every chain $i$
    block at height at most $b^*_i.\mathrm{height}$ either appears in
    $\Ord(Q')$ or is an ancestor of $b^*_i$, and;
  \item chain $i$'s blocks appear in any such ordering in strictly
    increasing height order.
\end{enumerate}
In particular, chain $i$'s contribution to the eventual ordering at
heights at most $b^*_i.\mathrm{height}$ is determined at finalisation,
by $\Ord(Q')$ and $b^*_i$: nothing at those heights is added, removed
or reordered by later views.
\end{corollary}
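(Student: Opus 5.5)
The plan is to reduce both orderings under consideration to a common form: $\Ord(Q')$ followed by a single $\Horiz$ segment taken from view $v$'s safe-to-extend tips, and then further segments from later views. Lemma~\ref{lem:safe-extend}(a) pins the chain-$i$ coordinate of that first segment above $b^*_i$, and we then check that later segments only ever add heights above it.

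\emph{Step 1 (normal form).} Take any V-QC $R$ of view $u\ge v$, or any pool emission at view $u\ge v$.
\begin{itemize}
  \item If $u=v$, Lemma~\ref{lem:designation} gives that $R$ designates $\ell$, so $\Ord(R)=\Ord(Q')^\frown\Horiz(\Tips(Q'),\Tips(R))$. For an emission at view $v$, Lemma~\ref{lem:prefix} makes it a prefix of such an $\Ord(R)$.
  \item If $u>v$, the relevant leader block received $2f+1$ votes (respectively an L-notarisation). Lemma~\ref{lem:leader-chain} then makes $\ell$ an ancestor, so the reference chain passes through some view $v$ V-QC $Q_v$ containing $\ell$. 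Unwinding the recursion gives $\Ord(Q')^\frown\Horiz(\Tips(Q'),\Tips(Q_v))$ followed by segments $\Horiz(\Tips(Q^{(s)}),\Tips(Q^{(s+1)}))$ for later V-QCs on the reference chain, possibly truncated to a prefix for an emission.
\end{itemize}
Since (a) and (b) are closed under taking prefixes, it suffices to treat these full sequences.

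\emph{Step 2 (the first new segment).} By Lemma~\ref{lem:safe-extend}(a), $\Tips(Q_v)_i$ extends $b^*_i$. So the chain-$i$ blocks appended by $\Horiz(\Tips(Q')_i,\Tips(Q_v)_i)$ at heights up to $b^*_i.\mathrm{height}$ are exactly the ancestors of $b^*_i$ above $\Tips(Q')_i.\mathrm{height}$. These are appended in increasing height, since $\Horiz$ visits offsets in increasing order.

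\emph{Step 3 (later segments, the main obstacle).} Each later segment appends chain-$i$ blocks only at heights strictly above the previous tip $\Tips(Q^{(s)})_i$. I therefore need the heights of $\Tips(\cdot)_i$ along the reference chain to be non-decreasing. That holds because:
\begin{itemize}
  \item a valid proposal's base extends $\Tips(Q^{(s)})_i$, or is a DA-certificate of greater height (well-formedness, checked by the first correct voter as in Lemma~\ref{lem:leader-chain});
  \item $\Tips$ of the next V-QC extends that base.
\end{itemize}
Hence every later segment lies strictly above $\Tips(Q_v)_i.\mathrm{height}\ge b^*_i.\mathrm{height}$, and contributes nothing at heights at most $b^*_i.\mathrm{height}$.

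The delicate point is that a certificate-anchored base may lie on a branch incompatible with $\Tips(Q^{(s)})_i$ under an equivocating producer. So I will argue only about \emph{heights}, never ancestry. This gives strict height monotonicity across segments, which is (b). It also gives (a), because heights up to $b^*_i.\mathrm{height}$ are covered only by $\Ord(Q')$ and Step 2.

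The same height argument inside $\Ord(Q')$, applied inductively along its own reference chain, gives monotonicity there. Since Step 2 starts strictly above $\Tips(Q')_i.\mathrm{height}$, the junction between $\Ord(Q')$ and the new segment is also monotone. The final "in particular" sentence then follows by combining (a) and (b).
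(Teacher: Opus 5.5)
Your proposal is correct and follows essentially the same route as the paper's proof. The paper also reduces every ordering and emission to $\Ord(Q')$ followed by a segment towards some view-$v$ V-QC's tips, using Lemmas~\ref{lem:designation}, \ref{lem:leader-chain} and~\ref{lem:prefix}. It then applies Lemma~\ref{lem:safe-extend}(a) to put that segment's chain-$i$ blocks at heights up to $b^*_i.\mathrm{height}$ on $b^*_i$'s path, and uses height monotonicity of $\Tips(\cdot)_i$ along the reference chain for both the later segments and (b). Your extra care (well-formedness checked by the first correct voter of each referenced leader block, and arguing about heights rather than ancestry under equivocation) only makes explicit what the paper's proof leaves implicit.
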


\begin{proof}
(b) holds by construction: each step of the recursion defining
$\Ord$ appends, on chain $i$, a segment of strictly increasing
heights lying strictly above the chain $i$ coordinate of the
referenced V-QC's tips, and successive coordinates never decrease in
height, since bases extend the tips they replace or are
DA-certificates of greater height, and tips extend bases. For (a),
consider first a V-QC $W$ for view $v$ itself. By
Lemma~\ref{lem:designation} the leader block contained in $W$ is
$\ell$, so $W$ references $Q'$ and
$\Ord(W)=\Ord(Q')^\frown\Horiz(\Tips(Q'),\Tips(W))$, while by
Lemma~\ref{lem:safe-extend} $\Tips(W)_i$ extends $b^*_i$. The chain
$i$ blocks appended beyond $\Ord(Q')$ lie on the path through
$\Tips(W)_i$, so those at heights at most $b^*_i.\mathrm{height}$
are ancestors of $b^*_i$. For a V-QC $W$ of view $v'>v$,
Lemma~\ref{lem:leader-chain} places $\ell$ on the reference chain
of the leader block contained in $W$, via some view $v$ V-QC
$Q^{(0)}$, so $\Ord(W)$ extends $\Ord(Q^{(0)})$, and every chain
$i$ block appended beyond $\Ord(Q^{(0)})$ lies strictly above
$\Tips(Q^{(0)})_i$, which extends $b^*_i$, hence at height greater
than $b^*_i.\mathrm{height}$. Emissions are prefixes of such
orderings, by Lemma~\ref{lem:prefix} and the chaining in the proof
of Theorem~\ref{thm:consistency}.
\end{proof}

When the producer of chain $i$ has not equivocated, the description
simplifies. In this case, the blocks in question are exactly $b^*_i$ and its
ancestors, in height order, so the chain's content through $b^*_i$ is
final in every sense but its interleaving with other chains.
Applications whose state partitions by chain, per-producer
app-chains, payment lanes or per-account sequencing, say, may
therefore execute chain $i$ through $b^*_i$ at finalisation, and need
never wait for placement in the total ordering, recovering the
one-view wait exactly in
the case that the sweep halts ahead of chain $i$'s slots. Two
qualifications make the interface exact.
First, logs deduplicate transactions across chains, and which
occurrence of a duplicated transaction survives depends on the
interleaving, unknown at finalisation; the interface is exact for
transactions bound to a single chain, and harmless whenever
execution is idempotent under duplicates. Second, an equivocating
producer's entries \emph{above} $b^*_i$ may later land on a branch
incompatible with blocks an executor holds, so an executor should
suspend a chain upon proof of equivocation rather than speculate
beyond the finalised tip. Early execution of this kind is the
proactive complement of reordering by blame sets
(Section~\ref{sec:optimisations}): the one protects applications
from the placement wait, the other shrinks the wait itself.

\paragraph{Accountability.} The summary of
Section~\ref{sec:achieves} distinguishes leaders that avoid leaving
evidence of misbehaviour, and it remains to say what that evidence
can be. Misbehaviour in Multimmit falls into three classes.
\emph{Provable} misbehaviour is constant-size, self-evident, and
verifiable offline, consisting in each case of a pair of signed
objects that no correct processor can produce: two transaction blocks
signed by the same producer at the same height, or a signed block
whose parent hash contradicts the producer's signed block one height
below (the header structure makes both checks header-only); two
leader blocks signed by the same leader for the same view; two view
$v$ votes, or a view $v$ vote and a novote($v$), signed by the same
processor (Lemmas~\ref{lem:one-vote} and~\ref{lem:exclusivity}); and
two DA-votes signed by the same processor for the same chain and
height (threshold shares being individually attributable to their
signers). Junk proposal entries are merely \emph{attributable}: no
one can prove that a proposed payload corresponds to no signed block,
but a challenge to exhibit the signed header behind an entry is
always answerable by a correct leader, which proposes only blocks it
holds, and never answerable for junk, since the producer's signature
cannot be forged. Omission, finally, is \emph{not even attributable},
since a leader that proposes less than it received is
indistinguishable from a leader that received less, which is why the
guarantees of Section~\ref{sec:achieves} treat omission separately,
its damage capped by extensions, rather than deterring it. We do not
formalise the responses (slashing, reputation) that each class of
evidence can support, and note only that a missing challenge answer
is not proof in the offline sense, since under asynchrony a correct
leader's answer may simply be late.

\paragraph{The exact extension threshold.} Finally, we justify the
claim of Section~\ref{sec:spec} that the extraction rules cannot be
relaxed at $n=5f+1$, and determine exactly what relaxation becomes
available for larger $n$. Consider rules of the following form: an
extension block is finalised on the strength of $s$ votes
endorsing it in an L-notarisation, and an extraction rule assigns
safe-to-extend tips to each V-notarisation, as a function of the
V-notarisation alone. Say the pair is \emph{safe} if, in every
execution, the tips assigned to every view $v$ V-notarisation extend
every extension block so finalised in view $v$. This is the contract
that Lemma~\ref{lem:safe-extend} discharges for the rules of
Section~\ref{sec:spec}, and on which consistency rests. Set
$s^*:=\lceil (n+3f)/2\rceil$.

\begin{theorem}[Exact extension threshold]\label{thm:exact}
For every $n\ge 5f+1$:
\begin{enumerate}[label=(\alph*),itemsep=0pt]
  \item no safe pair exists with $s<s^*$, and;
  \item a safe pair exists with $s=s^*$.
\end{enumerate}
At $n=5f+1$ we have $s^*=n-f$, so the rules of
Section~\ref{sec:spec} are optimal there.
\end{theorem}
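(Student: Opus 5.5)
The plan is to prove (a) by an indistinguishability construction, building two executions that share one V-notarisation bit-for-bit but finalise incompatible extension blocks. Part (b) comes from exhibiting the rules of Section~\ref{sec:spec} with the unanimity threshold replaced by $s^*$ and the carry threshold retuned, and re-running the quorum-intersection argument of Lemma~\ref{lem:safe-extend}(a) with the new counts. The identity $s^*=n-f$ at $n=5f+1$ is then the arithmetic $\lceil(5f+1+3f)/2\rceil=4f+1$.

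\emph{Lower bound (a).} Fix a view $v$ with a correct-looking leader block $\ell$ whose proposal for chain $i$ is honest, with every entry held by every voter, so that all votes report $\pi(i)=m_i$. The producer $p_i$ equivocates above the proposed tip, creating incompatible blocks $b$ and $b'$ on two branches. I partition the processors into the following groups.
\begin{itemize}[itemsep=0pt]
  \item $A_0$: correct in both executions, DA-voting the $b$-branch.
  \item $B_0$: correct in both executions, DA-voting the $b'$-branch.
  \item $F_1$, of size $f$: faulty in execution~1, correct and on the $b'$-branch in execution~2.
  \item $F_2$, of size $f$: faulty in execution~2, correct and on the $b$-branch in execution~1.
  \item $X$, of size $f$: correct in both executions, on the $b$-branch in execution~1 and on the $b'$-branch in execution~2.
\end{itemize}
The shared V-notarisation $W$ consists of the votes of $A_0\cup B_0\cup F_1\cup F_2$, where $F_1$ endorses $b'$ and $F_2$ endorses $b$. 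These are each group's honest vote in the execution where it is correct, and it is forged-but-identical in the other, since faulty processors can sign anything. $X$ is absent from $W$, so its votes may differ between the executions.

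In execution~1, an L-notarisation is formed from $A_0\cup F_2\cup X\cup F_1$ together with enough $B_0$ votes to reach $n-f$. Here $F_1$ signs a second vote endorsing $b$, which is allowed since $F_1$ is faulty in this execution, and this vote simply does not appear in $W$. That L-notarisation has $|A_0|+3f$ endorsers of $b$. Symmetrically, execution~2 has $|B_0|+3f$ endorsers of $b'$. The group sizes satisfy $|A_0|+|B_0|=n-3f$. Splitting them as $\lfloor (n-3f)/2\rfloor$ and $\lceil (n-3f)/2\rceil$ gives both executions at least $\lfloor(n+3f)/2\rfloor\ge s$ endorsers whenever $s<s^*$. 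No function of $W$ alone can then return tips extending both $b$ and $b'$.

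The bookkeeping still to check is the following.
\begin{itemize}[itemsep=0pt]
  \item $W$ has exactly $n-f$ members and at least $2f+1$ votes for $\ell$. This is trivial, since all of them vote for $\ell$.
  \item Each correct group's DA-votes obey conditions (i)--(iii), with the two branches forking above a common certified block.
  \item The producer $p_i$ must be faulty in both executions.
\end{itemize}
That last point is the step I expect to be the main obstacle: if $p_i$ is placed in $F_1\cap F_2$, one processor is lost, and when $n+3f$ is odd this costs exactly the slack. I would first try to keep $p_i$ out of the voter accounting altogether, letting $p_i$ cast no view-$v$ message in $W$ and serve as one of the extra $B_0$-side fillers. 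Failing that, I would absorb $p_i$ into $X$ by having it behave correctly-looking in both executions apart from the two signed blocks, and recheck the parity case by case.

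\emph{Upper bound (b).} I would keep the position rules unchanged and the requirement $k^*_i=m_i$. An extension block $t$ is then finalised when $s^*$ votes of the L-notarisation endorse it. I would choose the carry rule to be the deepest block extending the position block that is endorsed by more than half of the tally's endorsing votes on the relevant fork, rather than by a fixed count; equivalently, at each height one follows the branch with strictly more endorsers in $W$.

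Safety follows the proof of Lemma~\ref{lem:safe-extend}(a). The finalised block $t$ has at least $s^*-f_a$ correct endorsers. Since correct processors DA-vote once per height, any incompatible block has at most $n-s^*$ correct endorsers. Of $t$'s correct endorsers, at most $f-(\text{faulty members absent from }W)$ are absent from $W$. Comparing counts within $W$, $t$'s branch then beats any incompatible branch exactly when $2s^*\ge n+3f$, which is the defining property of $s^*$. The delicate point is the tie case $2s^*=n+3f$. There I would use that the absentees from $W$ and the faulty processors present in $W$ together number at most $f$, so the two deficits cannot both be maximal, and the strict inequality survives. If the majority rule breaks down in that case, I would fall back to a fixed carry threshold $s^*-2f$ and verify it against the same inequalities.
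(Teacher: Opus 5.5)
Your proposal has a genuine gap in each part.

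\textbf{Part (a): the construction is not a valid pair of executions.} $A_0$ and $B_0$ are correct in both executions and DA-vote $b$ and $b'$ respectively. So $p_i$ signs both blocks and must be faulty in both executions, yet your $F_1\cap F_2=\emptyset$. A sanity check exposes the problem: your count gives $\lfloor(n+3f)/2\rfloor$ endorsers in each execution, which equals $s^*$ when $n+3f$ is even and would refute (b).

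The repair is exactly the option you dismissed. Take $F_1\cap F_2=\{p_i\}$, with $p_i$ accounted in $W$ by a vote endorsing $b$, say. Then $|A_0|+|B_0|=n-3f+1$. Execution 1 has $|A_0|+3f-1$ endorsers of $b$, counting $p_i$'s accounted vote. Execution 2 has $|B_0|+3f-1$ endorsers of $b'$, counting $p_i$'s second signature. An even split gives each at least $\lfloor(n+3f-1)/2\rfloor=s^*-1$, in both parities. Your worry that this costs the slack when $n+3f$ is odd is mistaken: the lost unit can be taken from the larger side.

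Your fallbacks, by contrast, do fail. Placing $p_i$ outside $W$ spends one of the $f$ absentee slots on it, while it still occupies a faulty slot in each execution. Only $3f-2$ processors then count towards both executions instead of $3f-1$, so the total falls to $n+3f-2$. When $n+3f$ is odd, in particular at $n=5f+1$, the weaker execution reaches only $s^*-2$. The paper uses the $F_1\cap F_2=\{p_i\}$ construction: the producer's single pinned vote in $W$ is what separates $s^*-1$ from $s^*$.

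\textbf{Part (b): the tie case $2s^*=n+3f$ rests on a false claim.} Exactly $f$ processors are always absent from $W$, and all $f_a=f$ faulty processors may be present, so both deficits can be maximal at once. Suppose all faulty processors are in $W$ endorsing a rival $b'$, and all $f$ absentees are correct endorsers of $t$. Then the two branches each receive exactly $(n-f)/2$ votes in $W$. Strict majority halts at the fork. The fixed threshold $s^*-2f$ admits both branches, leaving the choice to a tie-break (least hash, say) that the faulty producer can steer. So for even $n+3f$ (e.g.\ $n=5f+2$) neither of your rules is shown safe.

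The missing idea is an asymmetric tie-break. In a tie every inequality is tight, and each branch carries at least $f+1$ correct accounted votes (using $n\ge 5f+2$). So both blocks were genuinely signed, and the producer is faulty. Since the absentees are correct, the producer is accounted in $W$, and like every faulty accounted voter its vote endorses the non-finalised branch. Discarding the tied branch that the producer's own accounted vote endorses is therefore safe. For odd $n+3f$, including $n=5f+1$, your argument for (b) goes through.
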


\begin{proof}
(a) We construct two executions sharing a view $v$ V-notarisation
$W$ bit for bit, in each of which some extension block gathers
$s^*-1$ endorsing votes in an L-notarisation, the two blocks
incompatible. No extraction rule is then safe in both executions,
since identical bytes force identical tips, and no block extends
both. The producer $p_i$ of the chain in question is faulty in both
executions. It signs incompatible extension blocks $b$ and $b'$, at
the same height above the proposed tip, showing $b$ to some
processors and $b'$ to others. $W$ consists of $n-f$ votes for the
view's leader block: $x:=\lceil(n-f)/2\rceil$ endorsing $b$, with
$p_i$'s own vote among them, and $y:=\lfloor(n-f)/2\rfloor$ endorsing
$b'$. The $f$ processors unaccounted in $W$ are correct in both
executions. In the first execution, the faulty processors are $p_i$
together with $f-1$ of the $b'$-side voters (there is room, since
$y\ge 2f$). The producer shows $b$ to the unaccounted processors,
who endorse it in their votes, and each faulty $b'$-side voter signs
a second vote endorsing $b$. The votes endorsing $b$ then come
from $x+f+(f-1)=x+2f-1=s^*-1$ distinct processors, and an
L-notarisation containing them all is assembled by filling the
quorum with the remaining votes of $W$. The second execution is the
mirror image, with faulty set $p_i$ together with $f-1$ of the
$b$-side voters, and one addition: $p_i$'s accounted vote endorses
$b$, so its \emph{second} signature, endorsing $b'$, now contributes
a distinct processor, and $b'$ gathers $y+f+(f-1)+1=y+2f\ge s^*-1$
endorsing votes. Both executions are otherwise unremarkable, and $W$
is a valid V-notarisation in each. Note the asymmetry: the
producer's accounted vote sits on one side, and whichever side holds
it is capped at $2f-1$ additions, the other at $2f$; this single
pinned vote is what separates $s^*-1$ from $s^*$.

(b) Take $s=s^*$, and extract as in Section~\ref{sec:spec} except
that the extension carry runs at the threshold $s^*-2f$ in place of
$2f+1$, with one further rule: when two incompatible blocks each
gather exactly $s^*-2f$ accounted endorsing votes, which we show
below is possible only when $n+3f$ is even, discard the one that the
producer's own accounted vote endorses. First note that ties are
the only ambiguity. Votes endorse a single path, so the endorsement counts of
pairwise incompatible blocks sum to at most $n-f$, while two blocks
meeting the threshold account for $2(s^*-2f)\ge n-f$, with equality
only when $n+3f$ is even; three are excluded outright, and in the
odd case so are two.

Now suppose the extension block $b$ is finalised, with $s^*$
endorsing votes in some L-notarisation. Let $W$ be any view $v$
V-notarisation, in an execution with $f_a\le f$ faulty processors.
At most $f_a$ of the $s^*$ endorsers are faulty, so at least
$s^*-f_a$ are correct, and each endorses $b$'s path in its single
vote. At most $f$ of these are unaccounted in $W$. At least
$s^*-2f$ votes accounted in $W$ therefore endorse $b$, and the
carry reaches $b$.

Now let $b'$ be any block incompatible with $b$. A correct voter
endorses at most one of the two, so at most
$(n-f_a)-(s^*-f_a)=n-s^*$ correct voters endorse $b'$, and $b'$'s
accounted endorsement is at most $n-s^*+f_a\le s^*-2f$. If $n+3f$ is
odd, the inequality is strict, $b$ dominates every rival, and we
are done. So suppose $n+3f$ is even and $b'$ ties with $b$ at
$s^*-2f=(n-f)/2$. The tie forces every inequality above to be an
equality. So $f_a=f$, every faulty processor is accounted in $W$
with a vote endorsing $b'$, the $f$ unaccounted processors are
correct endorsers of $b$, and the $(n-f)/2$ accounted votes
endorsing $b$ all come from correct processors. Each branch now
carries at least $f+1$ accounted correct votes ($b'$ carries
$(n-f)/2-f=(n-3f)/2\ge f+1$, since $n+3f$ even gives $n\ne 5f+1$,
so $n\ge 5f+2$). Blocks on both branches were therefore genuinely
signed, so the producer equivocated and is faulty. The producer is
accounted, the unaccounted processors being correct, so its vote is
one of the $f$ faulty accounted votes, and every one of those
endorses $b'$. The override therefore discards $b'$ and carries
$b$, as required. Finally, on an honest producer's chain
incompatible blocks do not exist, so no tie can arise and the
override never fires.
\end{proof}

Three remarks. First, at $n=5f+1$ the theorem's thresholds are
$s^*=n-f$ and $s^*-2f=2f+1$, no tie is possible ($n+3f$ being odd),
and the safe pair of part (b) is precisely the pair of rules of
Section~\ref{sec:spec}. Second, the theorem concerns the extension
carry alone. The companion observation that, even at $s=n-f$, the
condition $k^*_i=m_i$ cannot be dropped, by the two-execution
construction described with the definition of $\TipsF$ in
Section~\ref{sec:spec}, is unaffected, as are the downstream
constants: membership finality at $3f+1$ and the counts of the block
inclusion theorem do not depend on $s$. Third, we do not install the
relaxation. At the headline assumption $n=5f+1$ it changes nothing,
and the rules of Section~\ref{sec:spec} stand as written; a
deployment with $n$ well above $5f+1$ may adopt the generalised
carry as a contained change. At $n=100$, $f=19$, it buys an in-view
ordering of the frontier that survives two omissions from the
quorum's $81$ votes, where unanimity tolerated none.

\section{Optimisations}\label{sec:optimisations}

\paragraph{Backup proposals: finalising the views of crashed
leaders.}\label{sec:backups}
During synchrony, the one case in which a view finalises nothing
however the non-leader processors behave is a failed leader. In this case, 
correct processors time out, the view nullifies, and every chain
waits a view. The wait is short, and Section~\ref{sec:achieves}
argues that the case is rare when leaders have something to lose,
but it is removable, at the price of one further proposal per view.
Proposer redundancy has a long history: PBFT's view change is a
backup leader across views, RBFT~\cite{aublin2013rbft} runs
redundant leaders in parallel to police a faulty master, and DAG
protocols place several anchors per round so that one crashed
leader costs no
timeout~\cite{spiegelman2024shoal,babel2023mysticeti}
(Section~\ref{sec:related}). The immediate inspiration is
Hermes~\cite{ranchalpedrosa2026hermes}, concurrent work in which
every processor proposes at view entry, so that even the view of a
failed leader finalises. Our design is
narrower: one designated backup rather than a proposal from every
processor, dealing with crashed leaders only and making no attempt
at the tail-latency guarantees that Hermes targets.
To do so, we give each view a \emph{backup},
$\mathsf{backup}(v):=\lead(v+1)$ (any fixed assignment with
$\mathsf{backup}(v)\neq\lead(v)$ suffices). Leader blocks gain a role
bit: $\ell=(v,r,q,C)$ with $r\in\{0,1\}$, included in the
identifier, $H(\ell):=H(v,r,q,H(C))$, and signed by $\lead(v)$ when
$r=0$ (\emph{the leader's proposal}) and by $\mathsf{backup}(v)$
when $r=1$ (\emph{the backup proposal}); condition (i) of proposal
validity is checked per role, and backup proposals are only valid upon timeout. The backup proposes upon entering the
view, exactly as the leader does and via the same $\SelectAnchor$,
the two proposers choosing anchors independently. Voting changes
only at the timeout (before its own timeout, a processor votes for
the backup proposal only via the vote-along rule, upon a V-QC
designating it): a processor reaching $2\Delta$ on its timer
without having voted or nullified votes for the backup proposal if
it holds a valid one, and sends its novote--nullify pair only
otherwise. Everything else is unchanged. 

This modification suffices to deal with leaders that crash before view entry. If $\lead(v)$ crashes
and $\mathsf{backup}(v)$ is correct, the backup proposal has been
disseminated upon view entry, every correct processor votes for it at
its timeout, and the votes assemble into an ordinary L-QC. The view
then obtains the finalisation and extraction guarantees of a
correct-leader view, extension votes included, by
$t+2\Delta+2\delta$ from the first correct entry, which is the time
at which the rules of Section~\ref{sec:spec} merely assemble the
nullification, deferring all content by a view. One way to put the
gain: under crash faults arriving independently of the leader
schedule, the extension roughly \emph{squares} the probability that
a view is lost to silence, since failure now requires both of the
view's proposers to be down; a view-failure rate of one in a hundred
becomes roughly one in ten thousand. The good and common
cases are untouched. In those cases,  the timeout never fires, correct
processors send exactly the messages of Section~\ref{sec:spec}, and
the backup proposal is the only extra message sent. Safety
reduces to familiar counting checks. The counting behind the
designation lemma ensures that, if any view $v$ block receives an
L-notarisation, then no other view $v$ block receives $2f+1$ votes.
If any proposal is L-notarised, the designation and
no-nullification lemmas of Section~\ref{sec:verification} hold for
it verbatim, their proofs nowhere using the author's role: every
V-notarisation of the view designates the finalised proposal, and
the view cannot be nullified. In a view where
nothing is L-notarised, distinct view $v$ blocks (of the two roles
or, under equivocation, versions within a role) may separately
gather $2f+1$ votes, so distinct V-QCs may designate distinct
proposals; the leader-chain induction simply applies to every block
receiving $2f+1$ votes, its earliest-correct-voter argument gaining
a third route, the timeout vote, which like the leader route
presupposes a valid proposal. With these three observations in
place, each proof of Section~\ref{sec:verification} applies with
`leader block' ranging over both roles.

A Byzantine leader can still spoil the view by \emph{splitting}
correct voters between the two proposals so that neither reaches
$n-f$; the view then advances without finalising, by a designated
V-QC for one of the proposals or by a nullification assembled via
the evidence route, exactly as a failed view does in the base
protocol. What a split view retains is
membership finality: the $3f+1$ threshold of
Section~\ref{sec:verification} applies to the pooled votes across
the view's proposals, a vote counted once the proposal body it
names is held, and the proof indifferent to which proposal each
vote names (each vote's side condition is checked against the base
of the proposal it names). One might hope for more, an order-finality rule
counting endorsements across the two proposals, but no such rule is
sound: for any threshold short of unanimity of all $n$ processors,
two executions can share a view $v$
V-notarisation bit for bit while the base rank rule finalises a
pinned block in one and an incompatible block gathers the mixed
endorsements in the other, so that no extraction rule serves both.
This is the condition $k^*_i=m_i$ of Section~\ref{sec:spec} reborn
across proposals, and it is why the extension stops where it does:
membership without in-view placement is exactly what a split view
can promise.

The guarantees against censorship are improved.
The block inclusion theorem quantifies over every L-notarisation of
the view, now of either role, so a censoring \emph{backup} faces
the same dichotomy as a censoring leader. A leader that sacrifices
its view by silence no longer even ensures the one-view delay, since a
correct backup completes the view with the censored block included;
to retain the delay, the leader must instead split correct voters by
disseminating a valid proposal selectively.  The backup slot itself is inert in good
views: backup votes exist only where the leader has already failed
for the voter concerned, a backup cannot bootstrap votes for its
own proposal (a V-QC designating it presupposes $f+1$ correct
timeout votes), and anchoring artificially low is excluded for
backups exactly as for leaders.  The costs are one proposal-sized object per view,
one bit in the leader-block identifier, and, in views where the
timeout route is exercised, V-QCs recording votes for the
non-designated proposal in the slot that Section~\ref{sec:spec}
already provides for votes under an equivocating leader.

\paragraph{Ordering variants: reputation without fragility.}
Multimmit is designed to be robust \emph{without} a reputation
mechanism, and reputation mechanisms bring complexities of their
own (Section~\ref{sec:intuition-raptr}). Still, there are
circumstances in which such a system may be useful in minimising the one cost
that faulty producers still impose on other chains: the one-view
placement wait, incurred when emission halts at an unsettled
chain's slot. The horizontal ordering of Section~\ref{sec:spec}
may be replaced by any sweep that is a deterministic function of
its inputs, and the replacement is safe whatever the blame rule:
consistency, finality and inclusion hold for any such sweep, and a
misblamed honest chain loses only its position in the sweep, never
in-view emission, since a settled chain is emitted wherever the
sweep places it. An ordering that visits likely
laggards last reduces the placement wait directly. Concretely, let
processors maintain a \emph{blame set} of at most $f$ chains, by
any rule on which all agree. The sweep visits the unblamed chains
first, in an order shuffled each view by a VRF seed on the view
number, so that no chain index enjoys a standing positional
advantage.\footnote{A hash of the leader block would serve as a
seed, but is grindable by the leader; the VRF admits one value per
view.} It then visits the blamed chains vertically, so that a halt
at one blamed chain defers only the chains behind it.

We do not specify a blame rule, since attribution is subtle. A
chain may show short because its producer withheld blocks, because
the \emph{leader} proposed junk entries, or through passing
asynchrony; separating the first two cases requires the challenges
of the accountability paragraph of
Section~\ref{sec:verification}. We note only that a sound rule
must track the rank-$(3f+1)$ statistic at which placement halts,
not the rank-$(f+1)$ statistic at which safe-to-extend tips
advance: a producer disclosing each block to exactly $2f+1$ correct
processors keeps its tips advancing in every view, while halting
every correct processor's emission at its slot.

\paragraph{Optimising the anchor choice.} $\SelectAnchor$ returns
the first V-QC the leader received for the greatest view held, an
object already forwarded to all, so the anchor costs no further
dissemination. However, this approach to V-QC selection is policy, rather than a required feature of the protocol. Validity
accepts any V-QC for the referenced view, and the analysis of
Section~\ref{sec:verification} quantifies over arbitrary V-QCs, as
it must, since a Byzantine leader is free to anchor on anything it can
assemble. Subject to the greatest view held, which liveness uses,
the choice is free, at the price of one send: a leader anchoring
elsewhere must disseminate its chosen anchor itself. The freedom
is best spent on higher tips. Everything at or below the anchor's
tips is placed in the total ordering by the never-halting recursion
defining $\Ord$; only the territory above is subject to the
one-view placement wait. Since voters send their votes to all, the
leader of a later view holds the previous view's vote \emph{pool},
and may assemble an anchor from it: packing the tally with every
vote it holds, and completing the count of $n-f$ distinct signers,
where the votes alone fall short, with novote messages it has
received. (Received V-QCs cannot be remixed, their signatures
being aggregated; the candidates are whole held V-QCs and quorums
assembled from individually held messages.) Exact tip-maximisation
is combinatorial once a producer has equivocated, but nothing
requires it: any valid choice is safe. The benefit is nil in the
good case, and bounded by $d+e$ blocks per chain otherwise.

Nothing, moreover, requires an assembled V-QC to account for
exactly $n-f$ messages: the leader may include \emph{every} view
$v'$ message it holds. The counting arguments of
Section~\ref{sec:verification} bound absent correct voters from
above and tip endorsement from below, and both bounds improve as
messages are added; position ranks are monotone under added votes,
and any $f+1$ endorsers include a correct processor, so
availability is undisturbed. One point, however, requires care. With exactly
$n-f$ messages the extension carry is unique, two incompatible
candidates needing $2(2f+1)$ endorsers among $4f+1$ messages. With
more messages, incompatible candidates can each reach $2f+1$, and
the carry needs a deterministic tie-break. Any tie-break is safe: a
block finalisable by any L-QC or pool leaves at most $2f$ votes, in
any message set whatever, endorsing blocks incompatible with it, so
competition arises only when nothing conflicting is finalisable.
The pool rules for L-QCs already take this many-message form
(Section~\ref{sec:spec}). The anchor-side generalisation is their
mirror image.

\paragraph{Tip commitments: verifiable history without its V-QCs.}
Leader blocks hash-chain all of history through their V-QC
references, so the trust root is already of constant size, but the data
is not. $\Tips(Q)$ is computable from $Q$, but a claimed value of
$\Tips(Q)$ cannot be checked without $Q$, so the V-QCs are the
only witnesses to the ordering: replaying the ordering from a
trusted leader block means fetching every V-QC of the reference
chain and computing the extraction of Section~\ref{sec:spec} view
by view, at tens of kilobytes per view in the good case
(Section~\ref{sec:sizes}), more when faulty voters inflate
deviation records, and the cost is paid by every processor that
ever syncs. One extra field can be used to remove the cost. Let
leader blocks take the form $\ell=(v,H(Q),C,c)$, with validity
requiring $c=H(c',T(Q))$, where $c'$ is the commitment carried by
the leader block contained in $Q$, and $T(Q)$ is the vector of
pairs $(H(\Tips(Q)_i),\Tips(Q)_i.\mathrm{height})$. The condition
is checkable from $\ell$ and $Q$ alone, since computing $\Tips(Q)$
requires only $Q$. Note also that no leader block violating the
condition is ever certified, by the argument of the data
availability paragraph of Section~\ref{sec:verification}: among the
$2f+1$ votes that certification requires, the first $f+1$ correct
voters vote via the valid-proposal route, and so check the
condition. A single finalised leader block then authenticates the
entire history of safe-to-extend tips, at approximately $40$ bytes
per chain per view, a hash and a height each ($4$KB per view at
$K=100$). Since each tip hash
pins its ancestor path, the tips sequence together with the
transaction blocks themselves determines the total ordering
exactly, and historical V-QCs need never be touched. The retrieval
convention of Section~\ref{sec:verification} remains, but is
required only for live emission near the tip. The costs are one
hash per leader block and per validity check, together with the
loss of the observation that leader-block wire format matches
Minimmit's.

\paragraph{One signature scheme.} The setup of
Section~\ref{sec:setup} equips processors with an aggregate scheme
for votes and with two threshold schemes, at thresholds $n-2f$ for
DA-certificates and $2f+1$ for nullifications. The threshold
schemes can be dispensed with. A DA-certificate becomes
$(\hd(b),W,\sigma)$, where $W$ is a bitmap of $n-2f$ signers and
$\sigma$ aggregates ordinary signatures on $(\da,\hd(b))$, and
nullifications are treated likewise. The message flow is unchanged,
and no use the analysis makes of a DA-certificate consumes the
threshold property: availability arguments count signers (and the
bitmap tells a retriever whom to ask), uniqueness of certified
blocks is quorum intersection over signer sets, and anchors and the
DA window compare header values. Verification still costs one
pairing, since the messages are common, together with a popcount
that the scheme no longer enforces. The losses are modest.
Certificates grow by $n/8$ bytes ($13$ bytes at $n=100$), and are
no longer unique as bit strings, so deduplication must key on
content; the two places where the analysis invokes uniqueness of
threshold signatures (deduplication, and the observation in
Lemma~\ref{lem:leader-view} that what was forwarded is the
nullification itself) instead read `the first nullification
received or assembled', mirroring the treatment of V-QCs. The gains
are that the cryptographic model shrinks to the aggregate scheme
the protocol already requires, that no distributed key generation
is needed, and that a membership change becomes an update to a key
list rather than a resharing ceremony. One consideration cuts the
other way. A threshold certificate verifies against a single fixed
group key, while the aggregate requires the validator key list, so
an external consumer that verifies DA-certificates and nothing
else, a minimal light client, say, is arguably better served by the
threshold instantiation. The choice is a matter of deployment, and
nothing in the analysis depends on it.

\paragraph{Batch verification.} Little here is specific to Multimmit,
but the impact on performance is significant for the values of $n$
considered in Section~\ref{sec:sizes}, since steady-state CPU cost is
dominated by signature verification. A producer collecting DA-vote shares
may aggregate optimistically and verify the combined signature once,
falling back to per-share checks only on failure, and the same
applies to nullify shares. A processor receiving a burst of votes for
one view may verify them as a single multi-pairing under random
scaling coefficients, bisecting on failure; and QC verification,
costing one pairing per distinct message in the aggregate, batches
across QCs the same way. In each case the optimistic path costs one
batched pairing-product check (Miller-loop work still scaling with
the number of distinct messages), and the fallback isolates the
invalid objects, with individual attribution available for isolated
vote signatures and threshold shares, an invalid signature so
isolated being itself evidence of misbehaviour
(Section~\ref{sec:verification}). Transport offers similar
headroom: votes may travel content-addressed, a digest standing in
for a body the receiver can reconstruct, and the forwarding duty may
be advertise-and-pull, bringing good-case consensus bandwidth close
to the object sizes themselves; we leave the mechanics to the
experimental companion.

\section{Experiments}\label{sec:experiments}

\andy{To be added}

\section{Related work}\label{sec:related}

\paragraph{Classical Byzantine consensus.} The study of consensus in
the presence of Byzantine faults was introduced by Lamport, Shostak
and Pease~\cite{lamport1982byzantine}, with a treatment of SMR given
later by Schneider~\cite{schneider1990implementing}; reliable
broadcast under asynchrony is due to
Bracha~\cite{bracha1987asynchronous}. Dwork, Lynch and
Stockmeyer~\cite{DLS88} showed that $n\ge 3f+1$ is optimal for
partial synchrony. Standard protocols using this assumption, such as
PBFT~\cite{castro1999practical,castro2002practical},
Tendermint~\cite{buchman2016tendermint,buchman2018latest},
HotStuff~\cite{yin2019hotstuff} and Simplex~\cite{chan2023simplex},
satisfy 3-round finality, which is optimal at that
resilience~\cite{abraham2021good}. A large literature refines the
leader-based form, including
pipelining and proposal timing~\cite{doidge2024moonshot,
jalalzai2023fasthotstuff,kang2024hotstuff1,giridharan2021nocommit,
sui2022marlin,shoup2023sing}, communication
topology~\cite{neiheiser2021kauri,lokhava2019stellar},
multi-leader deployments~\cite{stathakopoulou2019mir},
awareness of network conditions~\cite{cohen2022aware}, and the
synchronous setting~\cite{abraham2020sync}.

\paragraph{Finality in one round of voting.} A long line of
work~\cite{brasileiro2001consensus,kursawe2002optimistic,
friedman2005simple,martin2006fast,guerraoui2007refined,
song2008bosco,pass2018thunderella} considers protocols with a `fast
path', allowing quick finalisation under favourable conditions.
FaB~\cite{martin2006fast} introduced a parameterised model with
$n\ge 3f+2p+1$ processors, achieving finality in one round of voting
when at most $p$ processors are Byzantine; Kuznetsov, Tonkikh and
Zhang~\cite{kuznetsov2021revisiting} later showed that $n\ge 3f+2p-1$
is optimal (see also~\cite{abraham2017revisiting}, which identifies
bugs in several protocols of this family, including
Zyzzyva~\cite{kotla2007zyzzyva}; SBFT~\cite{gueta2019sbft} extends
the approach to tolerate crashes on the fast path). Recent protocols
at the $n\ge 5f+1$ design point include
Minimmit~\cite{chou2025minimmit}, whose consensus skeleton we
inherit, Alpenglow~\cite{alpen}, Kudzu~\cite{shoup2025kudzu},
Hydrangea~\cite{shrestha2025hydrangea,shresthahydrangeaplusplus},
Banyan~\cite{vonlanthen2024banyan} and
ChonkyBFT~\cite{francca2025chonkybft}. 
Flexible and multi-threshold
quorums~\cite{malkhi2019flexible,momose2021multi,
xiang2021strengthened} and resilience beyond optimal
bounds~\cite{lewis2025beyond} study related trade-offs between
thresholds and guarantees.

\paragraph{Multi-proposer dissemination.} Narwhal~\cite{danezis2022narwhal}
demonstrated that separating dissemination from ordering, with all
processors sharing responsibility for transmitting transactions,
yields far higher throughput than passing data through leaders.
DAG-based protocols~\cite{keidar2021all,spiegelman2022bullshark,
keidar2022cordial,gkagol2018aleph,baird2016hashgraph,malkhi2023bbca,
spiegelman2024shoal,arun2024shoalpp,shrestha2025sailfish} reach
consensus on the graph structure itself, with a line of work
reducing their round-latency towards that of leader-based designs;
Mysticeti~\cite{babel2023mysticeti} does so by referencing
uncertified data, at the robustness cost measured
in~\cite{arun2024shoalpp,tonkikh2025raptr} and discussed in
Section~\ref{sec:intro}. Fides~\cite{xie2025fides} obtains
censorship resistance in this setting using trusted hardware.
Odontoceti~\cite{vandervos2025odontoceti} and
BlueBottle~\cite{vos2025bluebottle} bring the $n\ge 5f+1$ assumption
to uncertified DAGs, committing in two rounds of voting.
Closest to this paper are the multi-chain designs.
Autobahn~\cite{giridharan2024autobahn} is our starting point:
producer chains with availability certificates, cut through by
leader blocks, with certification priced into every transaction's
latency. Raptr~\cite{tonkikh2025raptr}, built on Quorum
Store~\cite{spiegelman2023quorumstore} and
Jolteon~\cite{gelashvili2022jolteon}, is the state of the art
treated at length in Sections~\ref{sec:intro}
and~\ref{sec:intuition}. Cadence~\cite{babel2026cadence},
concurrent work, runs a separate consensus instance per slot, with
several proposers per slot disseminating proposals independently;
validators vote yes or no on each proposal's arrival, then
commit-vote on a deterministic merge of the accepted proposals, in
three communication rounds at $n\ge 3f+1$, with a short-term
censorship-resistance guarantee. An offline proposer costs only its
own contribution, but partial dissemination or equivocation by one
proposer sends the whole slot to a fallback path; Multimmit's
per-chain votes confine equivocation too, not only silence, to the
offending producer's own chain. Cadence also assumes synchronised
clocks, and, its proposals being mutually hidden and merged by fee
order, forgoes \emph{predictable validity} in the sense of
Gatling~\cite{scaffino2026gatling}: a proposer cannot validate
transactions against the state at which they will execute.

\paragraph{Inclusion guarantees.} AMP~\cite{cason2026amp} runs
dedicated proposers over Tendermint at $n\ge 3f+1$, with validators
attesting the payloads they have received in \emph{vote extensions}
attached to their consensus votes. The mechanism resembles our
extension votes, but with significant differences. In AMP, the
attestations cast in one height constrain only the \emph{next}
height's proposal: a payload attested by all correct validators at
height $h$ must be included at height $h+1$, and a payload reaching
fewer correct validators may be omitted. Fresh data is thus never
finalised in the height whose votes attest it, and the good-case
path from payload dissemination to finalisation is five message
delays. In Multimmit, extension votes finalise fresh blocks within
the view whose votes endorse them, two message delays after
dissemination at best, and the block inclusion theorem settles
ledger membership in-view even when in-view placement fails.
Abraham, Efron and Ren~\cite{abraham2025latency} prove a two-round
latency cost inherent to inclusion guarantees of AMP's
next-block form; Multimmit's guarantee is shaped differently,
settling membership within the view, with a place in the total
ordering arriving immediately in the good case, or else via the next finalised leader block.
Hermes~\cite{ranchalpedrosa2026hermes}, concurrent work, shares
Minimmit's quorum sizes and targets the tail rather than censorship:
every processor proposes at view entry, votes require comparability
rather than equality, and the view of a failed leader itself
finalises the heaviest common prefix of $n-f$ fallback votes, where
Multimmit nullifies (though see the backup-proposal extension of
Section~\ref{sec:optimisations}). Hermes's votes are for a single value, so the
common prefix excludes the fresh suffix on which honest proposals
differ, and one lagging vote truncates what a view finalises;
Multimmit's votes are per chain, confining each lag to its own
chain's coordinate. The availability machinery of Hermes covers
consensus values, with payload availability inherited from the
dissemination layer beneath, while Multimmit's votes attest
per-chain possession, making availability a theorem
(Section~\ref{sec:verification}).

\paragraph{Bandwidth and models.} Cachin and
Tessaro~\cite{cachin2005asynchronous} introduced verifiable
information dispersal, combining erasure codes with commitments so
that data can be recovered from any sufficiently large set of
verifiable fragments; see
also~\cite{alhaddad2021succinct,yang2022dispersedledger}. The Pipes
framework~\cite{lewis2025pipes} analyses how the data expansion
rate of erasure coding governs throughput in leader-based
protocols, and Carnot~\cite{lewispye2026carnot} establishes the
achievable rates at 2-round and 3-round finality. Multimmit removes
the leader bottleneck by different means, dissemination travelling
once from each producer to all, and Section~\ref{sec:final}
discusses the regimes in which the two approaches meet. Finality
gadgets and dynamic
participation~\cite{buterin2017casper,neu2021ebb,
lewis2023permissionless,lewis2024lumiere,lewis2025morpheus} treat
settings beyond the fixed-committee model assumed here. Our
cryptographic components are
standard~\cite{boneh2001short,boneh2003aggregate,shoup2000practical,
abspoel2020malicious}.

\section{Final comments}\label{sec:final}

We have presented Multimmit, established consistency, liveness and
the availability, inclusion and localisation guarantees, and
quantified its latency. We highlight some questions that the paper
leaves open.

\paragraph{Limitations.} Three deployment matters sit outside the
analysis. First, resource bounds: as specified, $\ProduceNext$
places no bound on a transaction block's size, and correct
processors DA-vote, store and serve every certified block. A
deployment must bound block size and admission at the mempool
layer, and the analysis is indifferent to any such bound, since
liveness requires only that every pending transaction is eventually
included. Second, worst-case consensus data: the
tens-of-kilobytes-per-view object sizes of Section~\ref{sec:achieves} are
the good case, and maximally deviant votes inflate QCs to roughly
half a megabyte per view at $K=100$ (Section~\ref{sec:sizes}), a
bandwidth exposure under $f$ Byzantine voters that implementations
should provision for. Third, setup: the threshold schemes require
distributed key generation, and the validator set is fixed
throughout; Section~\ref{sec:optimisations} describes an
instantiation without threshold schemes, under which a membership
change reduces to an update of a key list.

\paragraph{Accountability.} The classification of misbehaviour in
Section~\ref{sec:verification} stops short of a formal
treatment. Formalising the guarantee, in the style of the
literature on BFT forensics, and specifying the challenge mechanism
and blame rules sketched in Section~\ref{sec:optimisations}, would
put the incentive-level discussion of greedy leaders on the same
footing as the protocol's other guarantees.

\paragraph{Bandwidth.} Multimmit removes the leader's bandwidth as a
throughput bottleneck, since transaction data travels only from each
producer to all. When the number of chains is comparable to $n$, the
load is balanced and each processor receives each byte of transaction
data once, so there is no analogue of the data expansion that erasure
coding exists to remove in the leader-based
setting~\cite{lewispye2026carnot}. When the number of chains is
small, each producer's outgoing bandwidth bounds the rate at which
its own chain can grow, and the trade-offs begin to resemble those of
the leader-based setting; we leave the comparison of the two regimes,
along with the further optimisations listed in
Section~\ref{sec:optimisations}, for future work. Finally, we have
assumed for simplicity that the block producers are the validators,
and the intended deployments separate the two roles; the
generalisation, with $K$ producer chains for $K$ unrelated to $n$,
changes nothing of substance in the analysis.


\begin{thebibliography}{10}

\bibitem{abraham2025latency}
Ittai Abraham, Yuval Efron, and Ling Ren.
\newblock The latency cost of censorship resistance.
\newblock Cryptology ePrint Archive, Paper 2025/2136, 2025.

\bibitem{abraham2017revisiting}
Ittai Abraham, Guy Gueta, Dahlia Malkhi, Lorenzo Alvisi, Rama Kotla, and
  Jean-Philippe Martin.
\newblock Revisiting fast practical byzantine fault tolerance.
\newblock {\em arXiv preprint arXiv:1712.01367}, 2017.

\bibitem{abraham2020sync}
Ittai Abraham, Dahlia Malkhi, Kartik Nayak, Ling Ren, and Maofan Yin.
\newblock Sync hotstuff: Simple and practical synchronous state machine
  replication.
\newblock In {\em 2020 IEEE Symposium on Security and Privacy (SP)}, pages
  106--118. IEEE, 2020.

\bibitem{abraham2021good}
Ittai Abraham, Kartik Nayak, Ling Ren, and Zhuolun Xiang.
\newblock Good-case latency of byzantine broadcast: A complete categorization.
\newblock In {\em Proceedings of the 2021 ACM Symposium on Principles of
  Distributed Computing}, pages 331--341, 2021.

\bibitem{abspoel2020malicious}
Mark Abspoel, Thomas Attema, and Matthieu Rambaud.
\newblock Malicious security comes for free in consensus with leaders.
\newblock Cryptology ePrint Archive, 2020.

\bibitem{alhaddad2021succinct}
Nicolas Alhaddad, Sisi Duan, Mayank Varia, and Haibin Zhang.
\newblock Succinct erasure coding proof systems.
\newblock {\em Cryptology ePrint Archive}, 2021.

\bibitem{arun2024shoalpp}
Balaji Arun, Zekun Li, Florian Suri-Payer, Sourav Das, and Alexander
  Spiegelman.
\newblock Shoal++: High throughput {DAG BFT} can be fast!
\newblock {\em arXiv preprint arXiv:2405.20488}, 2024.

\bibitem{aublin2013rbft}
Pierre-Louis Aublin, Sonia Ben~Mokhtar, and Vivien Qu{\'e}ma.
\newblock {RBFT}: Redundant {Byzantine} fault tolerance.
\newblock In {\em 2013 IEEE 33rd International Conference on Distributed
  Computing Systems}, pages 297--306. IEEE, 2013.

\bibitem{babel2023mysticeti}
Kushal Babel, Andrey Chursin, George Danezis, Anastasios Kichidis, Lefteris
  Kokoris-Kogias, Arun Koshy, Alberto Sonnino, and Mingwei Tian.
\newblock Mysticeti: Reaching the limits of latency with uncertified dags.
\newblock {\em arXiv preprint arXiv:2310.14821}, 2023.

\bibitem{babel2026cadence}
Kushal Babel, Fatima Elsheimy, Lioba Heimbach, Mohammad~Mussadiq Jalalzai,
  Tobias Klenze, Jovan Komatovic, Jason Milionis, Mike Setrin, and Victor
  Shoup.
\newblock Cadence: Extreme pipelining with multiple concurrent proposers.
\newblock {\em arXiv preprint arXiv:2607.02275}, 2026.

\bibitem{baird2016hashgraph}
Leemon Baird.
\newblock The {Swirlds} hashgraph consensus algorithm: Fair, fast, {Byzantine}
  fault tolerance.
\newblock Technical report, Swirlds Tech Report SWIRLDS-TR-2016-01, 2016.

\bibitem{boneh2003aggregate}
Dan Boneh, Craig Gentry, Ben Lynn, and Hovav Shacham.
\newblock Aggregate and verifiably encrypted signatures from bilinear maps.
\newblock In {\em Advances in Cryptology---EUROCRYPT 2003}. Springer, 2003.

\bibitem{boneh2001short}
Dan Boneh, Ben Lynn, and Hovav Shacham.
\newblock Short signatures from the weil pairing.
\newblock In {\em International conference on the theory and application of
  cryptology and information security}, pages 514--532. Springer, 2001.

\bibitem{bracha1987asynchronous}
Gabriel Bracha.
\newblock Asynchronous {Byzantine} agreement protocols.
\newblock {\em Information and Computation}, 75(2):130--143, 1987.

\bibitem{brasileiro2001consensus}
Francisco Brasileiro, Fab{\'\i}ola Greve, Achour Most{\'e}faoui, and Michel
  Raynal.
\newblock Consensus in one communication step.
\newblock In {\em International Conference on Parallel Computing Technologies},
  pages 42--50. Springer, 2001.

\bibitem{buchman2016tendermint}
Ethan Buchman.
\newblock {\em Tendermint: Byzantine fault tolerance in the age of
  blockchains}.
\newblock PhD thesis, University of Guelph, 2016.

\bibitem{buchman2018latest}
Ethan Buchman, Jae Kwon, and Zarko Milosevic.
\newblock The latest gossip on bft consensus.
\newblock {\em arXiv preprint arXiv:1807.04938}, 2018.

\bibitem{buterin2017casper}
Vitalik Buterin and Virgil Griffith.
\newblock Casper the friendly finality gadget.
\newblock {\em arXiv preprint arXiv:1710.09437}, 2017.

\bibitem{cachin2005asynchronous}
Christian Cachin and Stefano Tessaro.
\newblock Asynchronous verifiable information dispersal.
\newblock In {\em 24th IEEE Symposium on Reliable Distributed Systems
  (SRDS'05)}, pages 191--201. IEEE, 2005.

\bibitem{cason2026amp}
Daniel Cason, Gordon Liao, Sergio Mena, Nenad Milo{\v{s}}evi{\'c}, Adi
  Seredinschi, Alessandro Sforzin, Jo{\~a}o Sousa, and Preston Vander~Vos.
\newblock {AMP}: Arc multi-proposer protocol with bounded inclusion guarantees.
\newblock {\em arXiv preprint arXiv:2605.23677}, 2026.

\bibitem{castro1999practical}
Miguel Castro and Barbara Liskov.
\newblock Practical byzantine fault tolerance.
\newblock In {\em OSDI}, volume~99, pages 173--186, 1999.

\bibitem{castro2002practical}
Miguel Castro and Barbara Liskov.
\newblock Practical byzantine fault tolerance and proactive recovery.
\newblock {\em ACM Transactions on Computer Systems (TOCS)}, 20(4):398--461,
  2002.

\bibitem{chan2023simplex}
Benjamin~Y Chan and Rafael Pass.
\newblock Simplex consensus: A simple and fast consensus protocol.
\newblock In {\em Theory of Cryptography Conference}, pages 452--479. Springer,
  2023.

\bibitem{chou2025minimmit}
Brendan~Kobayashi Chou, Andrew Lewis-Pye, and Patrick O'Grady.
\newblock Minimmit: Fast finality with even faster blocks.
\newblock {\em arXiv preprint arXiv:2508.10862}, 2025.

\bibitem{cohen2022aware}
Shir Cohen, Rati Gelashvili, Lefteris Kokoris-Kogias, Zekun Li, Dahlia Malkhi,
  Alberto Sonnino, and Alexander Spiegelman.
\newblock Be aware of your leaders.
\newblock In {\em International Conference on Financial Cryptography and Data
  Security}. Springer, 2022.

\bibitem{danezis2022narwhal}
George Danezis, Lefteris Kokoris-Kogias, Alberto Sonnino, and Alexander
  Spiegelman.
\newblock Narwhal and tusk: a dag-based mempool and efficient bft consensus.
\newblock In {\em Proceedings of the Seventeenth European Conference on
  Computer Systems}, pages 34--50, 2022.

\bibitem{doidge2024moonshot}
Isaac Doidge, Raghavendra Ramesh, Nibesh Shrestha, and Joshua Tobkin.
\newblock Moonshot: Optimizing chain-based rotating leader bft via optimistic
  proposals.
\newblock {\em arXiv preprint arXiv:2401.01791}, 2024.

\bibitem{DLS88}
Cynthia Dwork, Nancy~A. Lynch, and Larry Stockmeyer.
\newblock Consensus in the presence of partial synchrony.
\newblock {\em Journal of the ACM}, 35(2):288--323, 1988.

\bibitem{francca2025chonkybft}
Bruno Fran{\c{c}}a, Denis Kolegov, Igor Konnov, and Grzegorz Prusak.
\newblock Chonkybft: Consensus protocol of zksync.
\newblock {\em arXiv preprint arXiv:2503.15380}, 2025.

\bibitem{friedman2005simple}
Roy Friedman, Achour Mostefaoui, and Michel Raynal.
\newblock Simple and efficient oracle-based consensus protocols for
  asynchronous byzantine systems.
\newblock {\em IEEE Transactions on Dependable and Secure Computing},
  2(1):46--56, 2005.

\bibitem{gkagol2018aleph}
Adam G{\k{a}}gol and Micha{\l} {\'S}wi{\k{e}}tek.
\newblock Aleph: A leaderless, asynchronous, byzantine fault tolerant consensus
  protocol.
\newblock {\em arXiv preprint arXiv:1810.05256}, 2018.

\bibitem{gelashvili2022jolteon}
Rati Gelashvili, Lefteris Kokoris-Kogias, Alberto Sonnino, Alexander
  Spiegelman, and Zhuolun Xiang.
\newblock Jolteon and {Ditto}: Network-adaptive efficient consensus with
  asynchronous fallback.
\newblock In {\em International Conference on Financial Cryptography and Data
  Security}. Springer, 2022.

\bibitem{giridharan2021nocommit}
Neil Giridharan, Heidi Howard, Ittai Abraham, Natacha Crooks, and Alin Tomescu.
\newblock No-commit proofs: Defeating livelock in {BFT}.
\newblock Cryptology ePrint Archive, 2021.

\bibitem{giridharan2024autobahn}
Neil Giridharan, Florian Suri-Payer, Ittai Abraham, Lorenzo Alvisi, and Natacha
  Crooks.
\newblock Autobahn: Seamless high speed bft.
\newblock In {\em Proceedings of the ACM SIGOPS 30th Symposium on Operating
  Systems Principles}, pages 1--23, 2024.

\bibitem{guerraoui2007refined}
Rachid Guerraoui and Marko Vukoli{\'c}.
\newblock Refined quorum systems.
\newblock In {\em Proceedings of the twenty-sixth annual ACM symposium on
  Principles of distributed computing}, pages 119--128, 2007.

\bibitem{gueta2019sbft}
Guy~Golan Gueta, Ittai Abraham, Shelly Grossman, Dahlia Malkhi, Benny Pinkas,
  Michael Reiter, Dragos-Adrian Seredinschi, Orr Tamir, and Alin Tomescu.
\newblock Sbft: A scalable and decentralized trust infrastructure.
\newblock In {\em 2019 49th Annual IEEE/IFIP international conference on
  dependable systems and networks (DSN)}, pages 568--580. IEEE, 2019.

\bibitem{jalalzai2023fasthotstuff}
Mohammad~M Jalalzai, Jianyu Niu, Chen Feng, and Fangyu Gai.
\newblock Fast-{HotStuff}: A fast and robust {BFT} protocol for blockchains.
\newblock {\em IEEE Transactions on Dependable and Secure Computing}, 2023.

\bibitem{kang2024hotstuff1}
Dakai Kang, Suyash Gupta, Dahlia Malkhi, and Mohammad Sadoghi.
\newblock {HotStuff}-1: Linear consensus with one-phase speculation.
\newblock {\em arXiv preprint arXiv:2408.04728}, 2024.

\bibitem{keidar2021all}
Idit Keidar, Eleftherios Kokoris-Kogias, Oded Naor, and Alexander Spiegelman.
\newblock All you need is dag.
\newblock In {\em Proceedings of the 2021 ACM Symposium on Principles of
  Distributed Computing}, pages 165--175, 2021.

\bibitem{keidar2022cordial}
Idit Keidar, Oded Naor, Ouri Poupko, and Ehud Shapiro.
\newblock Cordial miners: Fast and efficient consensus for every eventuality.
\newblock {\em arXiv preprint arXiv:2205.09174}, 2022.

\bibitem{alpen}
Quentin Kniep, Jakub Sliwinski, and Roger Wattenhofer.
\newblock Solana alpenglow consensus.
\newblock {\em
  \url{https://www.scribd.com/document/895233790/Solana-Alpenglow-White-Paper}},
  2025.

\bibitem{kotla2007zyzzyva}
Ramakrishna Kotla, Lorenzo Alvisi, Mike Dahlin, Allen Clement, and Edmund Wong.
\newblock Zyzzyva: speculative byzantine fault tolerance.
\newblock In {\em Proceedings of twenty-first ACM SIGOPS symposium on Operating
  systems principles}, pages 45--58, 2007.

\bibitem{kursawe2002optimistic}
Klaus Kursawe.
\newblock Optimistic byzantine agreement.
\newblock In {\em 21st IEEE Symposium on Reliable Distributed Systems, 2002.
  Proceedings.}, pages 262--267. IEEE, 2002.

\bibitem{kuznetsov2021revisiting}
Petr Kuznetsov, Andrei Tonkikh, and Yan~X Zhang.
\newblock Revisiting optimal resilience of fast byzantine consensus.
\newblock In {\em Proceedings of the 2021 ACM Symposium on Principles of
  Distributed Computing}, pages 343--353, 2021.

\bibitem{lamport1982byzantine}
Leslie Lamport, Robert Shostak, and Marshall Pease.
\newblock The byzantine generals problem.
\newblock {\em ACM Transactions on Programming Languages and Systems (TOPLAS)},
  4(3):382--401, 1982.

\bibitem{lewis2024lumiere}
Andrew Lewis-Pye, Dahlia Malkhi, Oded Naor, and Kartik Nayak.
\newblock Lumiere: Making optimal bft for partial synchrony practical.
\newblock In {\em Proceedings of the 43rd ACM Symposium on Principles of
  Distributed Computing}, pages 135--144, 2024.

\bibitem{lewis2025pipes}
Andrew Lewis-Pye, Kartik Nayak, and Nibesh Shrestha.
\newblock The pipes model for latency analysis.
\newblock {\em Cryptology ePrint Archive}, 2025.

\bibitem{lewispye2026carnot}
Andrew Lewis-Pye and Patrick O'Grady.
\newblock The {Carnot} bound: Limits and possibilities for bandwidth-efficient
  consensus.
\newblock {\em arXiv preprint arXiv:2603.11797}, 2026.

\bibitem{lewis2023permissionless}
Andrew Lewis-Pye and Tim Roughgarden.
\newblock Permissionless consensus.
\newblock {\em arXiv preprint arXiv:2304.14701}, 2023.

\bibitem{lewis2025beyond}
Andrew Lewis-Pye and Tim Roughgarden.
\newblock Beyond optimal fault tolerance.
\newblock {\em arXiv preprint arXiv:2501.06044}, 2025.

\bibitem{lewis2025morpheus}
Andrew Lewis-Pye and Ehud Shapiro.
\newblock Morpheus consensus: Excelling on trails and autobahns.
\newblock {\em arXiv preprint arXiv:2502.08465}, 2025.

\bibitem{lokhava2019stellar}
Marta Lokhava, Giuliano Losa, David Mazi{\`e}res, Graydon Hoare, Nicolas Barry,
  Eli Gafni, Jonathan Jove, Rafa{\l} Malinowsky, and Jed McCaleb.
\newblock Fast and secure global payments with {Stellar}.
\newblock In {\em Proceedings of the 27th ACM Symposium on Operating Systems
  Principles}, 2019.

\bibitem{malkhi2019flexible}
Dahlia Malkhi, Kartik Nayak, and Ling Ren.
\newblock Flexible {Byzantine} fault tolerance.
\newblock In {\em Proceedings of the 2019 ACM SIGSAC Conference on Computer and
  Communications Security}, 2019.

\bibitem{malkhi2023bbca}
Dahlia Malkhi, Chrysoula Stathakopoulou, and Maofan Yin.
\newblock {BBCA}-chain: One-message, low latency {BFT} consensus on a {DAG}.
\newblock {\em arXiv preprint arXiv:2310.06335}, 2023.

\bibitem{martin2006fast}
J-P Martin and Lorenzo Alvisi.
\newblock Fast byzantine consensus.
\newblock {\em IEEE Transactions on Dependable and Secure Computing},
  3(3):202--215, 2006.

\bibitem{momose2021multi}
Atsuki Momose and Ling Ren.
\newblock Multi-threshold {Byzantine} fault tolerance.
\newblock In {\em Proceedings of the 2021 ACM SIGSAC Conference on Computer and
  Communications Security}, 2021.

\bibitem{neiheiser2021kauri}
Ray Neiheiser, Miguel Matos, and Lu{\'\i}s Rodrigues.
\newblock Kauri: Scalable {BFT} consensus with pipelined tree-based
  dissemination and aggregation.
\newblock In {\em Proceedings of the ACM SIGOPS 28th Symposium on Operating
  Systems Principles}, 2021.

\bibitem{neu2021ebb}
Joachim Neu, Ertem~Nusret Tas, and David Tse.
\newblock Ebb-and-flow protocols: A resolution of the availability-finality
  dilemma.
\newblock In {\em 2021 IEEE Symposium on Security and Privacy (SP)}, pages
  446--465. IEEE, 2021.

\bibitem{pass2018thunderella}
Rafael Pass and Elaine Shi.
\newblock Thunderella: Blockchains with optimistic instant confirmation.
\newblock In {\em Advances in Cryptology--EUROCRYPT 2018: 37th Annual
  International Conference on the Theory and Applications of Cryptographic
  Techniques, Tel Aviv, Israel, April 29-May 3, 2018 Proceedings, Part II 37},
  pages 3--33. Springer, 2018.

\bibitem{ranchalpedrosa2026hermes}
Alejandro Ranchal-Pedrosa, Dakai Kang, Neil Giridharan, Mohammad Sadoghi,
  Dahlia Malkhi, and Ben Marsh.
\newblock Hermes: Low tail-latency via prefix consensus.
\newblock {\em arXiv preprint arXiv:2607.25916}, 2026.

\bibitem{scaffino2026gatling}
Giulia Scaffino, Max Resnick, and Joachim Neu.
\newblock Gatling: Rapid-fire consensus from parallel composition.
\newblock {\em arXiv preprint arXiv:2606.18220}, 2026.

\bibitem{schneider1990implementing}
Fred~B Schneider.
\newblock Implementing fault-tolerant services using the state machine
  approach: A tutorial.
\newblock {\em ACM Computing Surveys (CSUR)}, 22(4):299--319, 1990.

\bibitem{shoup2000practical}
Victor Shoup.
\newblock Practical threshold signatures.
\newblock In {\em International conference on the theory and applications of
  cryptographic techniques}, pages 207--220. Springer, 2000.

\bibitem{shoup2023sing}
Victor Shoup.
\newblock Sing a song of simplex.
\newblock {\em Cryptology ePrint Archive}, 2023.

\bibitem{shoup2025kudzu}
Victor Shoup, Jakub Sliwinski, and Yann Vonlanthen.
\newblock Kudzu: Fast and simple high-throughput bft.
\newblock {\em arXiv preprint arXiv:2505.08771}, 2025.

\bibitem{shresthahydrangeaplusplus}
Nibesh Shrestha and Aniket Kate.
\newblock Hydrangea++: Enhancing hydrangea with optimistic proposals.
\newblock White paper,
  \url{https://supra.com/documents/hydrangea-plus-plus.pdf}, 2025.

\bibitem{shrestha2025hydrangea}
Nibesh Shrestha, Aniket Kate, and Kartik Nayak.
\newblock Hydrangea: Optimistic two-round partial synchrony with improved fault
  resilience.
\newblock {\em Cryptology ePrint Archive}, 2025.

\bibitem{shrestha2025sailfish}
Nibesh Shrestha, Rohan Shrothrium, Aniket Kate, and Kartik Nayak.
\newblock Sailfish: Towards improving the latency of dag-based bft.
\newblock In {\em 2025 IEEE Symposium on Security and Privacy (SP)}, pages
  1928--1946. IEEE, 2025.

\bibitem{song2008bosco}
Yee~Jiun Song and Robbert Van~Renesse.
\newblock Bosco: One-step byzantine asynchronous consensus.
\newblock In {\em International Symposium on Distributed Computing}, pages
  438--450. Springer, 2008.

\bibitem{spiegelman2024shoal}
Alexander Spiegelman, Balaji Arun, Rati Gelashvili, and Zekun Li.
\newblock Shoal: Improving dag-bft latency and robustness.
\newblock In {\em International Conference on Financial Cryptography and Data
  Security}, pages 92--109. Springer, 2024.

\bibitem{spiegelman2023quorumstore}
Alexander Spiegelman and Brian Cho.
\newblock Quorum {Store}: How consensus horizontally scales on the {Aptos}
  blockchain.
\newblock Aptos Labs blog, \url{https://medium.com/aptoslabs}, 2023.

\bibitem{spiegelman2022bullshark}
Alexander Spiegelman, Neil Giridharan, Alberto Sonnino, and Lefteris
  Kokoris-Kogias.
\newblock Bullshark: Dag bft protocols made practical.
\newblock In {\em Proceedings of the 2022 ACM SIGSAC Conference on Computer and
  Communications Security}, pages 2705--2718, 2022.

\bibitem{stathakopoulou2019mir}
Chrysoula Stathakopoulou, Tudor David, and Marko Vukoli{\'c}.
\newblock Mir-{BFT}: High-throughput {BFT} for blockchains.
\newblock {\em arXiv preprint arXiv:1906.05552}, 2019.

\bibitem{sui2022marlin}
Xiao Sui, Sisi Duan, and Haibin Zhang.
\newblock Marlin: Two-phase {BFT} with linearity.
\newblock In {\em 2022 52nd Annual IEEE/IFIP International Conference on
  Dependable Systems and Networks (DSN)}. IEEE, 2022.

\bibitem{tonkikh2025raptr}
Andrei Tonkikh, Balaji Arun, Zhuolun Xiang, Zekun Li, and Alexander Spiegelman.
\newblock Raptr: Prefix consensus for robust high-performance bft.
\newblock {\em arXiv preprint arXiv:2504.18649}, 2025.

\bibitem{vandervos2025odontoceti}
Preston Vander~Vos.
\newblock Odontoceti: Ultra-fast {DAG} consensus with two round commitment.
\newblock {\em arXiv preprint arXiv:2510.01216}, 2025.

\bibitem{vonlanthen2024banyan}
Yann Vonlanthen, Jakub Sliwinski, Massimo Albarello, and Roger Wattenhofer.
\newblock Banyan: Fast rotating leader bft.
\newblock In {\em Proceedings of the 25th International Middleware Conference},
  pages 494--507, 2024.

\bibitem{vos2025bluebottle}
Preston~Vander Vos, Alberto Sonnino, Giorgos Tsimos, Philipp Jovanovic, and
  Lefteris Kokoris-Kogias.
\newblock Bluebottle: Fast and robust blockchains through subsystem
  specialization.
\newblock {\em arXiv preprint arXiv:2511.15361}, 2025.

\bibitem{xiang2021strengthened}
Zhuolun Xiang, Dahlia Malkhi, Kartik Nayak, and Ling Ren.
\newblock Strengthened fault tolerance in {Byzantine} fault tolerant
  replication.
\newblock In {\em 2021 IEEE 41st International Conference on Distributed
  Computing Systems (ICDCS)}. IEEE, 2021.

\bibitem{xie2025fides}
Shaokang Xie, Dakai Kang, Hanzheng Lyu, Jianyu Niu, and Mohammad Sadoghi.
\newblock Fides: Scalable censorship-resistant {DAG} consensus via trusted
  components.
\newblock {\em arXiv preprint arXiv:2501.01062}, 2025.

\bibitem{yang2022dispersedledger}
Lei Yang, Seo~Jin Park, Mohammad Alizadeh, Sreeram Kannan, and David Tse.
\newblock {DispersedLedger}: High-throughput {Byzantine} consensus on variable
  bandwidth networks.
\newblock In {\em 19th USENIX Symposium on Networked Systems Design and
  Implementation (NSDI 22)}, 2022.

\bibitem{yin2019hotstuff}
Maofan Yin, Dahlia Malkhi, Michael~K Reiter, Guy~Golan Gueta, and Ittai
  Abraham.
\newblock Hotstuff: Bft consensus with linearity and responsiveness.
\newblock In {\em Proceedings of the 2019 ACM Symposium on Principles of
  Distributed Computing}, pages 347--356, 2019.

\end{thebibliography}
\end{document}